\renewcommand\nomgroup[1]{%
	\item[\bfseries
	\ifstrequal{#1}{A}{Sets}{%
		\ifstrequal{#1}{B}{Measures}{%
			\ifstrequal{#1}{C}{Random Variables}{
				\ifstrequal{#1}{D}{Functions}{
					\ifstrequal{#1}{E}{Others}{						
						\ifstrequal{#1}{F}{Distributions}{
			}}}}}}%
	]}
\DeclareMathAlphabet{\mathantt}{OT1}{antt}{m}{n}
\DeclareMathOperator*{\argmin}{argmin}
\DeclareMathOperator*{\argmax}{argmax}  
\newcommand*{\LargerCdot}{{\raisebox{-0.4ex}{$\,$\scalebox{1.8}{$\cdot$}$\,$}}}
\newcommand*{\eqdist}{\raisebox{-0.3ex}{$\:\overset{\raisebox{-0.5ex}{\scalebox{0.7}{d}}}{\sim}\:$}}
\newcommand*{\eunion}{\ensuremath{\mathsf{U}}} 
\newcommand*{\minkcov}{\ensuremath{k\text{\texttt{-minEU}}}\xspace} 
\newcommand*{\knaps}{\texttt{Knapsack}\xspace}
\newcommand*{\SBP}{SBP\xspace} 
\newcommand*{\rand}[1]{\ensuremath{\boldsymbol{\mathbf{#1}}}\xspace} 
\newcommand*{\densslash}{\ensuremath{\boldsymbol{\mathbf{\slash}}}\xspace} 
\newcommand*{\algor}[1]{\texttt{#1}\xspace} 
\newcommand*{\dens}{\ensuremath{\mathantt{p}}\xspace} 
\newcommand*{\densk}{\raisebox{0mm}{\ensuremath{\mathantt{k}}}\xspace} 
\newcommand*{\moveprob}{\ensuremath{\mathantt{q}}\xspace} 
\newcommand*{\cost}{\ensuremath{\mathantt{f}}\xspace} 
\newcommand*{\funcf}{\ensuremath{\mathantt{f}}\xspace} 
\newcommand*{\funcg}{\ensuremath{\mathantt{g}}\xspace} 
\newcommand*{\funcb}{\ensuremath{\mathantt{b}}\xspace} 
\newcommand*{\euler}[1]{\ensuremath{\exp\left(#1\right)}\xspace} 
\newcommand*{\euleremp}{\ensuremath{\exp}\xspace}
\newcommand*{\tho}{\ensuremath{\theta_{\text{old}}}\xspace} 
\newcommand*{\thn}{\ensuremath{\theta_{\text{new}}}\xspace} 
\newcommand*{\qold}{\ensuremath{q_{\text{old}}}\xspace} 
\newcommand*{\qnew}{\ensuremath{q_{\text{new}}}\xspace} 
\newcommand*{\semic}{\ensuremath{\mathbin{\boldsymbol;}}\xspace} 
\newcommand*{\cloglik}{\ensuremath{\mcal{L}}\xspace} 
\newcommand*{\ratio}{\ensuremath{\mathantt{r}}\xspace} 
\newcommand*{\prob}[1]{\ensuremath{\mathbb{P}\left(#1\right)}\xspace} 
\newcommand*{\probemp}{\ensuremath{\mathbb{P}\xspace}} 
\newcommand*{\expec}[1]{\ensuremath{\mathbb{E}\left(#1\right)}\xspace} 
\newcommand*{\expecemp}{\ensuremath{\mathbb{E}}\xspace} 
\newcommand{\ind}[1]{\raisebox{-0.3mm}{\scalebox{1.2}{\ensuremath{\mathds{1}}}}\ensuremath{\left\{ #1\right\}}} 
\newcommand{\indemp}{\raisebox{-0.3mm}{\scalebox{1.2}{\ensuremath{\mathds{1}}}}} 
\newcommand*{\sub}{\ensuremath{\subseteq}} 
\newcommand*{\bmid}{\ensuremath{\ \Big\vert\ }} 
\newcommand*{\bbmid}{\ensuremath{\ \Bigg\vert\ }} 
\newcommand*{\pows}[1]{\ensuremath{2^{#1}}} 
\newcommand*{\rel}[1]{\ensuremath{\mathantt{r}(#1)}}
\newcommand*{\hyperg}{\ensuremath{\mathcal{H}}} 
\newcommand*{\hypergg}{\ensuremath{\mathcal{G}}} 
\newcommand*{\hg}{\ensuremath{\mathcal{H}}} 
\newcommand*{\hgg}{\ensuremath{\mathcal{G}}} 
\newcommand*{\vertcov}{\ensuremath{\mathcal{D}}}
\newcommand*{\contx}{\ensuremath{D_x}}
\newcommand*{\cardi}[1]{\ensuremath{\##1}\xspace}
\newcommand*{\SampleProb}[2]{\ensuremath{\mathfrak{S}(#1,#2)}}
\newcommand*{\supmarginal}[1]{\ensuremath{\mathfrak{B}(#1)}}
\newcommand*{\primeprob}[1]{\ensuremath{\mathfrak{R}(#1)}}
\newcommand*{\stepR}{\texttt{stepR}\xspace}
\newcommand*{\greedy}{\texttt{Greedy}\xspace}
\newcommand*{\limi}[1]{\ensuremath{\lim\limits_{#1\rightarrow \infty}}} 
\newcommand*{\limii}[2]{\ensuremath{\ \overset{#1\rightarrow #2}{\longrightarrow}\ }}
\newcommand*{\cp}{changepoint\xspace}
\newcommand*{\cps}{changepoints\xspace}
\newcommand*{\Cp}{Changepoint\xspace}
\newcommand*{\states}{\ensuremath{\mathcal{S}}\xspace}
\newcommand*{\ustates}{\ensuremath{\mathcal{U}}\xspace}
\newcommand*{\moves}{\ensuremath{\mathcal{M}}\xspace}
\newcommand*{\spaces}{\ensuremath{\mathcal{D}}\xspace}
\newcommand*{\jprior}{\ensuremath{\mathcal{J}}\xspace}
\newcommand*{\heights}{\ensuremath{\mathcal{H}}\xspace}
\newcommand*{\omeasure}{\ensuremath{\psi}\xspace}
\newcommand*{\mcal}[1]{\ensuremath{\mathcal{#1}}\xspace}
\newcommand*{\bo}[1]{\ensuremath{\mathbf{#1}}\xspace}
\newcommand*{\todo}[1]{}
\newcommand*{\obssigma}{\ensuremath{\mathcal{Y}}\xspace}
\newcommand*{\latsigma}{\ensuremath{\mathcal{X}}\xspace}
\newcommand*{\obs}{\ensuremath{\mathbb{Y}}\xspace}
\newcommand*{\lat}{\ensuremath{\mathbb{X}}\xspace}
\newcommand*{\indep}{\:\rotatebox[origin=c]{90}{\ensuremath{\models}}\:}
\newcommand*{\acc}{\ensuremath{\mathantt{a}}\xspace} 
\newcommand*{\rcc}{\ensuremath{\mathantt{r}}\xspace} 
\newcommand*{\occ}{\ensuremath{\mathantt{o}}\xspace} 
\newcommand*{\ccc}{\ensuremath{\mathantt{c}}\xspace} \newcommand*{\kcc}{\ensuremath{\mathantt{k}}\xspace}
\newcommand*{\scc}{\ensuremath{\mathantt{s}}\xspace} 
\newcommand*{\tccc}{\ensuremath{\mathantt{\tilde c}}\xspace} 
\newcommand*{\bcc}{\ensuremath{\mathantt{b}}\xspace} 
\newcommand*{\qcc}{\ensuremath{\mathantt{q}}\xspace} 
\newcommand*{\tqcc}{\ensuremath{\mathantt{\tilde q}}\xspace} 
\newcommand*{\dcc}{\ensuremath{\mathantt{d}}\xspace} 
\newcommand*{\Zcc}{\ensuremath{\mathantt{Z}}\xspace} 
\newcommand*{\ecc}{\ensuremath{\mathantt{e}}\xspace} 
\newcommand*{\mcc}{\ensuremath{\mathantt{m}}\xspace} 
\newcommand*{\pcc}{\ensuremath{\mathantt{p}}\xspace}
\newcommand*{\proposal}{\ensuremath{\mathantt{K}}\xspace} 
\newcommand*{\trans}{\ensuremath{\mu}\xspace}
\newcommand*{\ellr}{\ensuremath{{\mathantt{r}_\ell}}\xspace} 
\newcommand*{\kr}{\ensuremath{{\mathantt{r}_k}}\xspace} 
\newcommand*{\dimens}[1]{\ensuremath{\text{dim}\left(#1\right)}\xspace}
\newcommand*{\scalethis}[1]{\ensuremath{\scalebox{0.7}{\ensuremath{#1}}}\xspace}
\newcommand*{\distribution}[1]{\ensuremath{\mathcal{P}(#1)}\xspace} 
\newcommand*{\distributionemp}{\ensuremath{\mathcal{P}}\xspace} 
\newcommand*{\gibbsrel}{\ensuremath{\mathfrak{R}}}
\newcommand*{\defeq}{\ensuremath{=}\xspace}
\newtheorem*{problem}{Problem}{\bf}{\rm}
\newtheorem{algo}{Algorithm}{\bf}{\rm}
\newtheorem{myremark}{Remark}{\bf}{\rm}
\newtheorem{definition}{Definition}{\bf}{\rm}
\newtheorem{theorem}{Theorem}{\bf}{\rm}
\newtheorem{lemma}{Lemma}{\bf}{\rm}
\newtheorem*{heuristic}{Greedy}{\bf}{\rm}
\newtheorem*{ILP_problem}{ILP}{\bf}{\rm}
\newtheorem{corollary}{Corollary}{\bf}{\rm}
\begin{document}

\def\spacingset#1{\renewcommand{\baselinestretch}%
{#1}\small\normalsize} \spacingset{1}

	$\quad$\\
	{\centering\Huge\bfseries
  		Bayesian Changepoint Analysis\par
  	}
    \vspace{4cm}
    {\centering\Large\bfseries
	Tobias Siems\\\bigskip\small	
	Department of Mathematics and Computer Science\\\medskip	
	University of Greifswald
	
	\vfill\small\mdseries
	A thesis submitted for the degree of\\\medskip
	\emph{Doctor of Natural Sciences}\\\medskip
	October 25, 2019\par
	}
  
	\vspace{4cm}

\thispagestyle{empty}
\newpage\null\thispagestyle{empty}\newpage
\pagenumbering{Roman} 
\spacingset{1.45} 
\sloppy
\newpage
\tableofcontents 
\newpage\null\thispagestyle{empty}\newpage
\pagenumbering{arabic} 
\section{The \cp universe (a poetic view)}

\Cp problems unite three worlds: 
the world of counting, the world of placement and the world of change.
These worlds constitute universes, which receive their rough shape from statistical models applied to data and are further refined through parameters.

Even though the defining elements of such universes are grasped so quickly, we can never fully understand them as a composite whole.
Like a physicist, we need to rely on lenses revealing tiny pictures of the incredible truth, such as the astonishing glimpse into the counting world unveiled by Figure \ref{fig:cp_art}.

\begin{figure}[ht]
	\begin{center}
	\includegraphics[width=0.9\linewidth]{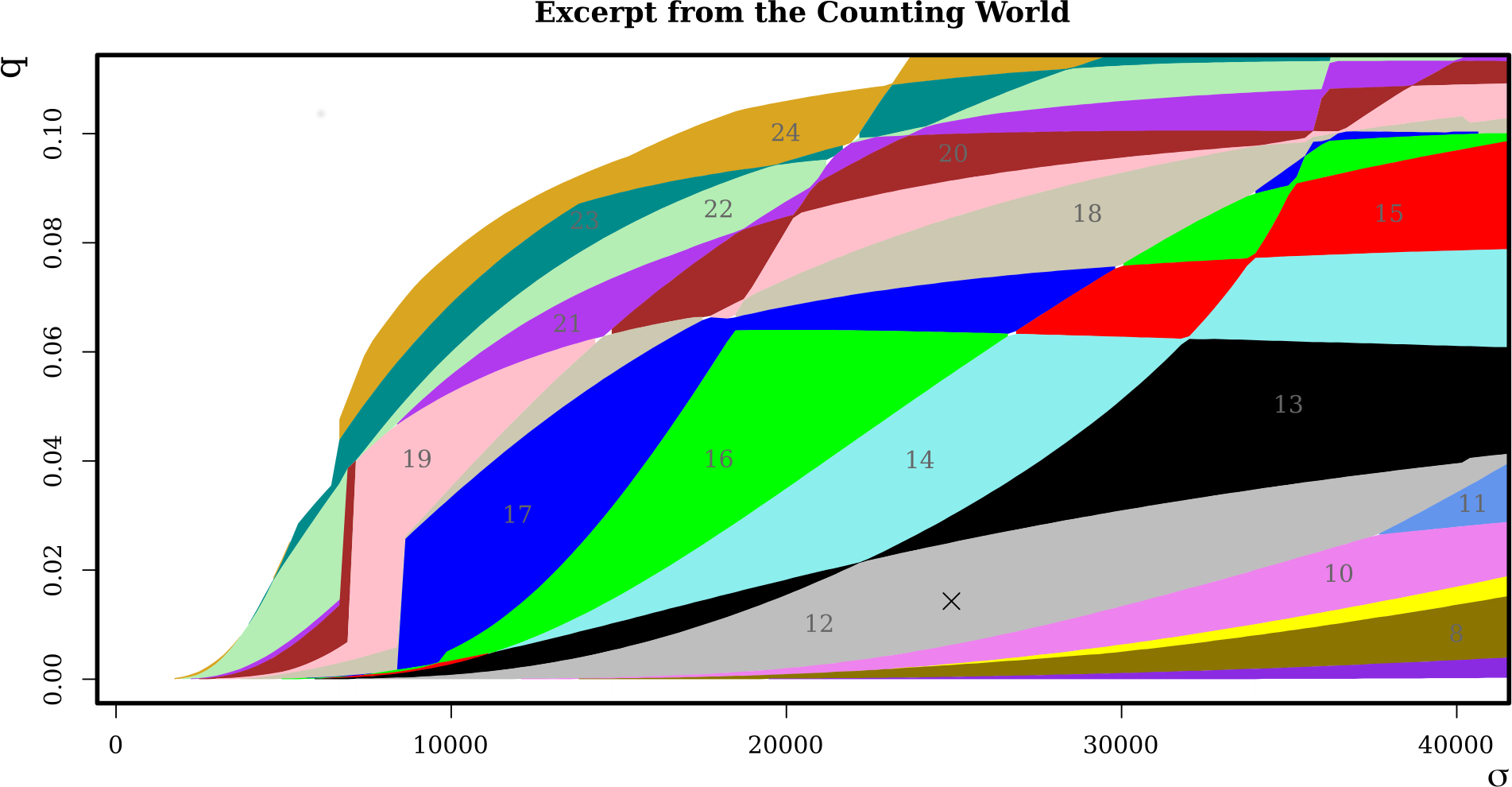}
	\caption[Excerpt from the counting world]{Data-driven partitioning of the parameter space of a \cp model. Each of the colored regions stands for a certain \cp count, obtained by computing the most likely \cp configuration.
		$q$ is the success probability of the length distribution of the segments and $\sigma$ the scale of the data within those segments (see also Section \ref{chap:well-log-para}).
	}
	\label{fig:cp_art}
		\end{center}
\end{figure}

This thesis gives you the unique opportunity to gain an understanding of the fantastic scale of Bayesian \cp universes.
To this end, we will look through a range of enlightening lenses, forged in the mighty fires of mathematics and operated on a highly sophisticated calculator, to discover places no one has known before.

\newpage

\section{Introduction}
\label{chap:intro}

Changepoint analysis deals with time series where certain characteristics undergo occasional changes.
Having observed such a time series, the latent \cp process can be investigated in various ways.
The following four examples provide an insight into the tasks that may arise in practice.

The examination of ion channels is a popular example for \cp analysis.
Ion channels are proteins and part of the cell membrane.
Their purpose is to control the flow of ions, like potassium, into and out of the cell.
New drugs must pass extensive tests for ion channel activity \citep{camerino2007ion}.
Depending on the type of the channel, they can be activated and deactivated through chemical reactions or electrical currents.

\begin{figure}[ht]
	\begin{center}
			\includegraphics[width=0.7\textwidth]{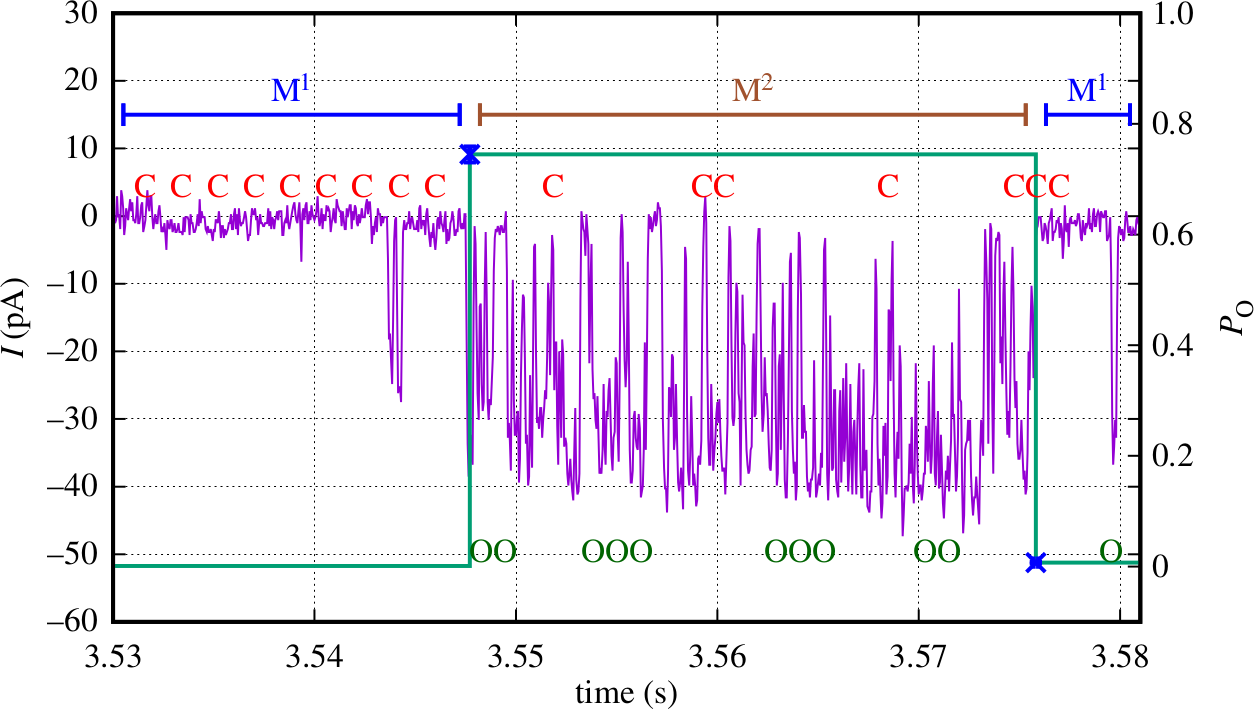}
	\end{center}
	\caption[Ion channel data]{Ion channel data \citep{siekmann2016modelling}.
	}
	\label{fig:ion_channel}
\end{figure}

A Nobel prize wining technique to gather data from single ion channels is the patch clamp technique  \citep{sakmann1984patch}.
It allows to record the electrical current that is generated by the flow of ions.
Figure \ref{fig:ion_channel} shows a data set gathered from an inositol trisphosphate receptor, which controls the flow of calcium ions \citep{siekmann2016modelling}.
The purple lines represent the measured currents, the C's and O's mark the times where the channel was closed or opened, respectively.
This is called gating and can be understood as a \cp process.
However, there is a second superordinate \cp process indicated by $\text{M}^1$ and $\text{M}^2$.
It models the modes of the channel's activity.

These modes reflect the stochastic nature of how the cell controls the flow of ions.
Thus, it is of high interest to find and understand the possible modes an ion channel takes.
Each mode can be represented through a time continuous Markov chain that describes its gating behavior \citep{siekmann2016modelling}.
The transitions between the modes may again be modeled as a Markov chain leading to a hierarchical model.
A challenging problem hereby is the amount of quantities which are to be estimated in order to determine the model and the associated non-identifiability.
Inference in these models may be drawn by means of approximate sampling through Markov chain Monte Carlo (MCMC) \citep{siekmann2011mcmc, siekmann2012mcmc, siekmann2014statistical, siekmann2016modelling}.

\begin{figure}[ht]
	\includegraphics[width=\linewidth]{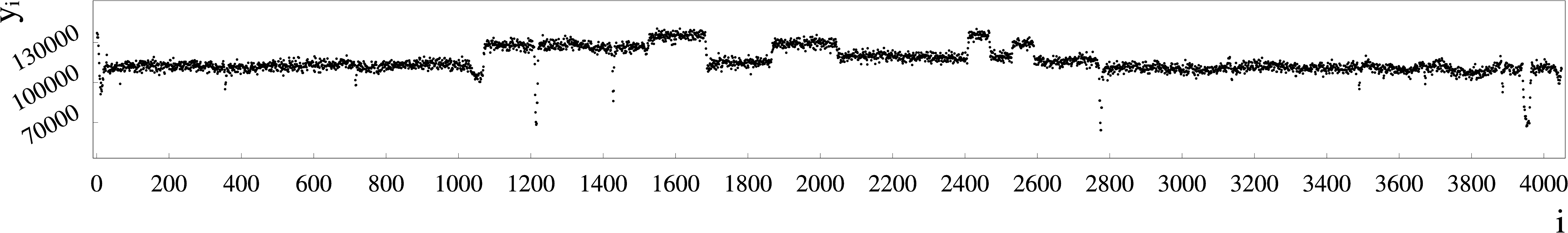}
	\caption[Well-log data]{Well-Log data \citep{exact_fearnhead}.
	}
	\label{fig:well-log}
\end{figure}

A different example for \cp analysis, which arises in geology, is the well-log drilling data in Figure \ref{fig:well-log}. 
It shows the nuclear-magnetic response of underground rocks \citep{exact_fearnhead}.

In order to build a \cp model here, we may regard coherent parts of the data as noisy versions of one and the same state and allow these states to change occasionally.
This can be understood as a change in location problem whereby the states are interpreted as the location parameter of the datapoints, e.g. median or mean.
Sequences of related datapoints are then considered to stem from the same material.

Interestingly, there are challenging perturbations in form of heavy outliers and smaller abrupt, but also gradual location changes.
These irregularities are prone to be confused with the desired location changes, which gives rise to robust \cp inference \citep{exact_fearnhead, fearnhead2019changepoint, weinmannl1}. 
The task here is to maintain a good balance between sensitivity and specificity.

This exciting data set serves as the main demonstration example throughout this thesis.

\begin{figure}[ht]
	\includegraphics[width=\linewidth]{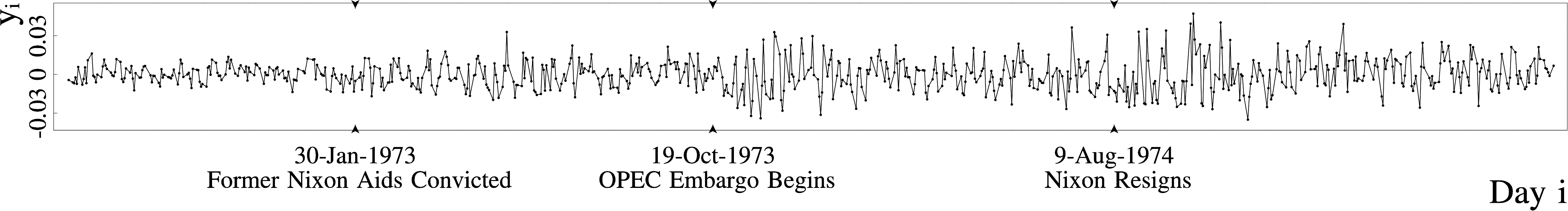}
	\caption[Dow Jones data]{Daily Dow Jones returns between 1972 and 1975 \citep{ocpd}.
	}
	\label{fig:dow_jones_intro}
\end{figure}

Stock markets pose another important \cp application. 
Consider for example Figure \ref{fig:dow_jones_intro}. 
It depicts daily Dow Jones returns observed between 1972 and 1975 \citep{ocpd} and highlights three events on January 1973, October 1973 and August 1974.

This can be understood as a change in variation problem, whereby the data is split into regions of similar spread.
\Cp analysis may be used for a real time detection of high risk or to understand the effect of political decisions on the economy.

\begin{figure}[ht]
	\begin{center}
	\includegraphics[width=0.6\linewidth]{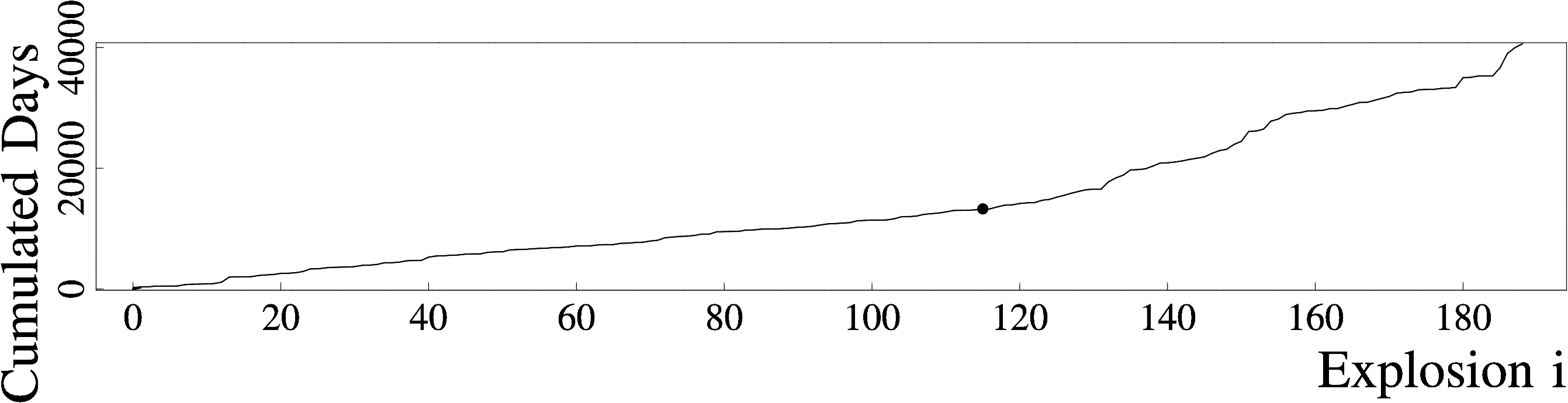}		
	\end{center}
	\caption[Coal mine data]{Coal mine explosions killing ten or more men between March 15, 1851 and March 22, 1962 \citep{jarrett1979note, ocpd}.
	The abscissa shows the explosion count and the ordinate the number of days until the explosions took place.
		The black dot marks the Coal Mines Regulations Act in 1887.
	}
	\label{fig:coal_intro}
\end{figure}

Our last example concerns coal mine explosions between March 15, 1851 and March 22, 1962 that killed ten or more people.
Figure \ref{fig:coal_intro} counts up the days from one explosion to the next.
The black dot points to the Coal Mines Regulations Act in 1887, which also roughly marks a change in the period of time it took from one explosion to the next.

In a \cp approach, the days between consecutive explosions could be modeled as geometrically distributed and changes in success rate would be examined.
The task here is to elaborate the effect of the regulations act by detecting or rejecting one single change.
This kind of single \cp approach can be applied to a large variety of applications where we expose if something has generally changed.

For the time being, this should be enough to motivate \cp analysis.
However, for further elaborations of different \cp approaches see, for example, \cite{eckley2011analysis}.

\subsection{Time series, segmentations and \cps}

Time series are ordered sequences of data.
This order can, for example, account for successive loci on a DNA sequence or indeed the times the observations were made and allows us to speak of segmentations.

Let $\obs$ be a set and $y=(y_1,\ldots,y_n)\in\obs^n$ be a time series with indexes $1,\ldots,n$.
Any sequence, $(y_j,\ldots,y_i)$ with $1\leq j\leq i\leq n$, of successive datapoints is considered as a \emph{segment} with \emph{segment length} $i-j+1$.
A \emph{segmentation} of $y$ is a collection of non-overlapping segments that covers $y$ as a whole.

A segmentation is uniquely determined by a number $k\in\{0,\ldots,n-1\}$ and a sequence of successive indexes $1<\tau_1<\ldots<\tau_k\leq n$ as follows
\begin{align*}
(y_1,\ldots,y_{\tau_1-1}),(y_{\tau_1},\ldots,y_{\tau_2-1}),\ldots,(y_{\tau_{k-1}},\ldots,y_{\tau_k-1}),(y_{\tau_{k}},\ldots,y_n)
\end{align*}
whereby $k=0$ indicates that the whole time series corresponds to one segment.
In the context of \cps, $k$ represents the number of \cps and the $\tau_1,\ldots,\tau_k$ are the \emph{\cp locations} or simply \emph{\cps}.
For now, \cps at timepoint one are excluded.
However, this will be relaxed later on.

As mentioned earlier, the segmentation features changes in certain characteristics of the time series, e.g. location or scale.
These characteristics are expressed through parameters, the so-called \emph{segment heights} that take values within a set, say $\lat$.
Consequently, we assign each segment of a segmentation a value within $\lat$.
In practice, a change in segment height typically triggers a \cp.

In non-trivial cases, the segment heights are latent and algorithms for finding \cps in a time series trace possible timepoints where they change their value.
In doing so, they strive for persistency but also selectivity or as the statistician would put it, they govern a trade-off between specificity and sensitivity.
While a too high sensitivity yields too many \cps overfitting the segmentation to the data, a too high specificity disregards valuable \cps and impairs the \cp detection.

The existing literature captures approaches to detect at most one \cp as well as multiple \cps.
Even though single \cp methods can be applied recursively to detect multiple \cps \citep{fryzlewicz2014wild}, in this thesis, we focus solely on techniques that surmise multiple \cps in the first place.

\subsection{\Cp estimation via minimizing a penalized cost}
\label{chap:pencost}

A natural approach for finding multiple \cps in a time series is to assign each segmentation a rating and then select one of the best ones.
This rating should express the goodness of fit, but also regularize the number of \cps to avoid overfitting.
However, instead of speaking about goodness of fit and regularization, it is common to use opposite terms like \emph{cost} and \emph{penalization}.
Thus, \cps are inferred by \emph{minimizing a penalized cost}.

For the sake of computational convenience, the costs and penalizations are often defined individually for the segments instead of the whole segmentation at once.
This can be achieved through a \emph{segmental cost} $\cost:\bigcup_{i=1}^n\obs^i\rightarrow \mathbb{R}_{\geq 0}$ and a penalization constant $\gamma\geq 0$.
For the time being, we may ignore the segment heights.

Minimizing the penalized cost then yields the optimization problem which finds one element of
\begin{align}
\label{eq:pencost}
\argmin\limits_{1<\tau_1<\ldots<\tau_k\leq n\text{ and } k\in\mathbb{N}}\left\{\cost\big(y_{1:\tau_1-1}\big)+\Bigg(\sum_{i=1}^{k-1}\cost\big(y_{\tau_i:\tau_{i+1}-1}\big)\Bigg)+\cost\big(y_{\tau_k:n}\big)+\gamma\cdot k\right\}
\end{align}
This can be solved through a Viterbi algorithm \citep{viterbi1967error}, also known as Bellman recursion \citep{bellman1966dynamic} or dynamic programming.
In oversimplified terms, the algorithm performs a forward search from 1 to $n$, whereby in each iteration $i$ it performs a backward search in order to evaluate a best candidate for the most recent \cp prior to $i$.
Subject to the condition that $\cost$ can be augmented by new observations in constant time, i.e. $\cost(v_{1:i})$ can be derived from $\cost(v_{1:i-1})$ in constant time for all $v_{1:i}\in\obs^i$, this yields a space and time complexity of $\mcal{O}(n^2)$.
A long series of papers explains this algorithm in detail (see for example \cite{jackson_optimal, friedrich} and \cite{pelt}).

Furthermore, an approach developed by \cite{friedrich} computes solutions to (\ref{eq:pencost}) for all possible $\gamma$ at once.
It does so, by exploiting the linearity of the penalized cost w.r.t. $\gamma$.
Almost identical ideas are provided by \cite{haynes2017computationally}.

A discovery that earned considerable popularity is described in \cite{pelt}.
It allows under some circumstances to reduce the expected space and time complexity of solving (\ref{eq:pencost}) to $\mcal{O}(n)$.
We briefly discuss the idea behind it.

Assume that $\cost(v_{1:j})+\cost(v_{j+1:i})\leq \cost(v_{1:i})$ for any sequence $v_{1:i}\in\obs^i$ and $1 \leq j< i$.
This makes sense, because splitting a segment usually results in a tighter fit and thus a lower cost.
Let further $\mcal{F}_i$ be a solution to (\ref{eq:pencost}) w.r.t. $y_{1:i}$, i.e.
\begin{align*}
\mcal{F}_i\in\quad \argmin\limits_{1<\tau_1<\ldots<\tau_k\leq i\text{ and } k\in\mathbb{N}}\Bigg\{\cost\big(y_{1:\tau_1-1}\big)+\Bigg(\sum_{\ell=1}^{k-1}\cost\big(y_{\tau_\ell:\tau_{\ell+1}-1}\big)\Bigg)+\cost\big(y_{\tau_k:i}\big)+\gamma\cdot k\Bigg\}
\end{align*}

If for $1\leq j < i\leq n$ we get $\mcal{F}_j+\cost(y_{j+1:i})\geq \mcal{F}_i$, we can take an arbitrary $\ell> i$ and get 
\begin{align*}
\mcal{F}_j+\cost(y_{j+1:\ell})+\gamma\geq \mcal{F}_j+\cost(y_{j+1:i})+\cost(y_{i+1:\ell})+\gamma\geq \mcal{F}_i+\cost(y_{i+1:\ell})+\gamma
\end{align*}
This states that placing the most recent \cp prior to $\ell$ at $i$, is at least as good as placing it at $j$.
Thus, we may skip $j$ in any backward search past $i$.
An implementation of the corresponding algorithm called \algor{PELT}, can be found in the R-Package \textit{\cp} \citep{pelt}.

The starting point for setting up the segmental cost is usually another function 
$\funcg:\bigcup_i\obs^i\times \lat\rightarrow\mathbb{R}_{\geq 0}$,
which takes the data within a segment and the segment height.
$\funcg$ can, for example, be derived as the negative loglikelihood of a statistical model.
However, M-estimators are also viable \citep{friedrich, fearnhead2019changepoint}.

We may put $\funcg$ directly into (\ref{eq:pencost}) and additionally minimize over the individual segment heights or equivalently, use $\cost(\LargerCdot)=\min_{x\in\lat}\funcg(\LargerCdot,x)$.
The solution can then be interpreted as a piecewise constant function with values corresponding to the optimal segment heights.
Consequently, we can speak of a regression method here.

In contrast, the Bayesian statistician may integrate over the segment heights in order to get rid of them completely.
This involves a more complex approach since we need to put a prior on the segment heights and perform an integration for each considered segment.

\cite{li2012multiple} make use of priors and run a genetic algorithm, which detects multiple \cps by minimizing a functional similar to the penalized cost.
Another non-Bayesian method to infer multiple \cps through a dynamic programming algorithm can be found in \cite{frick_munk_sieling}.

\subsection{Bayesian \cp models}
\label{chap:baycpmodel}

By applying priors, we pave the way for an interpretation of segmentations as random.
This allows for the treatment of the data, \cps and segment heights within the same methodological realm of stochastics.
As a result, a more complex modeling approach is required here.

To begin with, the \cps are represented by continuous or discrete random times, $(\rand T_i)_{i\in\mathbb{N}_+}$ over $\mathbb{R}_{\geq 0}$.
The \cp process starts at a certain timepoint, say $t_0\in\mathbb{R}$, whereby the first \cp is $t_0+\rand T_1$, the second $t_0+\rand T_1+\rand T_2$ and so forth.
Consequently, $\rand T_i$ represents the length of the $i$-th segment.

Furthermore, the segment heights are modeled accordingly through a family of random variables $(\rand \theta_i)_{i\in \mathbb{N}_+}$ over a measurable space $(\lat,\latsigma)$, whereby $\rand \theta_i$ represents the $i$-th segment height.
Finally, the observations constitute a family $(\rand Y_i)_{i=1,\ldots,n}$ over a measurable space $(\obs,\obssigma)$.
Each observation $\rand Y_i$ was measured at a timepoint $t_i$, whereby $t_0\leq t_1<\ldots<t_n$.

An in some ways minimal requirement for a statistical model to be expedient, is the feasibility of its likelihood.
This enables the use of MCMC techniques like Metropolis-Hastings and thus, pioneers an abundance of inference strategies.
However, before we can set up the likelihood, we need to consider an issue that typically results from the data collection.

In practice, $t_0$ often remains unknown.
It is thus reasonable to build the \cp model solely within $[t_1,t_n]$.
Consequently, we interpret the first segment as a clipped random length with a yet to be determined distribution. 
Given the first \cp within $[t_1,t_n]$, the second segment length will follow the same distribution as the $\rand T_i$'s and so forth.
As soon as a segment exceeds $t_n$ the process ends irrespective of its concrete length.

We may account for the lack of knowledge about $t_0$ by considering the limit $t_0\rightarrow -\infty$.
To this end, assume that $(\rand T_i)_{i\in\mathbb{N}_+}$ is a renewal process \citep{doob1948renewal} and independent of all other random variables within the \cp model.
Renewal theory \citep{lawrance1973dependency} teaches us that the distribution of the random length from $t_1$ to the first \cp exhibits the density
$\prob{t_1+\rand T_1> t}\slash\expec{\rand T_1}$ w.r.t. the Lebesgue measure.
These partial lengths are referred to as residual times or forward recurrence times.

The case where $-\infty<t_0<t_1$ turns out to be more challenging since the distribution of the residual times now depends on $t_0$, and can be very difficult to compute especially in the continuous case.

There are two important exceptions here.
The residual times of an exponential renewal process are independent of $t_0$ and correspond to the same exponential distribution again.
Similarly, for a geometric renewal process that exhibits \cps only on a known discrete grid, we can again take the same geometric distribution for the residual times.

A similar issue arises with the segment heights.
A not too restrictive assumption is that $(\rand \theta_i)_{i\in \mathbb{N}_+}$ is described by a Markov chain.
If $t_0<t_1$, related theory may be applicable to derive the distribution of the segment height of the segment that contains $t_1$.

Having considered this, we can now  set up the joint likelihood of the data, \cp locations and segment heights in order to use MCMC methods like Metropolis-Hastings to perform approximate inference.

\subsection{The purpose of this thesis and its structure}

In this thesis, we elaborate upon Bayesian \cp analysis, whereby our focus is on three big topics: approximate sampling via MCMC, exact inference and uncertainty quantification.
Besides, modeling matters are discussed in an ongoing fashion.
Our findings are underpinned through several \cp examples with a focus on the aforementioned well-log drilling data.

In the following, you find four strongly interrelated, but also widely self-contained sections.
While Section \ref{chap:mcmcfinite} provides an understanding of the mathematical foundations of MCMC sampling, Section \ref{chap:transdim} considers the Metropolis-Hastings algorithm particularly for \cp related scenarios.
Both sections do not intend to develop completely new methods.
Instead, Section \ref{chap:mcmcfinite} explains important convergence theorems and gives very basic and simple proofs.
Furthermore, our elaboration of the Metropolis-Hastings algorithm aims at clarifying the apparently difficult topic of sampling within mixtures of spaces.

Section \ref{chap:inference} talks extensively about exact inference in Bayesian \cp models with independence induced by \cps.
It develops the bulk of existing algorithms, like exact sampling, and builds novel, very efficient algorithms for pointwise inference and an EM algorithm.
At the same time, new scopes for applying approximate \cp samples are opened up.

Finally, Section \ref{chap:simcredreg} develops a notion of credible regions for multiple \cp locations.
They are built from either exact or approximate \cp samples and illustrated by a new type of plot.
This plot greatly facilitates uncertainty quantification, model selection and \cp analysis.
The implementation of our credible regions gives rise to a novel NP-complete optimization problem, which is approached either exactly by solving an ILP or through a fast and accurate greedy heuristic.

The Sections \ref{chap:transdim} and \ref{chap:simcredreg} begin with an outline to summarize their purposes.
Besides, each of the following four sections provides an introductory part, which puts its topic into context with existing research and further clarifies the actual subject.
Sections \ref{chap:transdim}, \ref{chap:inference} and \ref{chap:simcredreg} conclude with a discussion.
Section \ref{chap:conclusion} concludes this whole thesis with final remarks.

Section \ref{chap:contributions} lists related papers that I co-authored and considerable contributions of other researchers to this thesis.

References to the supplementary material can be found in Section \ref{chap:supplement}.
Section \ref{chap:appendix} contains the appendix.
At the very end, we provide lists of symbols, figures, tables, pseudocodes and references.

\newpage\null\newpage
\section{Markov chain Monte Carlo on finite state spaces}
\label{chap:mcmcfinite}

In this section, we elaborate the idea behind Markov chain Monte Carlo (MCMC) methods in a mathematically coherent, yet simple and understandable way.
Therefore, we prove a pivotal convergence theorem for finite Markov chains and a minimal version of the Perron-Frobenius theorem.
Only very basic knowledge about matrices, convergence of real sequences and probability theory is required here.
A convenient summary of analogues results for general state spaces can be found in \cite{tierney1994markov}.

MCMC techniques aim at drawing samples from prespecified distributions. 
They do so in an indirect, approximate fashion through Markov chains.
This is important since the distributions deployed in practice are often too complex to be dealt with directly or even unavailable in closed form.

There exists a tremendous number of scientific articles and books about MCMC.
See, for example, \cite{bishop2014} for a vivid and more comprehensive introduction without mathematical proofs.

Let $\pi$ be a probability distribution over a finite state space $\states$ and $\pi_s\in[0,1]$ the probability of state $s\in\states$ under $\pi$.
By virtue of the strong law of large numbers, independent samples from $\pi$ (so-called \emph{i.i.d. samples}) can be used universally to approximate expectations w.r.t. $\pi$.
Thus, for a set of such samples, say $x_1,\ldots,x_m\in\states$, and an arbitrary function $\funcf:\states\rightarrow \mathbb{R}^\ell$, we get that $\frac{1}{m}\sum_{i=1}^m \funcf(x_i)\approx\sum_{s\in\states}\pi_s\funcf(s)$.
This simple recipe poses one of the most powerful tools in statistics.

An example for $\funcf$ is the indicator function $\ind{s\in A}$ for events $A\sub\states$.
It is one if the condition in the brackets is true and zero otherwise.
Its expectation yields the probability of $A$.


A \emph{Markov chain} over $\states$ is defined through an (arbitrary) initial state $s_0\in \states$ and a \emph{transition kernel} $\trans$.
$\trans$ is a non-negative function over $\states\times\states$ such that $\sum_{s\in \states}\trans_{z s}=1$ for all $z\in\states$.
$\trans_{z}$ can be interpreted as a conditional distribution.

The Markov chain starts in state $s_0$ and evolves according to $\trans$ in an iterative fashion:
the distribution of the first link in the chain is $\trans_{s_0 }$ and given the first link, say $s_1$, the distribution of the second link is $\trans_{s_1 }$ and so forth. 
This results in a potentially infinite sequence of random variables $\rand X=(\rand X_0, \rand X_1,\rand X_2,\ldots)$, whereby $\rand X_n$ represents the $n$-th link in the chain.
Consequently, we get that $\prob{\rand X_n=s\mid \rand X_{n-1}=z}=\trans_{zs}$ for $n>0$ and $\prob{\rand X_0=s_0}=1$.

Later on, we will deal with the (unconditional) distributions of the $n$-th links.
To this end, w.l.o.g. we assume that $\states=\{1,\ldots,k\}$.
Therewith, we describe $\trans$ as a matrix in $[0,1]^{k\times k}$ and $\pi$ as a column vector in $[0,1]^k$.
Any such quadratic matrix with non-negative entries and rows that sum to one is called \emph{stochastic matrix}.
Therewith, $\prob{\rand X_n=s}=(\trans^n)_{s_0s}$, i.e. the $n$-fold matrix product of $\trans$ evaluated at $s_0$ and $s$.
A further generalization is to set the law of $\rand X_0$ to an arbitrary distribution $\pi$. 
This yields $\prob{\rand X_n=s}=(\pi^t\trans^n)_{s}$, whereby the $t$ indicates transposition.

We say that a distribution $\pi$ is an \emph{invariant distribution} of $\trans$ if $\pi^t\trans=\pi^t$. 
Thus, transitioning according to $\trans$ doesn't affect $\pi$.
Once a link in a Markov chain with transition kernel $\trans$ follows the law $\pi$, all subsequent links will do likewise.
In this case, the chain is considered to be in \emph{equilibrium}.
Equilibrium can be enforced by setting the distribution of $\rand X_0$ to $\pi$, but it may also be reached (approximately) in the long run through convergence.

The foundation of MCMC sampling is that under some circumstances the distributions of the $n$-th links of a Markov chain converge towards an invariant distribution regardless of the initial state. 
Thus, by simulating such a chain until equilibrium is reached sufficiently, we may obtain an approximate sample of this invariant distribution.
On these grounds, MCMC methods provide schemes to build Markov chains with a predefined unique invariant distribution.

%
%

\subsection{Convergence towards and existence of invariant distributions}

This section first considers the convergence of the distributions of the $n$-th links of certain Markov chains.
This convergence forms the very basis of MCMC sampling.
Afterwards, we provide a version of the Perron-Frobenius theorem, which gives further insights into the existence of invariant distributions.
The following two properties of $\trans$ play a fundamental role.

$\trans$ is called \emph{irreducible} if for every $z,s\in\states$ there exists an $n>0$ such that $(\trans^n)_{zs}>0$.
Thus, regardless of the state the Markov chain starts in, every state can eventually be reached with positive probability.

$\trans$ is called \emph{aperiodic} if there exists an $N$ such that $(\trans^n)_{zs}>0$ for all $n>N$ and all $z,s\in\states$.
Thus, regardless of the state the Markov chain starts in, in the long run, it can always reach any state in one step with positive probability.

The name irreducible suggests that the Markov chain does not divide $\states$ into separate, mutually inaccessible classes.
In turn, aperiodicity excludes the case where parts of $\states$ are reached in a periodic fashion, for example, only through either an even or odd number of transitions.
Aperiodicity implies irreducibility, but not vice versa.

It is obvious that a periodic behavior may impede convergence of the distributions of the $n$-th links.
For a general convergence theorem, aperiodicity is thus a necessary condition.
We will now see that it is also sufficient.
The following theorem is a simplified version of a convergence theorem for Markov chains over countable state spaces provided inter alia by \cite{konig2005stochastische}.

\begin{theorem}
	\label{lem:invdisc}
	For an aperiodic stochastic matrix $\trans$ with invariant distribution $\pi$, we get that $\limi{n}(\trans^n)_{zs}= \pi_s$ for all $z,s\in\states$.
\end{theorem}
\begin{proof}
	Assume that $\rand X=(\rand X_0, \rand X_1, \ldots)$ is a Markov chain with transition kernel $\trans$ that starts with an arbitrary but fixed state $s_0$.
	Furthermore, consider the Markov chain $\rand{Z}=(\rand Z_0, \rand Z_1,\ldots)$ with transition kernel $\trans$ and initial distribution $\pi$, i.e. $\prob{\rand Z_0=s}=\pi_s$ for all $s\in\states$.
	$\rand X$ and $\rand Z$ are supposed to  be independent of each other.

	Let $\rand T$ be the random variable that represents the first $n$ where $\rand X$ and $\rand Z$ equal, i.e. $\rand T=\min\{n\mid \rand X_n=\rand Z_n\}$.
	We want to show that $\rand T$ is finite with probability one.
	Due to the aperiodicity of $\trans$, we may choose an $N$ such that $\trans^N$ has solely positive entries. 
	Let $\epsilon>0$ be the smallest entry of $\trans^N$ and consider that
	\begin{align*}
	&\prob{nN< \rand T}=\prob{\rand X_{i}\not=\rand Z_{i}\text{ for all } i\leq nN}\leq	\prob{\rand X_{i\cdot N}\not=\rand Z_{i\cdot N}\text{ for all } i\leq n}\\
	&=\sum_{z_0\not=s_0}\pi_{z_0}\sum_{s_1\in\states}\trans^N_{s_0s_1}\sum_{z_1\not=s_1}\trans^N_{z_0z_1}\ldots
	\sum_{s_n\in\states}\trans^N_{s_{n-1}s_n}\underbrace{\sum_{z_n\not=s_n}\trans^N_{z_{n-1}z_n}}_{\leq 1-\epsilon}\leq (1-\epsilon)^n\overset{n\rightarrow \infty}{\longrightarrow}\ 0
	\end{align*}

	Define further $\rand Y_n=\ind{n\leq \rand T}\rand X_n+\ind{n>\rand T}\rand Z_n$.
	The Markov chain $\rand Y=(\rand Y_0, \rand Y_1, \ldots)$ starts with copying $\rand X$ and switches to $\rand Z$ as soon as both equal the first time.
	We are interested in the distribution of $\rand Y$.
	To this end, consider an arbitrary path $s_{1:n}\in\states^{n}$, define $\pcc_{s_{j:i}}=\prod_{\ell=j}^{i} \trans_{s_{\ell-1} s_\ell}$ for $j,i=1,\ldots,n$ and observe that
	\begin{align*}
	&\prob{\rand Y_{0:n}=s_{0:n}}=\prob{\rand Y_{0:n}=s_{0:n}, n< \rand T}+\sum_{\ell=0}^n\prob{\rand Y_{0:n}=s_{0:n}, \rand T=\ell}\\
	&=\pcc_{s_{1:n}}\prob{\rand Z_{0:n}\neq s_{0:n}}
	+\sum_{\ell=0}^n\pcc_{s_{1:\ell}}\prob{\rand Z_\ell=s_\ell, \rand Z_{0:\ell-1}\neq s_{0:\ell-1}}\pcc_{s_{\ell+1:n}}=\pcc_{s_{1:n}}
	\end{align*}
	whereby we used that $\prob{\rand Z_{0:n}\neq s_{0:n}}+\sum_{\ell=0}^n\prob{\rand Z_\ell=s_\ell, \rand Z_{0:\ell-1}\neq s_{0:\ell-1}}=1$.
	This shows that $\rand Y$ is a Markov chain with transition kernel $\trans$ and initial state $s_0$.

	Altogether, we may state that 
	\begin{align*}
	(\trans^n)_{s_0s}&=\prob{\rand Y_n=s}=\prob{\rand Y_n=s, n\leq \rand  T}+\prob{\rand Y_n=s, n>\rand T }\\
	\pi_s&=\prob{\rand Z_n=s}=\prob{\rand Z_n=s, n\leq \rand T}+\prob{\rand Y_n=s, n>\rand T}
	\end{align*} 
	and
	$|(\trans^n)_{s_0s}-\pi_s|=|\prob{\rand Y_n=s, n\leq \rand T }- \prob{\rand Z_n=s, n\leq \rand T}|\limii{n}{\infty} 0$ for all $s_0\in\states$.
\end{proof}


Given a distribution $\pi$, MCMC methods seek an aperiodic transition kernel $\trans$ with invariant distribution $\pi$.
Thus, it is possible to sample approximately from $\pi$ by simulating a Markov chain with transition kernel $\trans$ until equilibrium is reached to a sufficient extend.
The last link within this chain is then taken as a single approximate sample from $\pi$.
In particular, this procedure is independent of the state the Markov chain has started in.
The pace by which equilibrium is approached is referred to as the \emph{mixing time}.

Now, we consider a version of the well-known Perron-Frobenius theorem \citep{frobenius1912matrizen}.
It is usually stated in a more general context and corresponding proofs can be fairly complicated.
In turn, we provide our own convenient proof based on simple arithmetics and matrix algebra.

\begin{theorem}[Perron-Frobenius Theorem]
\label{lem:irredmat}
An irreducible transition kernel $\trans$ has a unique invariant distribution $\pi$.
\end{theorem}
\begin{proof}
Since any stochastic matrix has a right eigenvector with corresponding eigenvalue 1, it also has such a left eigenvector.
In particular, any such left 1-eigenvector exhibits non-zero elements.
Let $x$ be a left 1-eigenvector of \trans.
If $x$ has only non-negative or non-positive entries, we can immediately derive an invariant distribution $\pi$ of $\trans$ through normalizing $x$, i.e. $\pi=x\big\slash \sum_{s\in\states} x_s$.

Assume now that $x$ exhibits positive entries for $s\in \bar N$ and negative entries for $s\in N$.
The following applies
\begin{align}
\nonumber
&\sum_{z\in N}x_z \trans_{zs}+\sum_{z\in \bar N}x_z \trans_{zs}=x_s
\ \ \Rightarrow\ \ 
\sum_{z\in N}x_z \sum_{s\in\bar N}\trans_{zs}+\sum_{z\in \bar N}x_z \sum_{s\in \bar N}\trans_{zs}=\sum_{s\in \bar N}x_s\\
\label{eq:perr}
\Leftrightarrow\ \ &\underbrace{\sum_{z\in N}x_z \sum_{s\in\bar N}\trans_{zs}}_{\leq 0}
=\sum_{s\in \bar N}x_s-\sum_{z\in \bar N}x_z\sum_{s\in \bar N}\trans_{zs}
=\sum_{z\in \bar N}x_z\left(1-\sum_{s\in \bar N}\trans_{zs}\right)
=\underbrace{\sum_{z\in \bar N}x_z \sum_{s\not \in \bar N}\trans_{zs}}_{\geq 0}
\end{align}
Hence, the l.h.s and r.h.s. of (\ref{eq:perr}) have to be zero, which implies that $\trans_{zs}=0$ for all $z\in \bar N$ and $s\not\in \bar N$.
Consequently, $(\trans^n)_{zs}=0$ for all $z\in \bar N$, $s\not\in \bar N$ and $n\in\mathbb{N}$.

Since the existence of positive and negative entries implies reducibility, we conclude that irreducibility implies that any left 1-eigenvector has either solely non-positive or non-negative entries.
Thus, an irreducible transition kernel $\trans$ exhibits at least one invariant distribution $\pi$.

Finally, assume that there is a second invariant distribution $\pi'\neq \pi$.
In order to represent a distribution, not all components of $\pi$ can be either larger or smaller than the components of $\pi'$.
Thus, $\pi-\pi'$ must have positive and negative entries.
However, $\pi-\pi'$ is a left 1-eigenvector of $\trans$ and thus, $\trans$ can't be irreducible, which contradicts the existence of $\pi'$.
\end{proof}

The Perron-Frobenius theorem shows that invariant distributions can certainly be found for an abundance of stochastic matrices, especially for aperiodic ones.
In the context of MCMC, it is, however, only a nice-to-have result and not utterly necessary.
In fact, there is great freedom in choosing aperiodic transition kernels that exhibit a prespecified invariant distribution and each MCMC method provides its very own approach to do so.

\newpage
\section{A note on the Metropolis-Hastings acceptance probabilities for mixture spaces}
\label{chap:transdim}

\subsection{Outline}

This work is driven by the ubiquitous dissent over the abilities and contributions of the Metropolis-Hastings and reversible jump algorithm within the context of transdimensional sampling.
In a mathematically coherent and at times somewhat journalistic elaboration, we aim to demystify this topic by taking a deeper look into the implementation of Metropolis-Hastings acceptance probabilities with regard to general mixture spaces.

Whilst unspectacular from a theoretical point of view, mixture spaces gave rise to challenging demands concerning their effective exploration.
An often applied but not extensively studied tool for transitioning between distinct spaces are so-called translation functions.

We give an enlightening treatment of this topic that yields a generalization of the reversible jump algorithm and unveil a further important translation technique.
Furthermore, by reconsidering the well-known Metropolis-within-Gibbs paradigm, we come across a dual strategy to develop Metropolis-Hastings sampler.
We underpin our findings and compare the performances of our approaches by means of a \cp example.
Thereafter, in a more theoretical context, we revitalize the somewhat forgotten concept of maximal acceptance probabilities.
This allows for an interesting classification of Metropolis-Hastings algorithms and gives further advice on their usage.
A review of some errors in reasoning that have led to the aforementioned dissent concludes this whole section.

\subsection{Introduction}


The Metropolis-Hastings algorithm is one of the most well-known MCMC methods.
It traverses through the state space by means of a user defined proposal distribution.
Each proposed state undergoes an accept-reject step, which decides whether the proposed state or the previous link in the chain is chosen to be the next link.
This step alone secures the invariance of the Metropolis-Hastings Markov kernel towards the target distribution.



Originally, Metropolis-Hastings proposals were designed conveniently through kernel densities w.r.t. the same measure that runs the density of the target distribution \citep{metropolis1953equation, hastings1970}.
In this case, the acceptance probability used in the accept-reject step is determined by the likelihood ratio of the transition in backward and forward direction.
However, upcoming applications like variable selection \citep{spike_slab_mitchel}, point processes \citep{geyer1994simulation} and \cp analysis \citep{exact_fearnhead, siems2019simultaneous} raised higher demands.

Common for these applications is that the elements of the state spaces are inhomogeneous in their dimension, i.e. the spaces are transdimensional.
Obviously, such mixtures of different spaces are not straightforwardly accessible through standard techniques like random walk proposals.
The first methods to conduct a change in dimension were plain births and deaths \citep{geyer1994simulation}.
However, this is prone to ignore the relations shared among several coordinates and therewith promotes poor acceptance rates.
Consequently, the exploration of the entire state space performs differently within the same and across the dimensions impairing the overall mixing time.

Subsequently, it became utterly popular to utilize functions that translate between points of different dimensions.
\cite{green1995} pioneered this approach through the reversible jump algorithm.
It was developed for purely continuous spaces and describes a particular class of proposals that act detached from the target space.
As a result, the density of the target distribution and the kernel density of the proposal do not necessarily share the same underlying measure anymore.
Nevertheless, \cite{green1995} was able to derive correct acceptance probabilities through a nifty application of the change of variable theorem.

According to Google scholar, \cite{green1995} has been cited over 5000 times.
Unfortunately, \cite{green1995} and others caused a misperception about the abilities of the Metropolis-Hastings algorithm, which is nowadays ubiquitous.
It is agreed among a significant number of scientific writings that the reversible jump algorithm has somehow made trans dimensional sampling possible and that other MCMC algorithms like Metropolis-Hastings are not applicable in these scenarios.
\cite{green1995}, \cite{carlin1995bayesian} and \cite{chen2012} even substantiate this wrong conclusion, which was acknowledged by \cite{besag2001markov, waagepetersen2001tutorial, green2003trans, green2009reversible, sisson2005transdimensional, sambridge2006trans} and many others.
A thorough search for papers which oppose this stance indirectly, in one way or another, revealed only \cite{geyer1994simulation, tierney1998note, andrieu2001model, godsill2001relationship, jannink2004metropolis} and \cite{roodaki2011}.

This poses a significant division within the MCMC community, which requires a thorough treatment.
To this end, we take up on the original task of exploring mixture spaces by means of the Metropolis-Hastings paradigm, whereby the very focus lies on the computation of acceptance probabilities.
At the appropriate places, we will investigate the results of \cite{tierney1998note} and \cite{roodaki2011} and make conclusions beyond that.
Furthermore, the discussion gives a more detailed consideration of the aforementioned claims of \cite{green1995, carlin1995bayesian} and \cite{chen2012}.

We will see that the foundation of the computation of acceptance probabilities in its most theoretical to most practical form is the compatibility of the involved densities towards a common underlying measure.
For example, the generalized Metropolis-Hastings algorithm from \cite{tierney1998note}, which is applicable to virtually any combination of target distribution and proposal, derives the required densities through the Radon-Nikodym theorem \citep{nikodym1930generalisation}.
Unfortunately, the nonconstructive nature of this theorem yields only little practical value.

Therefore, subject to the aforementioned compatibility, the design of proposals comprises a trade-off between mutability and feasibility.
As a consequence, each of the following methods exhibits its very own conditions and possibilities that need to be considered carefully in order to build the right Metropolis-Hastings sampler.

Similar to \cite{green1995}, we examine so-called translation functions, which convey between pairs of spaces.
However, we consider two different ways to apply them: before and after proposing new states and refer to these concepts as ad-hoc and post-hoc translations, respectively.

In accordance with the reversible jump algorithm, the main feature of post-hoc translations is the detachment of the proposal from the space the target distribution acts on.
This entails an integral transformation which imposes strict conditions and requirements on the translation functions.
Consequently, a sophisticated design in practical and mathematical terms is essential here.

In return, post-hoc translations support parsimony regarding the number of random variables that need to be generated.
This enables very tightly adapted and even deterministic transitions.
Furthermore, they grant access to certain proposals defining intractable compound distributions like convolutions, marginal distributions or factor distributions.

In contrast, ad-hoc translations support arbitrary translation functions, as long as they act on the mixture space exclusively.
We may, for example, deploy partial maximizers of the likelihood.
This represents a straightforward way to improve the acceptance rates of difficult transitions.

On the downside, ad-hoc translations require that the proposal kernel densities comply with the underlying measure space.
Thus, the freedom in choosing  translation functions comes at the prize of a smaller adaptability in terms of the generation of random states.

Whilst the name ad-hoc results from the ability to use simple and purposive translation functions, the term post-hoc refers to the moment when the translation function is applied.


%
%

The well-known Metropolis-within-Gibbs approach confines the set of possible transitions through conditioning 
and is therewith able to steer the exploration of the state space in specific ways.
Consequently, it employs conditional densities defined on suitable measure spaces.
It turns out that proposals set up directly on these measure spaces yield straightforward acceptance probabilities.

This shifts the challenges of creating a Metropolis-Hastings sampler to deriving conditional densities of the target distribution.
We consider this in some way novel and innovative perspective to be dual to the traditional one that puts the definition of proposals first.

Metropolis-within-Gibbs can be applied in the usual way to reduce the number of  coordinates that need to be considered and is therewith capable of relaxing the terms of ad-hoc translations.
In addition, it captures far more difficult cases like so-called semi deterministic translations, an important kind of post-hoc translations.

Our methods are scrutinized by means of a \cp example.
For this sake, we compare the acceptance rates of several birth and death like proposals.
It turns out that ad-hoc proposals (combined with Metropolis-within-Gibbs) or post-hoc proposals are vital here.
While the post-hoc translations are derived from delicate properties of the target distribution, the ad-hoc translations just trace for high likelihood values.
Despite their strong differences, both methods achieve comparable acceptance rates.

The notion of maximal acceptance probabilities as introduced by \cite{peskun1973optimum} and \cite{tierney1998note} doesn't seem to have gained a lot of attention so far.
The reason for this might be that there is usually only one choice for the acceptance probability in place.
We revitalize this dusted property and show that post-hoc translations do not necessarily yield maximal acceptance probabilities.
Therewith, they may sometimes differ from their maximal counterpart.

Transitions between pairs of different spaces are usually performed by separate proposals, which are combined to a mixture proposal.
This allows for a clear modular design and is thus pursued primarily here.

As stated in \cite{tierney1998note} already, acceptance probabilities can either be computed based on unique pairs of the individual proposals or from the mixture proposal as a whole.
Albeit unintentionally, \cite{roodaki2011} elaborates conditions upon when both approaches yield maximality.
We will investigate their main result and put them into our context.

This work is structured as follows.
At first, we consider the Metropolis-Hastings approach in applied terms in Section \ref{chap:mh}.
This involves an introduction of the important detailed balance condition together with the primal Metropolis-Hastings algorithm.
Furthermore, in Section \ref{chap:mixspaces} we talk about mixture spaces, mixture proposals and translations.
Section \ref{chap:metgibs} is concerned with the Metropolis-within-Gibbs approach.
Subsequently, we elaborate a \cp example in Section \ref{chap:cpexample}.
The general consideration of Metropolis-Hastings and maximal acceptance probabilities is pursued in Section \ref{chap:genmh}.
Section \ref{chap:mixtures} further transfers these observations to mixture proposals.
We conclude with a discussion and a brief literature review in Section \ref{chap:discussion_trans}.

\subsection{The Metropolis-Hastings algorithm in applied terms}
\label{chap:mh}

In this section, we deal with practical implementations of the Metropolis-Hastings algorithm.
To this end, we develop ways to design proposals based on kernel densities w.r.t. sigma finite measure spaces.
These comprise so-called ad-hoc and post-hoc translations and a new perspective for the Metropolis-within-Gibbs framework.

The Metropolis-Hastings algorithm is usually introduced on the basis of densities w.r.t. a common measure.6
For this sake, we are given a sigma finite measure space $(\states,\mcal{A},\lambda)$ and a probability measure $\pi$ with a density $\dens$ such that $\pi(ds)=\dens(s)\lambda(ds)$.
The user provides a Markov kernel $\proposal$ from $(\states,\mcal{A})$ to $(\states,\mcal{A})$
in form of a kernel density $\densk(s', s)$ such that $\proposal(s', ds)=\densk(s',s)\lambda(ds)$.
This Markov kernel is referred to as the \emph{proposal} and $\densk$ as the \emph{proposal kernel density}.

\begin{algo}[Metropolis-Hastings]
	\label{alg:mh}
	(I) Choose an initial state $s_0\in\states$ (II) In step $k=1,2,\ldots.$, given the previous state $s_{k-1}=s'$, propose a new state $s$ according to $\proposal(s', \LargerCdot)$ and set $s_k=s$ with probability
	\begin{align}
	\label{eq:acc}
	\acc_{s' s}=\min\Bigg\{1, \frac{\dens(s)\densk(s,s')}{\dens(s')\densk(s',s)}\Bigg\}
	\end{align}
	otherwise $s_k=s'$. We set $\acc_{s' s}$ to zero wherever it is not defined.
\end{algo}
We refer to (\ref{eq:acc}) as the \emph{acceptance probability} and to the second argument within the braces as the \emph{acceptance ratio}.

Let further
\begin{align}
\label{eq:mhtranskern}
\mu(s',\LargerCdot)=\int \acc_{s' s}\delta_{s}(\LargerCdot)+\left(1-\acc_{s's}\right)\delta_{s'}(\LargerCdot)\proposal(s', ds)
\end{align}
$\mu$ represents the Markov kernel whose application constitutes step (II) of Algorithm \ref{alg:mh}.	
Consequently, executing Algorithm \ref{alg:mh} corresponds to simulating successive links of a Markov chain with initial state $s_0$ and Markov kernel $\mu$.

The crucial point here is that, subject to some conditions, the unconditional distributions of these links converge in total variation to $\pi$ \citep{tierney1994markov}.
Thus, after simulating the chain for a sufficiently long time, the last link can be regarded as an approximate sample of $\pi$.

In this work, we mainly focus on the most basic of these conditions, the invariance of $\pi$ towards $\mu$.
This invariance can be accomplished through the acceptance probability alone.
To this end, we will examine different families of proposals and how closed solutions for the acceptance probabilities are calculated.
The other conditions to secure convergence rely on the specific form of $\mu$ and have to be checked rather individually from case to case \citep{tierney1994markov}.

The required invariance proofs are facilitated by means of the following condition.
\begin{definition}
	The Markov kernel $\mu$ preserves the \emph{detailed balance} condition w.r.t. $\pi$ if
	\begin{align*}
	\pi\otimes \mu(A\times B)=\pi\otimes \mu(B\times A)
	\end{align*}
	for all $A,B\in\mcal{A}$. 
\end{definition}

If $\mu$ preserves the detailed balance condition w.r.t. $\pi$, $\pi$ is an invariant distribution of $\mu$ since $\pi\otimes \mu(\states\times \LargerCdot)=\pi\otimes \mu(\LargerCdot\times \states)=\pi$.
The opposite implication does not hold in general.

Markov chains build from Markov kernels that preserve the detailed balance w.r.t. another distribution are called \emph{reversible}.
This is because, they exhibit same probabilities in forward and backward direction once a link in the chain follows the law of this distribution.
Consequently, MCMC methods that preserve the detailed balance condition are also called reversible.

\begin{lemma}[\cite{hastings1970}]
	\label{lem:mh}
	The Markov kernel $\mu$ of Equation (\ref{eq:mhtranskern}) with the acceptance probability shown in (\ref{eq:acc})  preserves the detailed balance w.r.t. $\pi$.
\end{lemma}
\begin{proof}
	There is nothing to prove for $A,B\in\mcal{A}$ with $A=B$. 
	Since $A\times B$ is the disjoint union of $(A\cap B)\times (A\cap B)$, $(A\backslash B)\times (B\backslash A)$, $(A\backslash B)\times (A\cap B)$ and $(A\cap B)\times (B\backslash A)$ we can w.l.o.g. assume that $A\cap B=\emptyset$.
	We get
	\begin{align*}
	&\pi\otimes \mu(A\times B)=\int_A \mu(s',B)\pi(ds')=\int_A \int_B \acc_{s' s}\densk(s',s)\dens(s')\lambda(ds)\lambda(ds')\\
	&\overset{*}{=}\int_B \int_A  \acc_{s s'}\densk(s,s')\dens(s)\lambda(ds')\lambda(ds)=\int_B \mu(s,A)\pi(ds)=\pi\otimes \mu(B\times A)
	\end{align*}
	whereby we have used that $\acc_{ s's}\densk(s',s)\dens(s')=\acc_{s s'}\densk(s,s')\dens(s)$ and Fubini's theorem in *.
\end{proof}

By looking at $\acc_{s' s}$, we see that normalization constants w.r.t. $\dens$ cancel out.
Thus, the algorithm may deal with conditional versions of $\dens$ by deploying $\dens$ unchanged.
This greatly facilitates data processing by means of conditioning.

\subsubsection{Mixture state spaces and translations}
\label{chap:mixspaces}
Now, we elaborate sampling across alternating measurable spaces with the help of functions that translate between these spaces.
We consider two naturally arising approaches, the so-called ad-hoc and post-hoc translations.
Both exhibit their very own challenges and advantages in terms of feasibility and adaptability.

Let $\spaces$ be a finite or countable set and let $(\states_i,\mcal{A}_i, \lambda_i)_{i\in\spaces}$ be a family of disjoint sigma finite measure spaces.
The \emph{mixture measure space} $(\states,\mcal{A}, \lambda)$ of $(\states_i,\mcal{A}_i, \lambda_i)_{i\in\spaces}$ is defined through 
\begin{align*}
&\states=\bigcup_{i\in\spaces}\states_i,\quad \mcal{A}=\sigma\Bigg(\bigcup_{i\in\spaces}\mcal{A}_i\Bigg),\quad \lambda=\sum_{i\in\spaces}\lambda_i\circ \text{id}_i^{-1}
\end{align*}
whereby $\mcal{A}$ is the smallest sigma algebra that comprises all $\mcal{A}_i$'s.
The sole purpose of the identities $\text{id}_i:\states_i\rightarrow \states$ with $\text{id}_i(s)=s$ is to lift states from the component spaces into the mixture space.

As before, we assume that $\pi$ exhibits a density $\dens$ w.r.t. $\lambda$.
A Metropolis-Hastings algorithm for $\pi$ can now be build as usual.
A natural way to set up proposals on the mixture space is by pursuing a modular design through a mixture of Markov kernels, whereby each component solely transitions within a pair of the spaces.
We  want to use the term \emph{move} to feature the available transitions.

Let $\moves$ be a finite or countable set, i.e. the set of \emph{moves}.
In order to build a mixture proposal, we define measurable functions $\moveprob_\ell:\states\rightarrow[0,1]$ with $\sum_{\ell\in\moves}\moveprob_{\ell}(s')=1$ for all $s'\in\states$.
$\moveprob_{\ell}(s')$ is the probability of choosing move $\ell$ while being in $s'$.
Furthermore, each move $\ell$ exhibits a kernel density $\densk_\ell$ that transitions between a pair of the spaces and complies with the measure of the target space.

For the sake of technical correctness, we need to lift each $\densk_\ell$ into the mixture space by setting it to 0 for unsupported arguments.
Therewith, the mixture proposal $\proposal$ that operates on the mixture space $(\states,\mcal{A})$ reads
\begin{align*}
\proposal(s',ds)=\sum_{\ell\in\moves}\moveprob_\ell(s')\proposal_\ell(s',ds)
\end{align*}
with $\proposal_\ell(s',ds)=\densk_\ell(s',s)\lambda(ds)$.

This time, the transition kernel for the Metropolis-Hastings algorithm for mixtures reads
\begin{align}
\label{eq:mhtranskernmix}
\mu(s',\LargerCdot)=\sum_{\ell\in\moves}\moveprob_\ell(s')\int \acc_{s' s}^\ell\delta_{s}(\LargerCdot)+\left(1-\acc_{s's}^\ell\right)\delta_{s'}(\LargerCdot)\proposal_\ell(s', ds)
\end{align}
Given the previous link $s'$, this kernel involves choosing a move $\ell$ according to $\left(\moveprob_k(s')\right)_{k\in\moves}$ at first, then proposing an $s$ according to $\proposal_\ell(s',\LargerCdot)$ and finally accepting or rejecting $s$ through $\acc_{s's}^\ell$.

This is in accordance with Algorithm \ref{alg:mh}, however, with the difference that we do not consider the mixture proposal as a whole.
Instead, we incorporate the moves explicitly into the acceptance probability (see also Section \ref{chap:mixtures}).

\begin{lemma}
	\label{lem:mix}
	Let $\ellr:\moves\rightarrow\moves$ be a bijection.
	The transition kernel $\mu$ of (\ref{eq:mhtranskernmix}) with acceptance probability 
	\begin{align*}
	\acc_{s' s}^\ell=
	\min\Bigg\{1,\frac{\dens(s)\densk_{\ellr}(s,s')}{\dens(s')\densk_{\ell}(s',s)}\cdot\frac{\moveprob_{\ellr}(s)}{\moveprob_{\ell}(s')}\Bigg\}
	\end{align*}
	preserves the detailed balance w.r.t. $\pi$.
\end{lemma}
\begin{proof}
	For $A\in\mcal{A}_j$, $B\in\mcal{A}_i$, we get
	\begin{align*}
	&\sum_{\ell\in\moves}\int_A\int_B\acc_{s' s}^\ell\dens(s')\moveprob_{\ell}(s')\densk_{\ell}(s',s)\lambda_i(ds)\lambda_j(ds')\\
	&=\sum_{\ell\in\moves}\int_B\int_A\acc_{s s'}^\ellr\dens(s)\moveprob_{\ellr}(s)\densk_{\ellr}(s,s')\lambda_j(ds')\lambda_i(ds)\\
	&\overset{*}{=}\sum_{\ell\in\moves}\int_B\int_A\acc_{s s'}^\ell\dens(s)\moveprob_{\ell}(s)\densk_{\ell}(s,s')\lambda_j(ds')\lambda_i(ds)
	\end{align*}
	whereby in * we have used the uniqueness of $\ellr$. 
\end{proof}

$\ellr$ stands for the unique backward or reverse move of $\ell$.
Consequently, if move $\ell$ transitions from $\states_j$ to $\states_i$, move $\ellr$ should transition in the opposite direction from $\states_i$ to $\states_j$.	
Please note that the uniqueness of $\ellr$ doesn't imply that the transitions performed by move $\ell$ can only be reversed by move $\ellr$.
In fact, the pairs of spaces are not supposed to be unique to the moves and the possible transitions determined by related move pairs may overlap arbitrarily.

Concerning the design of the moves, it is often not clear how to transition away from a point in one space to a point in another space.
Especially random walk proposals pose a problem if there is no suitable measure of distance between the spaces available.
This gives rise to two distinct approaches that employ functions as a device for translation.

\begin{figure}[ht]
	\center
	\begin{tikzpicture}[scale=0.80]
	\tikzstyle{every node}=[draw,shape=ellipse, line width=1.1pt];
	\node (node1) at (150:3) {$\states_j$};
	\node (node2) at ( 30:3) {$\states_i$};
	\node (node3) at ( 270:0)[rectangle] {$\densk_{\ell}$};
	\node (node4) at ( 270:1.3)[] {$\states_i$};
	\node (node5) at ( 270:-3)[rectangle] {$\bo s_{\ell}$};
	\node[font=\large, draw=none,fill=none] at (270:-4) {(a)};

	\path[->] (node1.north) edge [out=50, in=180, line width=1.1pt] (node5.west);
	\path[->] (node5.east) edge [out=0, in=120, line width=1.1pt] (node2.north);
	\path[->] (node2.south) edge [out=250, in=0, line width=1.1pt] (node3.east);
	\path[->] (node3) edge [, line width=1.1pt] (node4.north);
	\end{tikzpicture}
	$\quad\quad\quad\quad$
	\begin{tikzpicture}[scale=0.80]
	\tikzstyle{every node}=[draw,shape=ellipse, line width=1.1pt];
	\node (node1) at (150:3) {$\states_j$};
	\node (node2) at ( 30:3) {$\ustates_{\ell}$};
	\node (node3) at ( 270:0)[rectangle] {$\bo s_{\ell}$};
	\node (node4) at ( 270:1.3) {$\states_i$};
	\node (node5) at ( 270:-3)[rectangle] {$\densk_{\ell}$};
	\node[font=\large, draw=none,fill=none] at (270:-4) {(b)};

	\path[->] (node1) edge [out=-20, in=120, line width=1.1pt] (node3);
	\path[->] (node1.north) edge [out=50, in=180, line width=1.1pt] (node5.west);
	\path[->] (node5.east) edge [out=0, in=120, line width=1.1pt] (node2.north);
	\path[->] (node2) edge [out=200, in=60, line width=1.1pt] (node3);
	\path[->] (node3) edge [, line width=1.1pt] (node4.north);
	\end{tikzpicture}
	\caption[Ad-Hoc and post-hoc transition graphs]{Transition graphs for a move $\ell\in\moves$ that transitions from $\states_j$ to $\states_i$ in the fashion of (a) an ad-hoc translation and (b) a post-hoc translation}
	\label{fig:mhrjcomp}
\end{figure}
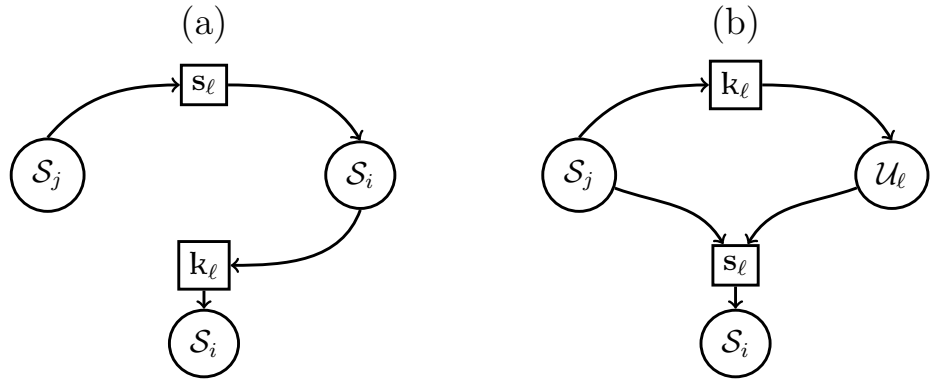

In order to transition from $\states_j$ to $\states_i$ in move $\ell$, our first approach employs a measurable function $\bo s_{\ell}:\states_j\rightarrow\states_i$ that translates between the spaces and a kernel density $\densk_\ell$ from $(\states_i,\mcal{A}_i)$ to $(\states_i,\mcal{A}_i, \lambda_i)$.
Functions like $\bo s_{\ell}$ are referred to as \emph{translation functions}.
Given the previous link in the chain $s'\in\states_j$, new states are proposed by virtue of $\proposal_\ell(s',\LargerCdot)=\int_\LargerCdot\densk_{\ell}(\bo s_{\ell}(s'), s)\lambda_i(ds)$.
We call this procedure \emph{ad-hoc translation}.
Figure \ref{fig:mhrjcomp}(a) illustrates it graphically.

\begin{corollary}
\label{cor:ad-hoc}
The transition kernel $\mu$ of (\ref{eq:mhtranskernmix}) with acceptance probability 
\begin{align*}
\acc_{s' s}^\ell=
\min\Bigg\{1,\frac{\dens(s)\densk_{\ellr}(\bo s_{\ellr}(s),s')}{\dens(s')\densk_{\ell}(\bo s_{\ell}(s'),s)}\cdot\frac{\moveprob_{\ellr}(s)}{\moveprob_{\ell}(s')}\Bigg\}
\end{align*}
preserves the detailed balance w.r.t. $\pi$ in ad-hoc translations.
\end{corollary}

Ad-hoc translations pose a flexible framework that is directly accessible through non-mathematicians with a basic understanding of the Metropolis-Hastings algorithm.
In practice, however, another technically more demanding Metropolis-Hastings method has become the quasi gold-standard for sampling within mixed spaces.
It is commonly referred to as the reversible jump algorithm \citep{green1995}.
We will allocate this approach to another type of translation and give a novel result that slightly generalizes this technique.

In contrast to ad-hoc translations, now, we propose first and then apply a translation to the proposed state.
This allows us to detach the target space from the space where the proposal kernel density acts on.
Thus, in move $\ell\in\moves$ we may transition between a pair of the spaces, say from $\states_j$ to $\states_i$, by detouring over another user-defined sigma finite measure space $(\ustates_{\ell},\mcal{B}_{\ell},\nu_{\ell})$.
To this end, we design a kernel density $\densk_{\ell}$ from $(\states_j,\mcal{A}_j)$ to $(\ustates_{\ell},\mcal{B}_{\ell},\nu_{\ell})$ and a measurable function $\bo s_{\ell}:\states_j\times \ustates_{\ell}\rightarrow \states_i$.
Given the previous link $s'\in\states_j$, a random state $u'\in \ustates_{\ell}$ is drawn by virtue of $\int_\LargerCdot\densk_{\ell}(s', u')\nu_{\ell}(du')$ and the newly proposed state $s\in\states_i$ is obtained as $s=\bo s_{\ell}(s',u')$.
We call this procedure \emph{post-hoc translation}.
Figure \ref{fig:mhrjcomp}(b) illustrates it graphically.

The case where $(\states_j,\mcal{A}_j)$ and $(\states_i,\mcal{A}_i)$ are discrete can be treated straightforwardly since the acceptance probability of move $\ell$ then reads
\begin{align}
\label{eq:accdiscpost}
\acc_{s' s}^\ell=
\min\Bigg\{1,\frac{\dens(s)\int\delta_{\bo s_\ellr(s,u)}\left(\{s'\}\right)\densk_{\ellr}(s,u)\nu_{\ellr}(du)}
{\dens(s')\int\delta_{\bo s_\ell(s',u')}\left(\{s\}\right)\densk_{\ell}(s',u')\nu_{\ell}(du')}\cdot\frac{\moveprob_{\ellr}(s)}{\moveprob_{\ell}(s')}\Bigg\}
\end{align}
This applies, because the proposals in forward and backward direction are summarized as discrete and are thus compatible to $\dens$.
Though, the required feasibility of the integrals in (\ref{eq:accdiscpost}) slightly restricts the set of possible post-hoc translations here.

However, most interesting cases arise when the proposal $\int \delta_{\bo s_\ell(s',u')}(\LargerCdot)\densk_\ell(s',u')\nu_{\ell}(du')$ either doesn't exhibit a kernel density w.r.t. $\lambda_i$ or when it is computationally infeasible.
Unfortunately, our methods so far do rely on such densities calling for a different approach here.

In practice, it would be desirable to derive the acceptance probabilities directly from the given proposal kernel densities.
Thus, the trick is to accept and reject $s=\bo s_{\ell}(s',u')$ within $\states_j\times\ustates_\ell$ yielding the actual post-hoc paradigm.
In order for this procedure to be well-defined, any transition $(s',u')\in \states_j\times\ustates_\ell$ must exhibit a unique backward transition $(s,u)\in \states_i\times\ustates_\ellr$.
Thus, we need to define additional measurable functions $\bo u_\ell:\states_j\times\ustates_\ell\rightarrow \ustates_\ellr$ with
\begin{align*}
(\bo s, \bo u)_\ell(s',u')=(s,u)\quad \Leftrightarrow\quad (\bo s, \bo u)_\ellr(s,u)=(s',u')
\end{align*}
for all $(s',u')\in\states_j\times\ustates_\ell$ and $(s,u)\in\states_i\times\ustates_{\ellr}$.


This reasoning shows that the possible options for choosing post-hoc translations this way are somewhat limited.
Therewith, the seemingly restrictive conditions of the next lemma can be considered as fairly universal.

\begin{lemma}
	\label{lem:post-hoc}
	Assume that for move $\ell\in\moves$ with transitions from $\states_j$ to $\states_i$ and corresponding unique backward move $\ellr$, there exist measurable functions $\bo u_{\ell}:\states_j\times \ustates_{\ell}\rightarrow \ustates_{\ellr}$ and $f_{\ell}:\states_j\times \ustates_{\ell}\rightarrow\mathbb{R}_{\geq 0}$ with
	\begin{align*}
	&\textbf{\emph{(I)}}\ \ (\bo s, \bo u)_\ell=(\bo s, \bo u)_\ellr^{-1}\\
	&\textbf{\emph{(II)}}\ \ \lambda_i\otimes\nu_{\ellr}\left((\bo s, \bo u)_{\ell}(ds',du')\right)=f_{\ell}(s',u')\lambda_j\otimes\nu_{\ell}(ds', du')
	\end{align*}
	The transition kernel
	\begin{align*}
	\mu(s',\LargerCdot)=\sum_{\ell\in\moves}\moveprob_\ell(s')\int \acc_{s'u' }^\ell\delta_{\bo s_\ell}(\LargerCdot)+\left(1-\acc_{s'u'}^\ell\right)\delta_{s'}(\LargerCdot)\densk_\ell(s',u')\nu_{\ell}(du')
	\end{align*}
    with
	\begin{align*}
	\acc_{s'u'}^\ell=\min\left\{1,\ \frac{\dens(\bo s_{\ell})\densk_{\ellr}((\bo s, \bo u)_{\ell})}{\dens(s')\densk_{\ell}(s',u')}\cdot\frac{\moveprob_{\ellr}(\bo s_{\ell})}{\moveprob_{\ell}(s')}\cdot f_{\ell}(s',u')\right\}
	\end{align*}
	preserves the detailed balance w.r.t. $\pi$ in post-hoc translations.
\end{lemma}
\begin{proof}
	\textbf{(I)} implies that each $(\bo s, \bo u)_{\ell}:\states_j\times \ustates_{\ell}\rightarrow\states_i\times \ustates_{\ellr}$ is a bijection with inverse $(\bo s, \bo u)_{\ellr}$ and \textbf{(II)} guides through an integral transformation by means of the densities $f_{\ell}$.
	Since 
	\begin{align*}
	&\lambda_j\otimes\nu_{\ell}(ds', du')=\lambda_j\otimes\nu_{\ell}(\bo s_{\ellr}(\bo s_{\ell}(ds', du')))\\
	&=f_{\ell}(\bo s_{\ell}(ds', du'))\lambda_i\otimes\nu_{\ellr}(\bo s_{\ell}(ds', du'))=f_{\ell}(\bo s_{\ell}(s', u'))f_{\ellr}(s',u')\lambda_j\otimes\nu_{\ell}(ds', du')
	\end{align*}
	we see that $f_{\ell}(s',u')f_{\ellr}(s,u)=1$ if $\bo s_{\ell}(s', u')=(s,u)$ and vice versa.
	Furthermore, for $A\in\mcal{A}_j$ and $B\in\mcal{B}_{i}$, we get
	\begin{align*}
	\pi\otimes\proposal_\ellr(B\times A)
	&=\int\delta_{s}(B)\delta_{\bo s_{\ellr}}(A) \dens(s)\densk_{\ellr}(s,u)\lambda_i\otimes\nu_{\ellr}(ds,du)\\
	&=\int\delta_{\bo s_{\ell}}(B)\delta_{s'}(A) \dens(\bo s_{\ell})\densk_{\ellr}((\bo s, \bo u)_{\ell})\lambda_i\otimes\nu_{\ellr}((\bo s, \bo u)_{\ell}(ds',du'))\\
	&=\int \delta_{s'}(A)\delta_{\bo s_{\ell}}(B)\dens(\bo s_{\ell})\densk_{\ellr}((\bo s, \bo u)_{\ell})f_{\ell}(s',u')\lambda_j\otimes \nu_\ell(ds', du')
	\end{align*}
	The rest of the proof now follows the same scheme as before.	
\end{proof}

In mathematically precise terms, it is incorrect to refer to $\densk_\ell$ as a proposal kernel density because it is generally not a kernel density of the proposal.
However, in order to maintain a consistent language, we still want to keep this naming.
Thus, the proof of Lemma \ref{lem:post-hoc} applies an integral transformation that secures the compatibility of the (joint) densities w.r.t. the transitions in forward and backward direction.

If all spaces are discrete coordinate spaces, we may choose arbitrary bijections $(\bo s, \bo u)_{\ell}$ that meet \textbf{(I)} and apply
\begin{align*}
\acc_{s'u'}^\ell=\min\left\{1,\ \frac{\dens(\bo s_{\ell})\densk_{\ellr}((\bo s, \bo u)_{\ell})}{\dens(s')\densk_{\ell}(s',u')}\cdot\frac{\moveprob_{\ellr}(\bo s_{\ell})}{\moveprob_{\ell}(s')}
\cdot \frac{\lambda_i\otimes\nu_{\ellr}((\bo s, \bo u)_{\ell}(ds', du'))}{\lambda_j\otimes\nu_{\ell}(ds',du')}\right\}
\end{align*}
\citep{fronk2004} deploy post-hoc translation in this way.

It is common to require that $\dimens{\states_j}+\dimens{\ustates_{\ell}}=\dimens{\states_i}+\dimens{\ustates_{\ellr}}$ if move $\ell$ transitions from $\states_j$ to $\states_i$.
\cite{green1995} pioneered the post-hoc translation approach and introduced this condition as \emph{dimension matching}.
His intention was to embed the spaces into a single superordinate one in order to avoid the inhomogeneity of the dimension.
Around that time the nowadays ubiquitous term \emph{transdimensional sampling} was coined.

The \emph{reversible jump algorithm} of \cite{green1995} employs post-hoc translations over purely continuous coordinate spaces.
It requires the dimension matching condition and that each $(\bo s, \bo u)_{\ell}$ is a diffeomorphism with Jacobi determinant $\bo J_{\ell}$.
If \textbf{(I)} is met, the resulting acceptance probability reads
\begin{align*}
\acc_{s'u'}^\ell=\min\left\{1,\ \frac{\dens(\bo s_{\ell})\densk_{\ellr}((\bo s, \bo u)_{\ell})}{\dens(s')\densk_{\ell}(s',u')}\cdot\frac{\moveprob_{\ellr}(\bo s_{\ell})}{\moveprob_{\ell}(s')}
\cdot |\bo J_{\ell}|\right\}
\end{align*}
This follows from an integration by substitution for multiple variables since $ds'du'|\bo J_{\ell}|=dsdu$ if we substitute $(s,u)=(\bo s, \bo u)_{\ell}(s', u')$.

Post-hoc translations are highly advanced, require very careful implementations and as it stands do only support certain kinds of state spaces.
In return, they give the opportunity to generate random states within spaces of lower dimension than the target space.
Therewith, they pave the way for tightly adapted, parsimonious sampling schemes where even deterministic transitions are viable.

Moreover, post-hoc translations may act on a space of higher dimension than the target space, which grands access to intractable compound distributions.
To see this, consider the following minimal example to ordinarily transition from $\mathbb{R}$ to $\mathbb{R}$ by proposing values in $\mathbb{R}^2$ through the kernel density $\densk$.
By employing the diffeomorphism $(s', u_1, u_2)\mapsto \left(u_1+u_2, s'-u_2, u_2\right)$ we end up with equal forward and backward moves and
\begin{align*}
\acc_{s', u_1, u_2}=\min\left\{1,\ \frac{\densk(u_1+u_2, s'-u_2, u_2)\dens(u_1+u_2)}{\densk(s',u_1, u_2)\dens(s')}\right\}
\end{align*}

The corresponding proposal on the target space, i.e. $\int \delta_{u_1+u_2}(\LargerCdot)\densk(s, u_1, u_2)du_1du_2$, is the convolution of $\densk(s, u_1, u_2)$ w.r.t $u_1$ and $u_2$.
If $\proposal$ exhibits an intractable density on $\mathbb{R}$, the post-hoc paradigm proves functional here.
Nevertheless, there is a little price to pay when we use the post-hoc paradigm this way.
We lose the maximality of the acceptance probability.
This topic will be discussed in Section \ref{chap:genmh}.

\subsubsection{Metropolis-within-Gibbs}
\label{chap:metgibs}

Imagine a coordinate state space, whereby each step of the Metropolis-Hastings algorithm solely modifies one single coordinate.
This evokes the Metropolis-within-Gibbs framework.
Here we investigate this approach further.
We will see that a generalization allows for the identification of new measure spaces to run proposals on.
This creates an innovative perspective for an important class of post-hoc translations.

Given a countable set of moves $\moves$, consider a family of measurable functions $(\funcg_\ell)_{\ell\in \moves}$ from $(\states, \mcal{A})$ into a measurable space $(\mcal{G}_\ell,\mcal{B}_\ell)$ with $\{g\}\in\mcal{B}_\ell$ for all $g\in\mcal{G}_\ell$.
Each $\funcg_\ell$ determines a partition of the state space $\states$ into measurable sets that yield the same values under $\funcg_\ell$. 
Given the previous link $s'\in\states$, the idea is to choose an $\ell$ out of $\moves$ and to propose a new state exclusively within the ensuing set 
\begin{align*}
\gibbsrel_\ell(s')=\{s\in\states\mid \funcg_\ell(s)=\funcg_\ell(s')\}
\end{align*}

This entails the use of conditional versions of $\pi$.
For this sake, assume that $\rand X$ is a random variable with $\rand X\sim \pi$ and $\prob{\rand X\in ds\mid \funcg_\ell(\rand X)=g}=\dens_\ell(g, s)\lambda_\ell(ds)$ with a measure $\lambda_\ell$ over $\mcal{A}$ and a kernel density $\dens_\ell$ from $(\mcal{G}_\ell,\mcal{B}_\ell)$ to $(\states,\mcal{A}, \lambda_\ell)$.
Furthermore, we stipulate that for all $s'\in\states$, the restriction of $\lambda_\ell$ to $\gibbsrel_\ell(s')$ is sigma finite.

On these grounds, we employ proposal kernel densities $\densk_\ell$ from $(\states,\mcal{A})$ to $(\states,\mcal{A}, \lambda_\ell)$ 
such that $\proposal_\ell(s',ds)=\densk_\ell(s',s)\lambda_\ell(ds)$
and $\proposal_\ell(s',\gibbsrel_\ell(s'))=1$ for all $\ell\in\moves$ and $s'\in \states$.
We further define move probabilities $\moveprob_\ell:\states\rightarrow[0,1]$ as before.

\begin{lemma}
	\label{lem:metgibbs}
    The transition kernel $\mu$ of (\ref{eq:mhtranskernmix}) with acceptance probability 
	\begin{align}
	\label{eq:metgibbsacc}
	\acc_{s' s}^\ell=
	\min\left\{1,\frac{\dens_\ell(g, s)\densk_\ell(s, s')}{\dens_\ell(g, s')\densk_\ell(s',s)}\cdot\frac{\moveprob_\ell(s)}{\moveprob_\ell(s')}\right\}
	\end{align}
	preserves the detailed balance w.r.t. $\pi$.
\end{lemma}
\begin{proof}
	Consider the following equality
	\begin{align*}
	\pi(\LargerCdot)=\int\int_{\LargerCdot}\dens_\ell(g, s')\lambda_\ell(ds')\pi\circ\funcg_\ell^{-1}(dg)
	\end{align*}
	For $A,B\in\mcal{A}$ with $A\cap B=\emptyset$ we get
	\begin{align*}
	&\pi\otimes\mu(A\times B)
	=\int\sum\limits_{\ell\in\moves}\int_A\int_B\acc_{s's}^\ell\moveprob_\ell(s')\densk_\ell(s',s) \lambda_\ell(ds)\dens_\ell(g, s')\lambda_\ell(ds')\pi\circ\funcg_\ell^{-1}(dg)\\
	&=\int\sum\limits_{\ell\in\moves}\int_A\int_B\ind{\funcg_\ell(s')=\funcg_\ell(s)=g}\acc_{s's}^\ell\moveprob_\ell(s')\densk_\ell(s',s) \lambda_\ell(ds)\dens_\ell(g, s')\lambda_\ell(ds')\pi\circ\funcg_\ell^{-1}(dg)\\
	&\overset{*}{=}\int\sum\limits_{\ell\in\moves}\int_B\int_A\acc_{ss'}^\ell\moveprob_\ell(s)\densk_\ell(s,s')\lambda_\ell(ds') \dens_\ell(g, s)\lambda_\ell(ds)\pi\circ\funcg_\ell^{-1}(dg)
	=\pi\otimes\mu(B\times A)
	\end{align*}
	whereby we have used Fubini's theorem together with the partial sigma finiteness of $\lambda_\ell$ in *. 
\end{proof}

The major application scenario for Metropolis-within-Gibbs is to simplify sampling within coordinate spaces.
Let $\states=\states_1\times\ldots\times\states_n$ be an arbitrary coordinate space together with a product sigma algebra $\mcal{A}$ and for $\ell=1,\ldots,n$ define $\funcg_\ell(s)=(s_1,\ldots,s_{\ell-1},s_{\ell+1},\ldots,s_n)$.
Move $\ell$ describes transitions where all but the $\ell$-th coordinate remain fixed.

Let further $\pi(ds)=\dens(s)\nu(ds)$ with a product measure $\nu=\nu_1\otimes\ldots\otimes\nu_n$ over $\mcal{A}$. We define
\begin{align*}
&\dens_\ell(\funcg_\ell(s'), s)=\frac{\dens(s)}{\int\dens(s)\nu_\ell(ds_\ell)}\ind{\funcg_\ell(s')=\funcg_\ell(s)}\quad\text{ and }\quad\lambda_\ell(ds)=\nu_\ell(ds_\ell)\prod_{j\not =\ell}\#(ds_j)
\end{align*} 
with the counting measure $\#$. 
$\int\dens(s)\nu_\ell(ds_\ell)$ is negligible here since it cancels out in (\ref{eq:metgibbsacc}).
$\proposal_\ell$ is essentially build from a kernel density w.r.t. $\nu_\ell$.
If we chose plain probabilities $\moveprob_\ell$ for $\ell\in\moves$, we may write 
$\acc_{s' s}^\ell=\min\left\{1,\frac{\dens(s)\densk_\ell(s, s')}{\dens( s')\densk_\ell(s', s)}\right\}$
for transitions $(s',s)$ that differ in coordinate $\ell$ only.

The Metropolis-within-Gibbs approach exhibits a tremendous potential and even captures particular post-hoc translations.
To see this, consider a translation function $\bo s:\states'\rightarrow\states$ between two measure spaces $(\states',\mcal{A}',\lambda')$ and $(\states,\mcal{A},\lambda)$. 
The corresponding post-hoc translation from $\states'$ to $\states$ applies $\bo s$ in a deterministic manner.
However, if $\bo s$ is not injective, the backward move is usually not deterministic.
In the following, we refer to this sort of moves as semi deterministic translations (SDT's).
In practice, SDT's are the most commonly applied post-hoc translations.

Define
\begin{align*}
\funcg(s)=\begin{cases}
\bo s(s)& s\in\states'\\
s& s\in\states
\end{cases}
\end{align*}
This yields Metropolis-within-Gibbs moves that mimic post-hoc translations.
Even though transitions within one space and the same are not per se excluded here, we deem them as unsupported in this particular move.

In contrast to Lemma \ref{lem:post-hoc}, $\bo s$ is not subject to restrictions.
Lemma \ref{lem:metgibbs} ensures that the acceptance probabilities are correct as long as we work within measure spaces that are in line with $\prob{\rand X\in \LargerCdot\mid \funcg(\rand X)=g}$.
The following lemma represents a reconciliation of Lemma \ref{lem:post-hoc} and \ref{lem:metgibbs}.
\begin{lemma}
	\label{lem:metgibbspost-hoc}	
	Assume that there exist a sigma finite measure space $(\ustates,\mcal{B},\nu)$ and measurable functions $\bo u:\states'\rightarrow \ustates$ and $f:\states\times \ustates\rightarrow\mathbb{R}_{\geq 0}$ such that 
	\begin{align*}
	&\textbf{\emph{(I)}}\ \ (\bo s, \bo u):\states'\rightarrow\states\times \ustates \text{ is a bijection}\\
	&\textbf{\emph{(II)}}\ \ \lambda'\left((\bo s, \bo u)^{-1}(ds, du)\right)=f(s, u)\lambda\otimes\nu(ds, du)
	\end{align*}
	For $s\in\states$, we may write
	\begin{align}
	\label{eq:metgibpost}
	\prob{\rand X\in \LargerCdot\mid \funcg(\rand X)=s}\propto
	\dens(s)\delta_{s}(\LargerCdot)
	+\int \delta_{(\bo s, \bo u)^{-1}}(\LargerCdot)\dens\left((\bo s, \bo u)^{-1}\right) f(s,u)\nu(du)
	\end{align}	
\end{lemma}
\begin{proof}
We have to show that $\int_B \prob{\rand X\in A\mid \funcg(\rand X)=s} \pi\circ \funcg^{-1}(ds)=\pi(A\cap \funcg^{-1}(B))$ for all $B\in\mcal{A}$ and $A\in\mcal{A}\otimes\mcal{A}'$.
To this end, consider that
\begin{align*}
&\pi\circ\funcg^{-1}=\int_{\LargerCdot}\dens(s)\lambda(ds)+\int\delta_{\bo s}(\LargerCdot)\dens(s')\lambda'(ds')\\
&=\int_{\LargerCdot}\dens(s)\lambda(ds)+\int\delta_{s}(\LargerCdot)\dens((\bo s, \bo u)^{-1})f(u,s)\nu\otimes\lambda(du, ds)\\
&=\int_{\LargerCdot}\dens(s)\lambda(ds)
+\int_\LargerCdot \underbrace{\int\dens\left((\bo s, \bo u)^{-1}\right)f(u,s)\nu(du)}_{=\bcc(s)}\lambda(ds)
=\int_{\LargerCdot}\dens(s)+\bcc(s)\lambda(ds)
\end{align*}
This allows us to write
\begin{align*}
&\int_B \prob{\rand X\in A\mid \funcg(\rand X)=s}\left(\dens(s)+\bcc(s)\right)\lambda(ds)\\
&\overset{*}{=}\int_B \dens(s)\delta_{s}(A)+ \int \delta_{(\bo s, \bo u)^{-1}}(A)\dens\left((\bo s, \bo u)^{-1}\right) f(s,u)\nu(du)\lambda(ds)\\
&=\pi(A\cap B)
+\int\delta_s(B)\delta_{(\bo s, \bo u)^{-1}}(A)\dens\left((\bo s, \bo u)^{-1}\right) \lambda'\left((\bo s, \bo u)^{-1}(ds, du)\right)\\
&=\pi(A\cap B)
+\int\delta_{\bo s}(B)\delta_{s'}(A)\dens(s') \lambda'(ds')
=\pi(A\cap B)+\pi(A\cap \bo s^{-1}(B))=\pi(A\cap \funcg^{-1}(B))
\end{align*}
whereby in $*$, we have set the proportionality constant in (\ref{eq:metgibpost}) to $1\slash\left(\dens(s)+\bcc(s)\right)$.
\end{proof}

Consequently, in accordance with Lemma \ref{lem:metgibbs}, the proposal for transitioning from $s'\in\states'$ to $\states$ complies with $\delta_{\funcg(s')}$ and thus applies $\bo s$ deterministically. 
In turn, as in post-hoc translations, the backward move may exploit the form of $\int \delta_{(\bo s, \bo u)^{-1}}(\LargerCdot)\nu(du)$.
To this end, we employ a kernel density $\densk$ from $(\states,\mcal{A})$ to $(\ustates,\mcal{B}, \nu)$.
A small contemplation yields that the acceptance probability for transitions from $s'\in\states'$ to $\states$ then reads
\begin{align*}
\min\left\{1,\frac{\dens(\bo s)\densk(\bo s, \bo u)}
{\dens(s')f(\bo s, \bo u)}\cdot\frac{\moveprob(\bo s)}{\moveprob(s')}\right\}
\end{align*}
with move probability $\moveprob$.
This is the same acceptance probability as in the corresponding post-hoc approach.

Please note that the Metropolis-within-Gibbs algorithm doesn't capture all post-hoc translations.
For translations that are not SDT's, there is hardly a suitable $\funcg$ available that yields the same sampling scheme.
To see this, consider the relation defined by the viable transitions under the two translation functions that belong to a move and its backward move.
$\funcg$ is build from the partition of the two state spaces derived from the implied equivalence relation.
However, for non-SDT's this partition can easily contain the spaces as a whole rendering them useless.



\subsection{A \cp example}
\label{chap:cpexample}

In this section, we scrutinize our theory by means of a small \cp example.
To this end, we consider three different implementations for so-called birth and death moves which either add or remove \cps.
Firstly, we employ very plain and unsophisticated proposals.
Secondly, we apply ad-hoc translations that incorporate maximizers of certain partial likelihoods.
Finally, we use tightly adapted post-hoc translations in the fashion of the reversible jump algorithm.

It turns out that the ad-hoc and post-hoc approaches perform equally well on this example in terms of their computation times and acceptance rates.
In contrast, the plain approach performs poorly, which justifies the need of sophisticated ad-hoc and post-hoc translations.

\begin{figure}[ht]
	\includegraphics[width=1\linewidth]{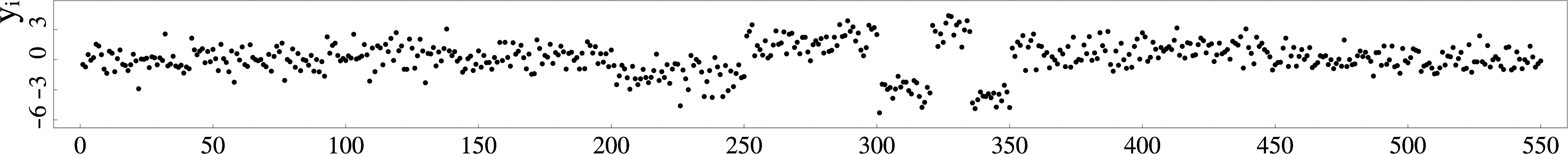}
	\caption[Gaussian change in mean example dataset]{Gaussian change in mean example dataset.}
	\label{pic:guassmhrj}
\end{figure}

Figure \ref{pic:guassmhrj} shows an artificial dataset.
The $n=550$ datapoints were drawn independently from a normal distribution with variance 1 and mean values that where subject to 9 changes.

To build an exemplary Bayesian model here, we choose a prior for the \cp locations and mean values, i.e. the segmentation and its heights.
The time from one \cp to the next is geometrically distributed with $q=3\slash550$ and the segment heights in case of a jump and at the beginning are distributed according to $\mcal{N}(0, 25)$.
The data at timepoint $i=1,\ldots,n$ also follows a normal distribution with its respective mean at $i$ and variance equal to 1.

The sampling starts with no \cps and an overall segment height of 0.
Subsequently, new \cp locations may be found or discarded through \emph{birth} and \emph{death} moves.
Additionally, we employ \emph{shift} moves to shift single \cps and \emph{adjust} moves to adjust single segment heights.

Each move only manipulates a subset of the segment heights or a single \cp and leaves the rest as it is.
The Metropolis-within-Gibbs framework as well as post-hoc translations support such moves and they can be applied directly without any extra effort.

Please note that the backward moves for adjust and shift are again adjust and shift.
Similarly, death and birth will be reversed by birth and death, respectively.
This arises naturally here, since each of the moves has its very own and unique complementary move, e.g. a birth can only be reversed by a death.

An adjust move relocates the old height $h'$ of a uniformly chosen segment according to $\mcal{N}(h', 0.5)$.
In turn, a shift move shifts the location of a uniformly chosen \cp uniformly to a new position within the neighboring \cps.

Let $\phi(x\semic \mu,\sigma^2)$ be the density of the univariate normal distribution with mean $\mu$ and variance $\sigma^2$ evaluated at $x$.
Adjusting the height $h'$ of a segment $S$ to $h$ yields an acceptance probability of
\begin{align*}
\min\left\{1,\quad \frac{\phi(h\semic 0,25)}{\phi(h'\semic 0,25)}\prod\limits_{j \in S}\frac{\phi(y_j\semic h, 1)}{\phi(y_j\semic h', 1)}\quad\right\}
\end{align*}
Note that the probabilities of choosing the shift move, the segment and the density values of the datapoints within the untouched segments, cancel out.
Furthermore, since the normal distribution is symmetric w.r.t. its mean, the proposal densities cancel out as well.
What remains are the density values for the datapoints within the adjusted segment and the priors for the segment heights.

Consider $1$ and $n+1$ to be immutable auxiliary \cps.
In the shift move, we shift an existing \cp, say at location $i$, uniformly to a new location, say $j$, within the neighboring \cps.
Let $\ell$ and $k$ be the \cps to the left and right of the \cp at $i$ and let $h_1$ respectively $h_2$ be the corresponding segment heights.
The acceptance probability for the shift move reads
\begin{align*}
&\min\left\{1,\quad \frac{\prod\limits_{m=\ell}^{j-1} \phi(y_m\semic h_1, 1)\prod\limits_{m=j}^{k-1} \phi(y_m\semic h_2, 1)}{\prod\limits_{m=\ell}^{i-1} \phi(y_m\semic h_1, 1)\prod\limits_{m=i}^{k-1} \phi(y_m\semic h_2, 1)}\quad\right\}
\end{align*}
Here, the probabilities of choosing the move and the \cp, the priors for the heights and the density values of the data within the untouched segments cancel out.

\subsubsection{A plain implementation of birth and death}
\label{chap:plainbd}
In the death move, we choose a \cp uniformly, remove it and propose a new height for the remaining segment according to $\mcal{N}(0, 25)$.
In the birth move, we propose a new \cp uniformly among the timepoints without a \cp.
The two heights left and right of the new \cp are then proposed independently according to $\mcal{N}(0, 25)$.

Let $p_d$ respectively $p_b$ be the probabilities to choose a death or a birth move, respectively.
We perform a death move.
Therefore, let $i$ be the timepoint of the \cp that is to be removed and 
let $\ell$ and $k$ be the \cps to the left and right of the \cp at $i$ and let $h_1$ and $h_2$ be the corresponding segment heights.
Finally, let $h$ be the ensuing single segment height.
The acceptance probability reads
$\min\left\{1,\ \frac{n-c}{c+1}
\cdot\frac{p_b}{p_d}
\cdot L_{\ell, i, k}\right\}$,
whereby $c+1$ is the number of \cps without $1$ and $n+1$ and 
\begin{align*}
L_{\ell,i, k}=\frac{\prod\limits_{m=\ell}^{k-1}\phi(y_m\semic h, 1)}{\prod\limits_{m=\ell}^{i-1} \phi(y_m\semic h_1, 1)\prod\limits_{m=i}^{k-1} \phi(y_m\semic h_2, 1)}
\end{align*}

In order to derive the acceptance probability for the corresponding birth move, we just take the reciprocal of the acceptance ratio.

\subsubsection{An ad-hoc implementation of birth and death}
\label{chap:ad-hocimpl}

Here, we utilize parts of the likelihood of the model.
The likelihood of the data within a single segment can be maximized w.r.t. its segment height just by choosing the mean of the involved datapoints.
Thus, by proposing new segment heights close to these means we might propose sensible states that yield good acceptance rates.
To this end, our sampling approach proposes new segment heights through a normal distribution with variance $0.01$ and mean equal to the empirical mean of the associated data.

Let $\tau^2=0.01$ and for $1\leq \ell\leq k\leq n$ define $\mu_{\ell:k}=\sum_{m=\ell}^{k-1}\frac{y_m}{(k-\ell)}$.
By using the notation of Section \ref{chap:plainbd}, the acceptance probability for the death move reads
\begin{align*}
\min\left\{1,\ 
\frac{n-c}{c+1}
\cdot\frac{p_b}{p_d}
\cdot L_{\ell,i, k}\cdot \frac{\phi(h\semic 0,25)}{\phi(h_1\semic 0,25)\phi(h_2\semic 0,25)}\cdot \frac{\phi(h_1\semic \mu_{l:i}, \tau^2)\phi(h_2\semic \mu_{i:k}, \tau^2)}
{\phi(h\semic \mu_{l:k}, \tau^2)}\quad\right\}
\end{align*}

\subsubsection{A post-hoc implementation of birth and death}
Now, we design birth and death moves by virtue of the reversible jump paradigm.
It is crucial hereby to find a carefully adapted diffeomorphism.
As we have argued already, a good estimator for a segment height is the empirical mean of the observations belonging to it.
If $h$ is this mean and we split the segment, say from $\ell$ to $k$, into two parts through a \cp at $i$, such that the lengths of the resulting segments are $n_1=i-\ell$ and $n_2=k-i+1$ and their associated mean values are $h_1$ and $h_2$, we see that $h=\frac{n_1h_1+n_2h_2}{n_1+n_2}$.

Furthermore, in order to build a diffeomorphism thereof, we introduce an auxiliary variable $u$.
A sense of parsimony is achieved by stipulating that $u\in\mathbb{R}$.
We treat $u$ as part of the target space by setting $u=h_2$.
That way, $(h,u)$ describes a diffeomorphism with Jacobi determinant $\frac{n_1}{n_1+n_2}$.

The opportunity to embed the auxiliary variables directly into the target space often remains unnoticed in practical implementations. 
Instead, $u$ would be sampled independently, which is prone to forfeit acceptance rate by proposing unsuited states.

This yields an SDT, whereby we rely on the reasoning of Section \ref{chap:ad-hocimpl} and draw $u$ from $\mcal{N}(m_{i,k}, \tau)$ in the birth move.
Therewith, the acceptance probability for the death move reads
$\min\left\{1,\ \frac{n_1}{n_1+n_2}
\cdot\frac{n-c}{c+1}\cdot\frac{p_b}{p_d}
\cdot L_{\ell,i, k}\cdot \frac{\phi(h\semic 0,25)}{\phi(h_1\semic 0,25)\phi(h_2\semic 0,25)}\cdot \phi(h_2, m_{i,k}, \tau) \right\}$.

\subsubsection{Convergence}
Here, we want to verify the convergence of our three different Markov chains towards the posterior distribution, say $\pi$, that is determined through the \cp model and data.
For this sake, we check irreducibility and aperiodicity as required by Theorem 1 in \citep{tierney1994markov}.
Additionally, Harris recurrence follows from Corollary 2 there.

We stipulate that the move probabilities are all positive.
Consequently, the acceptance probabilities for all proposed transitions will as well be positive.
Therefore, given an arbitrary state, in all three cases, any \cp configuration can be occupied with positive probability through a finite amount of birth, death and shift moves.

In order to prove irreducibility, at first, we consider events, say $E$, whose elements all exhibit the same \cp configuration.
For $E$ to be positive under $\pi$, the set of possible segment heights for each segment must be positive under an arbitrary univariate normal distribution.
Thus, starting from a state that exhibits the very same \cp configuration as $E$, transitioning into $E$ can be done with positive probability by applying adjust moves to each segment height correspondingly.
Again, this requires only a finite number of steps.

Now, take an arbitrary event that is positive under $\pi$ and partition it by the different \cp configurations present.
Due to the finiteness of this partition, one of the ensuing subsets must be positive under $\pi$.
Given an arbitrary state, we reach the \cp configuration of this subset and then transition into this subset, both with positive probability through a finite amount of moves.
This shows the irreducibility.

To see the aperiodicity, consider the two sets of states that can be reached through one or two successive applications of the adjust move. 
These sets overlap significantly, contradicting periodicity.
Thus, the chain is aperiodic.

Together with the invariance, ensured by the particular forms of the acceptance probabilities, we conclude that all three Markov chains converge in total variation towards $\pi$, irrespective of the state they are started in.

\subsubsection{Results and conclusions}
We want to compare our proposals empirically in terms of their runtime and acceptance rates.
To this end, we set all four move probabilities to $0.25$, except in the boundary cases.
If there is only one segment, birth and adjust have the same probability of 0.75 and if each timepoint accommodates a \cp then death and adjust have the probabilities 0.5 respectively 0.25.

\begin{table}[ht]
	\begin{center}	
		\begin{tabular}{ l || c | c | c | c}
			Acceptance rates & Death move & Birth move & Shift move & Adjust move   \\
			\hline\hline
			Plain proposals & 0.0021 & 0.0022 & 0.0681 & 0.2896\\
			Ad-hoc translations & 0.0588 & 0.0594 & 0.0678 & 0.2904\\			
			Post-hoc translations & 0.0639 & 0.0645 & 0.0681 & 0.2899\\
		\end{tabular}
	\end{center}
	\caption[Acceptance rates for particular moves]{Acceptance rates for particular moves.}
	\label{fig:fractable}
\end{table}

For each of the three Markov chains, we simulated $10^{7}$ successive links.
On my computer (AMD 64 with a 3200 GHZ CPU), the sampling algorithm for the plain proposals and the post-hoc translations needed twelve seconds and the ad-hoc approach took 13 seconds.

Table \ref{fig:fractable} shows the acceptance rates for the individual moves.
It becomes apparent that the plain birth and death moves struggle with finding alternate \cp configurations.
Thus, the Markov chain performs poorly in exploring the state space, which impairs the mixing time.

In contrast, our ad-hoc and post-hoc translations compare well with each other, though both exhibit fundamentally different strategies in proposing new segment heights.
Several experiments revealed that an acceptance rate for birth and death moves of more than 6.5\% is infeasible as long as the new \cp locations are chosen in a plain uniform manner.
Thus, the achieved rates are sound, however, higher rates are usually considered as better \citep{bedard2008optimal}.

This shows that sophisticated approaches like ad-hoc and post-hoc translations are key for sampling within mixed spaces.
Whilst our post-hoc proposals can be regarded as ambitiously developed, the ad-hoc proposals constitute a simple independence sampler that was derived from straightforward ideas.
This indicates the great potential of ad-hoc translations.
In turn, post-hoc translations may reduce the number of random variables to be generated and therewith result in more tightly adapted proposals that exhibit a reduced computational runtime.

\subsection{The Metropolis-Hastings algorithm in abstract terms}
\label{chap:genmh}
So far, we have considered different strategies to implement the Metropolis-Hastings algorithm based on densities and measure spaces.
Conversely, \cite{tierney1998note}, provides the theoretic foundations for the Metropolis-Hastings algorithm to work with arbitrary choices of $\pi$ and $\proposal$ directly.
At the same time, it introduces the notion of a maximal Metropolis-Hastings algorithm.
Even though this concept remained widely unrecognized, it gives important advice on the design of proposals and allows for a classification of Metropolis-Hastings approaches.

In this section, we will summarize the main outcomes of \cite{tierney1998note} and put them in line with the results of \cite{roodaki2011} for mixture proposals.
Subsequently, we examine maximality with regards to our own methods.

Given a measurable space $(\states,\mcal{A})$, for a set $A\in\mcal{A}\otimes \mcal{A}$ we define the \emph{transpose} of $A$ as $A^t=\left\{(s,s')\mid (s',s)\in A\right\}$.
We leave it to the reader to show that $A^t\in\mcal{A}\otimes \mcal{A}$ and that $\pi\otimes\mu(\LargerCdot)=\pi\otimes\mu(\LargerCdot^t)$ if and only if $\mu$ preserves the detailed balance w.r.t. $\pi$.

In the following, we are given an arbitrary probability space $(\states,\mcal{A},\pi)$ and a Markov kernel $\proposal$ from $(\states,\mcal{A})$ to $(\states,\mcal{A})$.
The idea is to use
\begin{align*}
\ratio_{s's}=\frac{\pi\otimes\proposal(ds,ds')}{\pi\otimes\proposal(ds',ds)}
\end{align*}
as the acceptance ratio.
$\ratio_{s's}$ is the density of $\pi\otimes\proposal(\LargerCdot^t)$ w.r.t. $\pi\otimes\proposal$ evaluated at $(s',s)$.

At first, we need to find a set where $\ratio_{s's}$ is guaranteed to exist.
To this end, consider $\xi(\LargerCdot)=\pi\otimes\proposal(\LargerCdot)+\pi\otimes\proposal(\LargerCdot^t)$.
$\pi\otimes\proposal$ and $\pi\otimes\proposal(\LargerCdot^t)$ are absolutely continuous w.r.t. $\xi$.
Thus, the theorem of Radon-Nikodym \citep{nikodym1930generalisation} ensures the existence of $\frac{\pi\otimes\proposal(ds',ds)}{\xi(ds',ds)}$ and $\frac{\pi\otimes\proposal(ds,ds')}{\xi(ds',ds)}$.
Let further 
\begin{align*}
\gibbsrel=\Bigg\{(s',s)\bbmid\frac{\pi\otimes\proposal(ds',ds)}{\xi(ds',ds)}>0, \frac{\pi\otimes\proposal(ds,ds')}{\xi(ds',ds)}>0\Bigg\} 
\end{align*}
On $\gibbsrel$ the density $\ratio_{s's}$ exists and equals the ratio of $\frac{\pi\otimes\proposal(ds,ds')}{\xi(ds',ds)}$ and $\frac{\pi\otimes\proposal(ds',ds)}{\xi(ds',ds)}$.

\begin{lemma}[\cite{tierney1998note}]
	\label{lem:genmh}	
	The transition kernel $\mu$ in (\ref{eq:mhtranskern}) with acceptance probability 
	\begin{align}
	\label{eq:accratio}
	\acc_{s' s}=\begin{cases}
	\min\{1,\ratio_{s's}\}& (s',s)\in \gibbsrel\\
	0& \text{otherwise}
	\end{cases}
	\end{align}
	preserves the detailed balance w.r.t. $\pi$.
	Furthermore, $\acc_{s' s}$ is $\xi$ almost surely maximal over all such acceptance probabilities.
\end{lemma}
\begin{proof} 
    By considering the symmetry of $\xi$ and $\gibbsrel$, the detailed balance can be proven in the same fashion as Lemma \ref{lem:mh}.	
    Furthermore, any proper acceptance probability $\hat \acc_{s's}$ satisfies $\frac{\pi\otimes\proposal(ds',ds)}{\xi(ds',ds)}\hat\acc_{s's}=\frac{\pi\otimes\proposal(ds,ds')}{\xi(ds',ds)}\hat\acc_{ss'}$ almost surely w.r.t. $\xi$.
    Thus, on $\gibbsrel$ we get $\hat\acc_{s's}=\ratio_{s' s}\hat\acc_{ss'}\leq\min\{1, \ratio_{s' s}\}=\acc_{s's}$ and a small contemplation yields that outside of $\gibbsrel$ we get $\hat\acc_{s's}=0$.
\end{proof}

Lemma \ref{lem:genmh} allows us to speak of the \emph{maximal Metropolis-Hastings algorithm} for the proposal $\proposal$ when we use $\acc_{s' s}$ as in (\ref{eq:accratio}).

$\gibbsrel$ is $\xi$ almost surely uniquely defined and comprises all transitions an arbitrary Metropolis-Hastings sampler with proposal $\proposal$ and target distribution $\pi$ can carry out.

Take, for example, an SDT with translation function $\bo s$ and a state $s$ such that $\pi(\{s\})=0$ but $\pi\left(\bo s^{-1}(s)\right)>0$.
Let $T=\bo s^{-1}(s)\times \{s\}$.
Since $\xi\left(T\right)>0$ and $\pi\otimes\proposal\left(T^t\right)=0$, we see that $\gibbsrel$ and $T$ are $\xi$ almost surely disjoint.
Thus, the transitions defined by $T$ are not accessible for this SDT.
In order to take advantage of such degenerated translation functions, we should resort to the ad-hoc paradigm instead.
Section \ref{chap:ad-hocimpl} provides an example for this.

\subsubsection{Mixture proposals in general and maximality in particular}
\label{chap:mixtures}

So far, we have build the acceptance probabilities for mixture proposals from unique pairs of forward and backward moves.
However, the computation of the acceptance probability for the maximal Metropolis-Hastings algorithm involves the mixture proposal as a whole.
Consequently, if the two approaches are different, the pairwise one is prone to yield smaller acceptance rates. 

Nevertheless, the pairwise approach is an important measure of convenience.
It exonerates us from evaluating the full, perhaps infinite range of component proposals.
This gives rise to the following lemma that provides sufficient conditions for the pairwise approach to satisfy the maximality.

\begin{lemma}[\cite{roodaki2011}]
	\label{lem:mixmh}
	If for all $\ell\in\moves$, there exist disjoint $Z_{\ell}\in\mcal{A}\otimes\mcal{A}$ with
	\begin{align*}
	&\textbf{(I) }\quad\int_{\states\times \states \backslash Z_\ell}\moveprob_\ell(s')\pi\otimes\proposal_\ell(ds',ds)=0\\
	&\textbf{(II) }\quad Z_{\ellr}=Z_\ell^t
	\end{align*}	
	we get 
	\begin{align}
	\label{eq:mixpairdens}
	&\frac{\pi\otimes\proposal(ds, ds')}{\pi\otimes\proposal(ds', ds)}=\frac{\moveprob_{\ellr}(s)}{\moveprob_\ell(s')}\frac{\pi\otimes\proposal_{\ellr}(ds, ds')}{\pi\otimes\proposal_\ell(ds', ds)}	
	\end{align}	
	for $(s',s)\in Z_\ell$.
\end{lemma} 
\begin{proof}
	Define $\ratio_{s's}^\ell=\frac{\moveprob_{\ellr}(s)}{\moveprob_\ell(s')}\frac{\pi\otimes\proposal_{\ellr}(ds, ds')}{\pi\otimes\proposal_\ell(ds', ds)}$. 
	For $(s',s)\in Z_\ell$ we get
	\begin{align*}
	&\ratio_{s's}^\ell\pi\otimes \proposal(ds',ds)
	=\ratio_{s's}^\ell\moveprob_\ell(s')\pi\otimes \proposal_\ell(ds',ds)
	=\moveprob_{\ellr}(s)\proposal_{\ellr}(ds,ds')=\pi\otimes \proposal(ds,ds')
	\end{align*}
	This holds since \textbf{(I)} and \textbf{(II)} state that we can set $\moveprob_k(s')\pi\otimes \proposal_k(ds',ds)=\moveprob_\kr(s)\pi\otimes \proposal_\kr(ds,ds')=0$ for all $k\not =\ell$.
\end{proof}

Strangely enough, \cite{roodaki2011} seem to make a different statement with their theorem.
They point out when the acceptance probabilities can be derived from pairs of component moves.
Therewith, they appear to miss that this is always viable as long as the pairs are unique, a statement that has already been known from \citep{tierney1998note}.

Under the assumption that the conditions \textbf{(I)} and \textbf{(II)} of Lemma \ref{lem:mixmh} are met, we may ask for the maximality of the approaches considered in this work.
As a rule, it is sufficient to show that for each move $\ell$ the acceptance ratios correspond to the r.h.s. of (\ref{eq:mixpairdens}).
This is straightforward for the primal Metropolis-Hastings algorithm and ad-hoc translations, and also for (\ref{eq:accdiscpost}) and (\ref{eq:metgibbsacc}).

However, post-hoc translations that accept within the auxiliary space are deviant.
In the context of Lemma \ref{lem:post-hoc}, we define 
\begin{align*}
\ratio_{s' u'}^\ell=\frac{\dens(\bo s_{\ell})\densk_{\ellr}((\bo s, \bo u)_{\ell})}{\dens(s')\densk_{\ell}(s',u')}\cdot f_{\ell}(s',u')
\end{align*}
Furthermore, let $\proposal_{\ell}(s',\LargerCdot)=\int \delta_{\bo s_{\ell}}(\LargerCdot)\densk_{\ell}(s', u') \nu_\ell(du')$ be the proposal of move $\ell$.

\begin{lemma}
\label{lem:rvjproposdens}
For move $\ell$ with transitions from $s'\in\states_j$ to $s\in\states_i$, we get
\begin{align*}
\frac{\pi\otimes\proposal_{\ellr}(ds,ds')}{\pi\otimes\proposal_{\ell}(ds',ds)}=\expec{\ratio_{s' \rand U}^\ell\bmid \bo s_{\ell}(s',\rand U)=s}
\end{align*}
whereby $\rand U$ is a random variable distributed according to $\int_\LargerCdot\densk_{\ell}(s', u')\nu_\ell(du')$.
\end{lemma}
\begin{proof}
This holds since for $A\in\mathcal{A}_j$ and $B\in\mathcal{A}_i$
\begin{align*}
&\int_A \int_B \expec{\ratio_{s' \rand U}^\ell\bmid \bo s_{\ell}(s',\rand U)=s}\proposal_{\ell}(s',ds)\pi(ds')\\
&=\int_A \int \expec{\ratio_{s' \rand U}^\ell\bmid \bo s_{\ell}(s',\rand U)=\bo{s}_{\ell}(s', u')}\delta_{\bo{s}_{\ell}}(B)\densk_{\ell}(s',u')\nu_\ell(du')\pi(ds')\\
&\overset{*}{=}\int_A \int \delta_{\bo{s}_{\ell}}(B)\ratio_{s' u'}^\ell\densk_{\ell}(s',u')\dens(s')\nu_\ell(du')\lambda(ds')\\
&=\int_A\delta_{\bo s_{\ell}}(B) \int_B \dens(\bo s_{\ell})\densk_{\ellr}((\bo s, \bo u)_{\ell})f_{\ell}(s',u')\nu_\ell(du')\lambda(ds')\\
&\overset{**}{=}\int_B \int \delta_{\bo s_{\ellr}}(A)\densk_{\ellr}(s,u)\dens(s)\nu_\ellr(du)\lambda(ds)=\pi\otimes\proposal_{\ellr}(B\times A)
\end{align*}
whereby we have used the basic properties of the conditional expectation in $*$ and condition \textbf{(I)} and \textbf{(II)} of Lemma \ref{lem:post-hoc} in **.
\end{proof}

Thus, the acceptance probability for move $\ell$ of the maximal Metropolis-Hastings algorithm for the post-hoc translations of Lemma \ref{lem:post-hoc} reads
\begin{align*}
\acc_{s's}^\ell=\min\left\{1,\ \frac{\moveprob_{\ellr}(s)}{\moveprob_{\ell}(s')}\cdot\expec{\ratio_{s' \rand U}^\ell\bmid \bo s_{\ell}(s',\rand U)=s}\right\}
\end{align*}
This states that given a previous link $s'$, we accept or reject the newly proposed state $s$ based on an expectation drawn from $\ratio_{s' u'}$ over those values of $u'$ that yield $\bo s_{\ell}(s',u')=s$.
If $\bo s_{\ell}(s',\LargerCdot)$ is injective, $\ratio_{s' u'}^\ell=\expec{\ratio_{s' \rand U}^\ell\bmid \bo s_{\ell}(s',\rand U)=\bo s_{\ell}(s',u')}$ holds and the post-hoc approach for this move is guaranteed to be maximal.
This is the case for all SDT's in the fashion of Lemma \ref{lem:post-hoc} and \ref{lem:metgibbspost-hoc}.

On the downside, consider a transition $(s',s)$ where the maximal Metropolis-Hastings algorithm exhibits an acceptance probability of one, i.e. $\int \ratio_{s' u'}\prob{\rand U\in du'\mid \bo s_{\ell}(s',\rand U)=s}\geq 1$.
The corresponding post-hoc translation approach accepts $(s',s)$ with probability one if and only if $1=\prob{\ratio_{s' \rand U}\geq 1\mid \bo s_{\ell}(s',\rand U)=s}$.
In non-injective cases, these two conditions can be very distinct and lead to significantly different results.

\subsection{Discussion and literature review}
\label{chap:discussion_trans}

In this section, we examined several practical and theoretical approaches to derive acceptance probabilities for the Metropolis-Hastings algorithm.
Due to their tricky characteristics, post-hoc translations received the most attention.
In accordance with \cite{green1995}, key to their implementation is that the accept-reject step is performed within the auxiliary space rather than the target space.
As a consequence, the translation functions need to be supplemented by auxiliary functions determining for each transition its unique backward transition.
On top of this bijective completion, further conditions are imposed in order to enable a certain integral transformation.

In an abstract consideration of the Metropolis-Hastings algorithm, we came across the notion of maximality as introduced by \cite{peskun1973optimum} and \cite{tierney1998note}.
Post-hoc translations may lose this maximality if they are not carefully implemented.
This renders SDT one of the most vital post-hoc approaches.
Interestingly, we found that SDT's are also accessible through the Metropolis-within-Gibbs paradigm.

\cite{tierney1998note} was able to identify the superset of possible transitions a Metropolis-Hastings sampler is able to carry out.
Therewith, we understood the limits and possibilities of SDT's better.
We found that translation functions that exhibit a certain degeneracy are ineffective in post-hoc approaches, a point that underlines the utter relevance of ad-hoc translations.


All considered methods made intense use of mixture proposals.
They feature a modular design and are therefore of high practical value.
It is further beneficial to compute the acceptance probabilities solely based on unique pairs of moves.
However, we should avoid an overlap of their supports since this may forfeit maximality.

So far, we solely considered mixture proposals w.r.t. a countable set of moves.
However, the set of moves can alternatively be part of any sigma finite measure space, whereby the move probabilities are replaced by a kernel density.
Consequently, summation over mixture components is replaced by integration.
As before, we need to define unique pairs of forward and backward moves.
An analogy to this is a post-hoc translation having an auxiliary variable that represents the move with an associated auxiliary function that determines the corresponding backward move.

Our purposive contemplation of ad-hoc as well as post-hoc translations, and Metropolis within Gibbs opens up innovative ways of choosing sampling schemes that satisfy various demands regarding simplicity and sophistication.
The author of this work hopes that this helps to eliminate the ubiquitous confusions surrounding transdimensional sampling.

In the remainder of this section, we discuss a sequence of far-reaching errors in reasoning.
They all share the conclusion that ordinary MCMC methods struggle with transdimensional sampling.

\cite{green1995} concludes wrongly that we have to use dimension matching  in order to pursue transdimensional moves.
On page 715 in \cite{green1995}, the acceptance probability solely utilizes single components of the mixture proposal.
This implies that the density used to perform a move must also be used for the corresponding backward move.
For transdimensional moves, this is obviously not possible and made it inevitably necessary to introduce the dimension matching condition.
Thus, Green's undeniably great idea appears to be the result of a simple error in reasoning.

\cite{carlin1995bayesian} claim, by referring to \cite{tierney1994markov}, that transdimensional moves create absorbing states and therefore violate the convergence of Markov chains.
Unfortunately, \cite{tierney1994markov} doesn't seem to provide this statement.
In order to circumvent this putative problem, they propose to work on the product space $\prod_{i\in\spaces}\states_i$ instead.
Frankly speaking, this adds a huge burden just to avoid the inhomogeneous nature of the spaces.
According to Google scholar, \cite{carlin1995bayesian} was cited over 1000 times, though a few of them question the necessity of this approach, e.g. \cite{godsill2001relationship, green2003trans, green2009reversible, green1998model}.

In a more measure theoretic context, \cite{chen2012} comes to the conclusion that usual MCMC cannot transition across spaces of different dimensions.
They argue with the lack of dominating measures and therewith ignore the fact that each countable collection of sigma finite measures indeed exhibits a common dominating measure.

As we have seen, the acceptance probability of the reversible jump algorithm is not necessarily maximal in the sense of Lemma \ref{lem:genmh}.
Interestingly, \cite{green1995} refers to \cite{peskun1973optimum} in order highlight the maximality of his choice of acceptance probability.

These misperceptions have introduced a significant burden to the field of transdimensional sampling.
The concerning literature is difficult to overview due to inconsistent claims and full of poor mathematical language due to the absurd overemphasis on dimensionality.
As a result, it sometimes feels like reading some sort of star trek novel.

Finally, the name ``reversible jump" apparently stems from Green's wrong conviction that he constructed a method that overcomes the non-reversibility of transdimensional moves in the Metropolis-Hastings algorithm.
Thus, this name lacks a proper meaning and collides awkwardly with ``reversible MCMC" making these topics even harder to grasp for newcomers.


\newpage\null\newpage
\section{Exact inference in Bayesian \cp models}
\label{chap:inference}

A sequence of posterior samples obtained from a \cp model and data allows us to derive various quantities like the expected number of \cps, a \cp histogram and many more.
\cite{siekmann2011mcmc} gives an impressive demonstration of what can be inferred from posterior samples based on ion channel data.

However, the number of possible \cp configurations grows exponentially w.r.t. the data size and thus, only fractions of them can be covered by samples.
This might often be sufficient as there are usually comparatively few relevant ones.
Nonetheless, we can only be sure of the accuracy of our approaches if we apply exact inference algorithms, which is indeed feasible at times.

In this section, we elaborate exact inference strategies for a certain, but still broad, class of Bayesian \cp models.
A range of such algorithms already exists.
However, across those papers the mathematical notations are quite diverse and in parts they convey similar contents.

The following list provides the Bayesian \cp papers that are most relevant to this section. 
\cite{fearnhead_online, ocpd} and \cite{lai2011simple}, elaborate, inter alia, online \cp detection algorithms.
We can also find retrospective inference strategies like exact sampling \citep{exact_fearnhead}, \cp entropy \citep{guedon2015segmentation}, \cp histograms \citep{nam2012quantifying, Rigaill2012, Turner_gaussianprocesses, aston2012implied}, and approaches for parameter estimation \citep{lai2011simple, yildirim2013online, bansal2008application}.

On this basis, we will develop novel inference algorithms.
Though, in order to provide a comprehensive understanding of Bayesian \cp analysis and its capabilities, we will implement the bulk of the existing approaches into our own lightweight and simple notational framework.
This framework resembles that of \cite{fearnhead_online} most.

Meanwhile, we will keep an eye on inference in hidden Markov models \citep{rabiner1986introduction}, hidden semi-Markov models \citep{hsmm} and the Kalman-filter \citep{kalman_filter, rauch_striebel_tung}.
This helps clarifying that inference in Bayesian \cp models is closely related to inference in these more basic models.
General terms for corresponding inference strategies comprise Bayes filtering and smoothing \citep{sarkka2013bayesian}.

The basic innovation provided in this section concerns the development of very efficient pointwise inference.
This paves the way for the computation of a wide range of pointwise expectations w.r.t. all timepoints at once with an overall linear complexity.

These pointwise statements greatly facilitate the understanding of posterior distributions and enable a novel EM algorithm \citep{emalgorithm}.
This algorithm finds local maximizers of the likelihood marginalized over the unknown segmentation and is used for parameter estimation or model fitting in general.
Under good conditions, each EM step exhibits a linear space and time complexity.

The reader will learn how to derive the formulas needed to build an EM algorithm based on a broad class of \cp models.
These formulas employ elementary expectations, which may also be approximated through sampling via MCMC.
By this means, we are even able to maximize a pivotal likelihood locally without knowing its functional form.
Approximate sampling strategies related to \cp models via MCMC are discussed in Section \ref{chap:transdim}.

This whole section is accompanied by a Laplacian change in median example.
Though computationally and statistically attractive, the Laplace distribution appears to be fairly unconsidered within the Bayesian \cp community.
We make intense use of vivid visualization schemes to reveal its strengths.
It turns out that the Laplace distribution can be employed efficiently and suits \cp data that exhibits strong outliers.
In our example, these outliers even appear in a highly asymmetric fashion.
This poses no problem since the resulting posterior distributions address this by developing skewness.

The majority of the considered algorithms are dynamic programming algorithms \citep{bellman1966dynamic} implemented in purely discrete terms.
This fact accounts for their efficiency.
Under good conditions they exhibit a quadratic complexity w.r.t. the data size. 
By a slight approximation through a pruning approach, similar to that of \cite{exact_fearnhead}, the complexity may further reduce to linear.
This renders our algorithms fit for big data purposes.

In turn, the non-discrete part of our algorithms can be traced back completely to elementary integrals.
Therewith, our approach is accessible to all sorts of methods for recursive Bayesian estimation or numerical integration.

Our inference algorithms are applicable to a broad class of distributions, most notably the exponential family.
However, in order to draw inference in an exact manner, we need to put limitations on the \cp model.

We assume that the segment heights and lengths are mutually independent and that the observations are independent given the segment heights.
The independence assumptions imply that \cps partition the whole random process into independent parts.
This greatly reduces the computational complexity of the algorithms.

The structure of this section is as follows.
In Section \ref{chap:modas}, we put our Baysian \cp model and its random variables in concrete terms and display the corresponding graphical model.
Section \ref{chap:forwinf} develops the forward inference and discusses it in connection with the exponential family and the Laplace distribution. 
It further provides a pruning scheme and applies the results to the well-log data set.
Section \ref{chap:backward_inference} is concerned with backward inference.
Therefore, it develops algorithms to compute the MAP estimator, posterior samples and entropy.
Section \ref{chap:pointwise_spike} elaborates novel pointwise inference and applies its results to the well-log example.
Section \ref{chap:param} elaborates parameter estimation in \cp models via the likelihood.
To this end, it develops an EM algorithm and provides a thorough approach to estimate certain parameters in the well-log example.
Thereafter, the EM algorithm will be scrutinized in a more frequentist fashion.
Section \ref{chap:discussion_spike} concludes with a discussion.

\subsection{Model assumption}
\label{chap:modas}

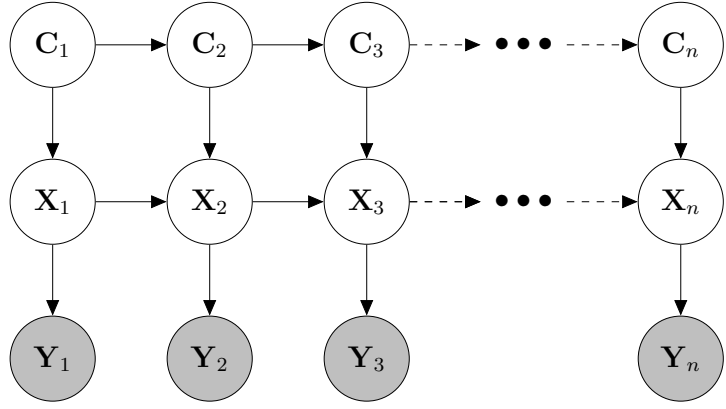
\begin{figure}[ht]
	\centering
	\resizebox{0.6\textwidth}{!}{%
		\begin{tikzpicture}[inner sep=-1mm, minimum size=12mm,
		obs/.style={circle,draw,fill=gray!50},
		rand/.style={circle,draw},
		arrow/.style={arrows={-triangle 45}} 
		]
		]
		\node (R1) [rand] {$\rand C_1$};
		\node (X1) [rand, below= of R1] {$\rand X_1$};
		\node (Y1) [obs, below= of X1] {$\rand Y_1$};
		\node (R2) [rand, right=of R1] {$\rand C_2$};
		\node (X2) [rand, below=of R2] {$\rand X_2$};
		\node (Y2) [obs, below=of X2] {$\rand Y_2$};
		\node (R3) [rand, right=of R2] {$\rand C_3$};
		\node (X3) [rand, below=of R3] {$\rand X_3$};
		\node (Y3) [obs, below= of X3] {$\rand Y_3$};
		\node (RR) [circle, right=of R3] {$\enspace\bullet\bullet\bullet\enspace$};
		\node (XX) [circle, right=of X3] {$\enspace\bullet\bullet\bullet\enspace$};
		\node (Rn) [rand, right= of RR] {$\rand C_n$}; 
		\node (Xn) [rand, below= of Rn] {$\rand X_n$};
		\node (Yn) [obs,below=of Xn] {$\rand Y_n$};
		\draw [->, arrow] (X1.east) to (X2.west);
		\draw [->, arrow] (R1.east) to (R2.west);
		\draw [->, arrow] (X2.east) to (X3.west);
		\draw [->, arrow] (R2.east) to (R3.west);
		\draw [dashed,->, arrow] (X3.east) to (XX.west);
		\draw [dashed,->, arrow] (R3.east) to (RR.west);
		\draw [->, arrow] (X1.south) -- (Y1.north);
		\draw [->, arrow] (X2.south) -- (Y2.north);
		\draw [->, arrow] (X3.south) -- (Y3.north);
		\draw [->, arrow] (Xn.south) -- (Yn.north);
		\draw [->, arrow] (Rn.south) -- (Xn.north);
		\draw [->, arrow] (R1.south) -- (X1.north);
		\draw [->, arrow] (R2.south) -- (X2.north);		
		\draw [->, arrow] (R3.south) -- (X3.north);		
		\draw [dashed,->, arrow] (X3.east) -- (XX.west);
		\draw [dashed,->, arrow] (XX.east) to (Xn.west); 
		\draw [dashed,->, arrow] (RR.east) to (Rn.west); 
		\end{tikzpicture}
	}
	\caption[Graphical representation of the \cp model]{The Bayesian network of our \cp model. 
		Each node represents a random variable and the arrows determine how the model factorizes into the initial distributions.
		The gray nodes highlight the random variables that are observed.}
	\label{fig:inferencecpmodel}
\end{figure}%

Let $(y_1,\ldots,y_n)$ represent our observed data and $t_1<\ldots<t_n$ the corresponding observational timepoints.
For the sake of simplicity, we assume that \cps can only occur within the set $\{t_1,\ldots,t_n\}$.
Consequently, we refer to these timepoints solely by indexes from 1 to n.

For the timepoints $i=1,\ldots,n$, the position of the most recent \cp prior to $i$ is represented by the discrete random variable $\rand C_i\in \{0,\ldots,i\}$, whereby $\rand C_i=0$ represents the case where no \cps occur within $\{1, \ldots, i\}$.

For $0\leq j< i\leq n$, we assume that there exist $0\leq\qcc_{ji}\leq 1$ with
\begin{align}
\label{eq:ciprobs}
&\prob{\rand C_i=j}=\begin{cases}
\qcc_{01}& i=1 \text{ and } j=0\\
\prob{\rand C_{i-1}=j}\qcc_{ji}&\text{otherwise}
\end{cases}
\end{align}
$\qcc_{ji}$ corresponds to $\prob{\rand C_i=j\mid \rand C_{i-1}=j}$ and $\prob{\rand C_i=i}$ ensues indirectly therefrom. 
The $\qcc_{ji}$'s can be derived from discrete length distributions as the complement of the hazard rate.
Those length distributions are allowed to be non-stationary w.r.t. the time and may also depend on the distances between successive timepoints.
When necessary, the $\qcc_{0i}$'s implement residual times.

Let $(\lat,\latsigma)$ and $(\obs, \obssigma)$ be standard Borel spaces.
The segment height at $i$ is modeled through the $(\lat,\latsigma)$-valued random variable $\rand X_i$ with
\begin{align}
\label{eq:segheightdist}
(\rand X_1\mid \rand C_1)\sim\jprior\quad\text{ and }\quad
\big(\rand X_i\mid \rand C_i=j, \rand X_{i-1}=x\big)\sim\begin{cases}
\jprior& j=i\\
\delta_x & j<i
\end{cases}
\end{align}
whereby $\jprior$ is a probability measure over $\latsigma$.
$\jprior$ may change its form w.r.t. $i$, but we leave this out to keep the notation simple.

Finally, the random variable $\rand Y_i$ which is defined over the sigma finite measure space $(\obs, \obssigma, \omeasure)$ represents the observation at timepoint $i$.
Its distribution is determined through the conditional density
\begin{align*}
\dens(\rand Y_{i}=y\mid\rand X_i=x)=\prob{\rand Y_i\in dy\mid \rand X_i=x}\densslash \psi(dy)
\end{align*}
for $y\in\obs$ and $x\in\lat$.

We will later use these densities to derive conditional distributions based on the observations $(y_1,\ldots,y_n)$, whereby we
assume that $\int \prod_{\ell=j}^{i}\dens(\rand Y_\ell=y_\ell\mid \rand X_\ell=x)\jprior(dx)>0$ for all $0<j\leq i\leq n$.
This sort of elementary integrals are pivotal for the feasibility of the inference algorithms in this whole section.

The graphical model depicted in Figure \ref{fig:inferencecpmodel} reiterates how the initial distributions are composed in order to form the complete \cp model.  
It is of high value, because it allows us to quickly grasp conditional independence properties \citep{pearl1998graphical, bishop2014}.
In the following, it is assumed that the reader is capable of deducing these properties himself.

Nevertheless, context specific independence properties \citep{boutilier1996context} result from the concrete distributions in use and can't be read from this graph.
In our case, the most important is the independence induced by \cps, i.e.
\begin{align*}
(\rand X_{1:i-1}, \rand Y_{1:i-1}, \rand C_{1:i-1}\indep \rand X_{i:n}, \rand Y_{i:n}, \rand C_{i+1:n}\mid \exists\ \ell\geq i:\ \rand C_\ell=i)
\end{align*}
for $i=2,\ldots,n$.

\subsection{Forward inference}
\label{chap:forwinf}

Forward inference processes the data in a one by one fashion, examples are the forward algorithm in hidden Markov models \citep{rabiner1986introduction} or update and prediction in the Kalman-filter \citep{kalman_filter}.
These algorithms can be summarized by the term Bayes filter \citep{sarkka2013bayesian}.
\cite{ocpd, fearnhead_online} and \cite{lai2011simple} elaborate forward inference in \cp models.

It is used when data comes available in real time and needs to be processed straight away, e.g. for online detection of \cps.
In addition, it provides the foundation for the remaining inference.

The following two quantities are pivotal for forward inference. 
For $0<i\leq n$ and $0\leq j\leq i$ define
\begin{align}
\label{eq:segc}
&\ccc_{ji}=\prob{\rand C_i=j\mid \rand Y_{1:i}=y_{1:i}}\\
\label{eq:heightdens}
&\heights_{ji}=\prob{\rand X_i\in\LargerCdot\mid \rand C_i=j, \rand Y_{1:i}=y_{1:i}}
\end{align}
$\ccc_{ji}$ as well as $\heights_{ji}$ make statements concerning the $i$-th timepoint, whereby they incorporate all available observations hitherto.
$\ccc_{ji}$ expresses the probability that the most recent \cp prior to $i$ is $j$.
$\heights_{ji}$ is the distribution of the segment height at $i$ given that the segment starts at $j$.
In both cases, the actual right end of the segment that contains $i$ and all the subsequent observations remain unconsidered.

At first, we deal with the $\ccc_{ji}$'s.
They rely on the important auxiliary quantities 
\begin{align}
\label{eq:sego}
&\rcc_{ji}=\prob{\rand Y_i\in dy_i\mid \rand C_i=j,\rand Y_{j:i-1}=y_{j:i-1}}\densslash\psi(dy_i)
\end{align}
for $0<i\leq n$ and $0\leq j\leq i$.
We will justify their existence and elaborate their computation in Lemma \ref{lem:bayesstepfor} and Lemma \ref{lem:segheight} later on.
Given the $\rcc_{ji}$'s, the computation of the $\ccc_{ji}$'s is subject to a convenient iteration scheme as the following lemma shows.

\begin{lemma}
\label{lem:online}
For $0<i\leq n$ and $0\leq j \leq i$ we conclude that
\begin{align*}
&\ccc_{ji}=\frac{\rcc_{ji}}{\Zcc_i}\begin{cases}
\qcc_{ji}\ccc_{ji-1}& j<i\\
1-\sum_{j=0}^{i-1}\qcc_{ji}\ccc_{ji-1}& j=i
\end{cases}\\
&\Zcc_i=\Bigg(\sum_{j=0}^{i-1} \rcc_{ji}\qcc_{ji}\ccc_{ji-1}\Bigg)+\rcc_{ii}\Big(1-\sum_{j=0}^{i-1}\qcc_{ji}\ccc_{ji-1}\Big)
\end{align*}
with $\ccc_{00}=1$.
\end{lemma}
\begin{proof}
For $0< j<i\leq n$, since $\qcc_{ji}=\prob{\rand C_i=j\mid \rand C_{i-1}=j}$ and $(\rand C_i \indep \rand Y_{1:i-1}\mid \rand C_{i-1})$, we see that
\begin{align*}
&\prob{\rand C_i=j\mid \rand Y_{1:i-1}}
=\prob{\rand C_{i-1}=j\mid \rand Y_{1:i-1}}\prob{\rand C_i=j\mid \rand C_{i-1}=j}=\ccc_{ji-1}\qcc_{ji}
\end{align*}
The proof now follows from $\prob{\rand C_i=j, \rand Y_i\in dy_i\mid \rand Y_{1:i-1}=y_{1:i-1}}\densslash \omeasure(dy_i)
=\ccc_{ji-1}\qcc_{ji}\rcc_{ji}$, whereby we have used that $(\rand Y_i \indep \rand Y_{1:j-1}\mid \rand C_{i}=j)$.
\end{proof}

The proof of Lemma \ref{lem:online} reveals that the computation of $\ccc_{ji}$ can be divided into two steps.
In the first step, we go from $i-1$ to $i$ through $\qcc_{ji}$ without considering $y_i$.
In the second step, we incorporate $y_i$ through $\rcc_{ji}$ and normalize with $\Zcc_i$.
These two steps are an analogy of the prominent prediction and update in the Kalman-filter.

\begin{myremark}
\label{rem:marglikelihood}
The value of $\Zcc_i$ in Lemma \ref{lem:online} corresponds to 
\begin{align*}
\prob{\rand Y_i\in dy_i\mid \rand Y_{1:i-1}=y_{1:i-1}}\densslash\psi(dy_i)
\end{align*}
It can be used to compute the \emph{marginal loglikelihood} of the observations marginalized over all segment heights and \cp positions, i.e.
\begin{align*}
\ln\Big(\prob{\rand Y_{1:n}\in dy_{1:n}}\densslash\psi(dy_{1:n})\Big)
=\sum_{i=1}^n\ln(\Zcc_i)
\end{align*}
Their existence follows from the existence of the $\rcc_{ji}$'s.
\end{myremark}

\begin{algorithm}[ht]
	\begin{algorithmic}[1]
		\footnotesize 
		\State{\textbf{Input:} Data $(y_1,\ldots,y_n)$}
		\State{\textbf{Output:} $\Zcc_i$ and $\ccc_{ji}$ for $0<i\leq n$ and $0\leq j\leq i$}
		\State{$\ccc_{00}=1$}
		\For{$i=1,\ldots,n$}		
		\State{$\ell=0$, $\Zcc_i=0$}
		\For{$j=0,\ldots,i-1$}
		\State{Compute $\rcc_{ji}$}
 		\Comment{Complexity $\mathcal{O}(h)$}
 		\label{pseudo:forward:bayestep1}
		\State{$\ell=\ell+\ccc_{ji-1}\qcc_{ji}$, $\quad\dcc_{ji}=\ccc_{ji-1}\qcc_{ji} \rcc_{ji}$, $\quad \Zcc_i=\Zcc_i+\dcc_{ji}$ } \label{pseudo:forward:update}
		\EndFor
		\State{Compute $\rcc_{ii}$}
 		\Comment{Complexity $\mathcal{O}(h)$}
 		\label{pseudo:forward:bayestep2}
		\State{$\dcc_{ii}=(1-\ell)\rcc_{ii}$, $\quad \Zcc_i=\Zcc_i+\dcc_{ii}$}
		\For{$j=0,\ldots,i$}
		\State{$\ccc_{ji}=\dcc_{ji}\slash \Zcc_i$}\label{pseudo:forward:obspost}
		\EndFor\label{pseudo:forward:laststep}	
		\EndFor	
	\end{algorithmic}
	\caption{Forward inference}
	\label{pseudo:forward}
\end{algorithm}

Pseudocode \ref{pseudo:forward} demonstrates the resulting forward inference algorithm.
Assuming that the computation of $\rcc_{ji}$ and  $\rcc_{ii}$ in step \ref{pseudo:forward:bayestep1} and \ref{pseudo:forward:bayestep2} has in both cases a space and time complexity of $\mcal{O}(h)$, the overall runtime complexity corresponds to $\mcal{O}(hn^2)$ and the space complexity to $\mcal{O}(n^2+h)$.

Now, we elaborate two different approaches to compute the $\heights_{ji}$'s and $\rcc_{ji}$'s.
As we have seen in Lemma \ref{lem:online}, $\rcc_{ji}$ enables inference over \cp locations.
In turn, $\heights_{ji}$ can be used for inference over segment heights.
The first approach takes advantage of certain elementary integrals, i.e. the segmental marginal likelihoods, as the following lemma shows.

\begin{lemma}
	\label{lem:segheight}
	For $0< j\leq i\leq n$ the following holds
	\begin{align}
	\label{eq:integrateocc}
	\rcc_{ji}&=\frac{\int \prod_{\ell=j}^{i}\dens(\rand Y_\ell=y_\ell\mid \rand X_\ell=x)\jprior(dx)}{\int \prod_{\ell=j}^{i-1}\dens(\rand Y_\ell=y_\ell\mid \rand X_\ell=x)\jprior(dx)}\\
	\label{eq:integratehcc}
	\heights_{ji}(dx)&=\frac{\prod_{\ell=j}^i\dens(\rand Y_\ell=y_\ell\mid \rand X_\ell=x)\jprior(dx)}{\int\prod_{\ell=j}^i\dens(\rand Y_\ell=y_\ell\mid \rand X_\ell=x)\jprior(dx)}
	\end{align}
	whereby $\rcc_{0i}=\rcc_{1i}$. 
\end{lemma}
\begin{proof}
	(\ref{eq:integrateocc}) follows from
	\begin{align*}
	\rcc_{jj}\cdot\ldots\cdot\rcc_{ji}=\prob{\rand Y_{j:i}=dy_{j:i}\mid \rand C_i=j}\densslash\psi(dy_{j:i}) 
	=\int \prod_{\ell=j}^{i}\dens(\rand Y_\ell=y_\ell\mid \rand X_\ell=x)\jprior(dx)
	\end{align*} 
	for all $0<i\leq n$ and $0\leq j\leq i$.
	Hereby, we have used the requirement of Section \ref{chap:modas} that $\int \prod_{\ell=j}^{i}\dens(\rand Y_\ell=y_\ell\mid \rand X_\ell=x)\jprior(dx)>0$ for all $0<j\leq i\leq n$.
	This is necessary to ensure the existence of the $\rcc_{ji}$'s.

	Equation (\ref{eq:integratehcc}) can now be proved through induction by means of Lemma \ref{lem:bayesstepfor}. 
	In the base case, where $i=j-1$, we set $\heights_{jj-1}$ to $\jprior$.
	In the induction step from $i-1$ to $i$ we divide $\dens(\rand Y_i=y_i\mid \rand X_i=x)\heights_{ji-1}(dx)$ by $\rcc_{ji}$ to obtain $\heights_{ji}(dx)$.
\end{proof}

The integrals employed in Lemma \ref{lem:segheight} cannot per se be reused in an iterative fashion and might have to be computed every time anew. 
Since they involve all the observations from $j$ and $i$, their computation then exhibits a non-constant complexity w.r.t. $n$ increasing the complexity of the forward algorithm by at least a factor of $n$.

The second approach utilizes the following relation in order to compute the $\heights_{ji}$'s and $\rcc_{ji}$'s in an iterative fashion.

\begin{lemma}
\label{lem:bayesstepfor}
For $0<i\leq n$ and $0\leq j\leq i$ 
\begin{align}
\label{eq:forbayestheo}
&\rcc_{ji}\cdot \heights_{ji}(dx)=\dens(\rand Y_i=y_i\mid \rand X_i=x)\cdot \heights_{ji-1}(dx)
\end{align}	
with $\heights_{ii-1}=\heights_{00}=\jprior$.
\end{lemma}
\begin{proof}
At first, we show that $(\rand X_{i}\mid \rand C_{i}=j, \rand Y_{1:i-1}=y_{1:i-1})\sim \heights_{ji-1}$ for $0<j<i\leq n$.
Therefore, consider that $(\rand X_{i-1}\indep \rand C_{i}\mid \rand C_{i-1}, \rand Y_{1:i-1}=y_{1:i-1})$ and 
\begin{align*}
&(\rand X_{i}\mid \rand C_{i}=j, \rand Y_{1:i-1}=y_{1:i-1})
\eqdist(\rand X_{i-1}\mid \rand C_i=j, \rand C_{i-1}=j, \rand Y_{1:i-1}=y_{1:i-1})\\
&\eqdist(\rand X_{i-1}\mid \rand C_{i-1}=j, \rand Y_{1:i-1}=y_{1:i-1})\sim\heights_{ji-1}
\end{align*}
With this and $(\rand Y_i\indep \rand C_i, \rand Y_{1:i-1}\mid \rand X_i)$, we may conclude that 
\begin{align*}
&\dens(\rand Y_i=y_i\mid \rand X_i=x)\heights_{ji-1}(dx)=\prob{\rand X_i\in dx, \rand Y_i\in dy_i\mid \rand C_{i}=j, \rand Y_{1:i-1}=y_{1:i-1}}\densslash\omeasure(dy_i)\\
&=\prob{\rand Y_i\in dy_i\mid \rand C_{i}=j, \rand Y_{1:i-1}=y_{1:i-1}}\densslash\omeasure(dy_i)\cdot \heights_{ji}(dx)=\rcc_{ji}\cdot \heights_{ji}(dx)
\end{align*}
for $0<i\leq n$ and $0\leq j\leq i$.
\end{proof}

Lemma \ref{lem:bayesstepfor} reveals that $\heights_{ji}$ and $\rcc_{ji}$ can be derived from $\dens(\rand Y_i=y_i\mid \rand X_i=x)$ and $\heights_{ji-1}$.
For this sake, in step \ref{pseudo:forward:bayestep1} and \ref{pseudo:forward:bayestep2} of Pseudocode \ref{pseudo:forward}, we need to store $\heights_{ji}$.
It will then be reused to compute $\heights_{ji+1}$ and $\rcc_{ji+1}$.
$\heights_{ii}$ and $\rcc_{ii}$, in turn, are derived from $\dens(\rand Y_i=y_i\mid \rand X_i=x)\jprior(dx)$.

Many distributions, most namely the members of the exponential family, support these operations in closed form, for example, through conjugacy \citep{raiffa1961applied}.
The stepwise evolution of the posterior distributions leaves their functional form invariant and only requires a transformation of the parameters.
Consequently, these distributions are stored by means of their parameters.
This transformation can usually be done in constant time w.r.t. $n$ leading to a forward algorithm with an overall quadratic complexity.

Finally, the joint distribution of the segment height at $i$ and most recent \cp prior to $i$ conditioned on the observations up to $i$ can be expressed through
\begin{align*}
&\prob{\rand X_i\in dx, \rand C_i=j\mid \rand Y_{1:i}=y_{1:i}}= \ccc_{ji}\heights_{ji}(dx)
\end{align*}

\subsubsection{Pruning}
\label{chap:pruning}

With pruning we refer to the task of discarding certain computation steps w.r.t. $j$ in Pseudocode \ref{pseudo:forward}.
Having skipped the computation of, say $\ccc_{ji}$, from then on we will also skip all further computation steps that involve a \cp at $j$, i.e. we skip the computation of $\ccc_{j\ell}$ for all $i<\ell\leq n$.
In the optimal case, the computational complexity reduces thereby to linear w.r.t. $n$.

Imagine that at timepoint $i$ it is certain that the most recent \cp prior to $i$ is $j$.
This means that the most recent \cp prior to $i$ cannot be at a timepoint prior to $j$.
Hence, all further computations involving a \cp prior to $j$ may safely be skipped.
In this case, pruning doesn't even cause inaccuracies at all.
This resembles the \algor{PELT} method as discussed in Section \ref{chap:pencost}.

However, since \cps are normally not certain, a worthwhile pruning approach introduces inaccuracies to the distribution of \cps.
In order to keep these inaccuracies small, we develop a method that adjusts the intensity of pruning dynamically.
Our approach is similar to the method proposed in \cite{exact_fearnhead}.

In accordance to the particles in particle filter approaches \citep{doucet2001introduction}, we call the quantity $\ccc_{ji}$ the \emph{$(j,i)$-th particle}.

At first, we introduce a certain minimal removal age $T$, i.e. particles $(j,i)$ with $i-j<T$ must be retained.
Therewith, we address the following issue.
A timepoint $j$ that doesn't appear to be the most recent \cp prior to $i$ can later gain more credibility.
This is because inconspicuous changes in segment height require more datapoints to be processed in order to reveal their existence.

Particles whose age is at least $T$ can be removed on the fly.
Therefore, we employ another threshold $T'$, say $T'=10^{-15}$.
If the relative contribution of a particle to the current mass is smaller than $T'$, it will be removed.

\begin{algorithm}[ht]
	\begin{algorithmic}[1]
		\footnotesize 
		\State{\textbf{Input:} Data $(y_1,\ldots,y_n)$ and thresholds $T>1$ and $T'\geq 0$}
		\State{\textbf{Output:} $\Zcc_i$ and a sparse representation of $\ccc_{ji}$ for $0<i\leq n$ and $0\leq j\leq i$}
		\State{$\ccc_{00}=1$}
		\For{$i=1,\ldots,n$}		
		\State{$\ell=0$, $\Zcc_i=0$}
		\For{$j$ such that $c_{ji-1}$ exists}
		\State{Compute $\rcc_{ji}$}\label{pseudo:forwardprun:bayestep1}
		\Comment{Complexity $\mathcal{O}(h)$}
		\State{$\dcc_{ji}=\ccc_{ji-1}\qcc_{ji} \rcc_{ji}$}
		\If{$i-j< T$ or $\dcc_{ji}\geq \Zcc_i\cdot T'$}\label{pseudo:forwardprun:prun}		
		\Comment{Pruning condition}		
		\State{Declare $\ccc_{ji}$ and set $\Zcc_i=\Zcc_i+\dcc_{ji}$ and $\ell=\ell+\ccc_{ji-1}\qcc_{ji}$}		
		\EndIf
		\EndFor\label{pseudo:forwardprun:endfor}
		\State{Compute $\rcc_{ii}$ and declar $\ccc_{ii}$}
		\Comment{Complexity $\mathcal{O}(h)$}
		\label{pseudo:forwardprun:bayestep2}
		\State{$\dcc_{ii}=(1-\ell)\rcc_{ii}$ and \textbf{}$\Zcc_i=\Zcc_i+\dcc_{ii}$}
		\For{$j$ such that $\ccc_{ji}$ exists}
		\State{$\ccc_{ji}=\dcc_{ji}\slash \Zcc_i$}\label{pseudo:forwardprun:obspost}
		\EndFor\label{pseudo:forwardprun:laststep}	
		\EndFor	
	\end{algorithmic}
	\caption{Forward inference with pruning}
	\label{pseudo:forwardprun}
\end{algorithm}
Pseudocode \ref{pseudo:forwardprun} shows the pseudocode of the pruned forward inference algorithm.
The pruning conditions are implemented in Step \ref{pseudo:forwardprun:prun}.
$\dcc_{ji}\geq \Zcc_i\cdot T'$ states that the $(j,i)$-th particle is retained if the relative contribution of $\dcc_{ji}$ to $\Zcc_i$ is at least $T'$.

Thus, our approach is capable of removing particles directly when they are dealt with justifying the term \emph{on the fly pruning}.
Therewith, Step \ref{pseudo:forwardprun:prun} adjusts the number of pruned particles dynamically.
If there is sufficient certainty over the locations of the last \cps, earlier particles will not pass anymore.
Hence, the efficiency of pruning is improved each time a new \cp location pops up.
However, it will refuse to prune if no new \cp locations are spotted anymore.

In contrast to the non-pruned algorithm, here we use a data structure that stores exclusively the existing particles.
It needs to be capable of iterating through the particles $(j,i)$ w.r.t. $j$, but an index based access is not required.
This can be achieved efficiently through lists or a sparse matrix implementation.

The actual complexity of the pruning scheme relies on how the number of \cps increases with $n$.
In the worst case, where the number of \cps stays constant or the model is extremely uncertain about the \cp locations, this algorithm may not improve the complexity whatsoever.

However, in practice the number of \cps usually grows with $n$.
In this case, if we hold $j$ and increase $i$, more and more reasonable \cp locations between $j$ and $i$ will pop up.
This will eventually sort out $j$ and also any timepoint before $j$ to be the most recent \cp prior to $i$.

Hence, a sufficient growth in the number of \cps in relation to $n$, e.g. linear, yields an upper bound for the segment length w.r.t. the retained particles.
In this case, even if the complexity of computing $\rcc_{ji}$ and $\heights_{ji}$ is a function of the segment length, $i-j$, our pruning approach is able to achieve a time and space complexity of $\mathcal{O}(n)$ for Pseudocode \ref{pseudo:forwardprun}.

\subsubsection{The exponential family}
\label{chap:conjugacy}

Now, we show how to compute the $\rcc_{ji}$'s and $\heights_{ji}$'s for a certain but rich family of distributions.
This amounts to improved versions of Pseudocode \ref{pseudo:forward} and \ref{pseudo:forwardprun}.

Let $\obs\sub\mathbb{R}^r$ and $\lat\sub\mathbb{R}^s$, and $\funcf(y\mid x)=\dens(\rand Y_i=y\mid \rand X_i=x)$ for all $i=1,\ldots,n$.
$\funcf$ may also depend on $i$, but we skip this to keep the notation simple.
Further assume that $\jprior$ exhibits a density $\funcg$ w.r.t. a measure $\xi$ over $\latsigma$.
The functional form of $\funcf$ and $\funcg$ is defined through
\begin{align*}
&\funcf(y\mid x)=h(y)\euler{\eta(x)^t T(y)-A(x)}\\
&\funcg(x\semic \nu, \chi)=\tilde h(x) \euler{\chi^t\eta(x)-\nu A(x)-\tilde A(\nu,\chi)}
\end{align*}
whereby $T:\obs\rightarrow\mathbb{R}^u$, $\eta:\lat\rightarrow\mathbb{R}^u$, $h:\obs\rightarrow\mathbb{R}$, $\tilde h:\lat\rightarrow\mathbb{R}$ and $A:\lat\rightarrow\mathbb{R}$ are measurable functions.
$\nu\in\mathbb{R}$ and $\chi\in\mathbb{R}^u$ are pivotal parameters.
The functions $A:\mathbb{R}^s\rightarrow\mathbb{R}$ and $\tilde A:\mathbb{R}^{u+1}\rightarrow\mathbb{R}$ ensure that the densities $\funcf$ and $\funcg$ are correctly normalized.
Additionally, we require that $h(y_i)>0$ for all $i=1,\ldots,n$.

$\funcf$ and $\funcg$ are members of the exponential family w.r.t. to $x$ and $(\nu, \chi)$, respectively.
Moreover, $\funcg$ is conjugate to $\funcf$.
This means that applying Bayes' theorem \citep{bayes1763lii} to these densities results in a density that is of the same form as $\funcg$, but with updated parameters.
More precisely
\begin{align}
\nonumber
&\funcf(y\mid x)\cdot\funcg(x\mid \nu,\chi)\\
\label{eq:onlinebayesstep}
&=\funcg(x\semic \nu+1,T(y)+\chi)\cdot h(y)\euler{-\tilde A(\nu,\chi)+\tilde A(\nu+1,T(y)+\chi)}
\end{align}

\begin{lemma}
For $0<j\leq i\leq n$ there exist $\chi_{ji}\in\mathbb{R}^s$ and $\nu_{ji}\in\mathbb{R}$ with 
\begin{align*}
\heights_{ji}(dx)=\funcg(x\semic \nu_{ji}, \chi_{ji})\xi(dx)
\end{align*}
These parameters can be computed recursively through $\nu_{ji}=\nu_{ji-1}+1$ and $\chi_{ji}=\chi_{ji-1}+T(y_i)$, whereby $\nu_{ii-1}=\nu_{00}=\nu$ and $\chi_{ii-1}=\chi_{00}=\chi$.
Furthermore, 
\begin{align*}
&\rcc_{ji}=h(y_i)\euler{-\tilde A(\nu_{ji-1},\chi_{ji-1})+\tilde A(\nu_{ji-1}+1,T(y_i)+\chi_{ji-1})}
\end{align*}
\end{lemma}
\begin{proof}
These inference formulas are a direct consequence of Lemma \ref{lem:bayesstepfor}.
The existence of $\heights_{ji}(dx)\densslash\xi(dx)$ follows from the fact that $\int \funcf(y_i\mid x)\funcg(x\mid \nu,\chi)\xi(dx)>0$ for all $\nu\in\mathbb{R}$, $\chi\in\mathbb{R}^u$ and $i=1,\ldots,n$.
\end{proof}

This allows for a recursive computation of $\rcc_{ji}$ that can easily be incorporated into Pseudocode \ref{pseudo:forward}. 
There, we just need to modify Step \ref{pseudo:forward:bayestep1} and \ref{pseudo:forward:bayestep2} to compute and safe $\nu_{ji}$, $\chi_{ji}$.
This has a constant complexity w.r.t. $n$ and thus, the overall space and time complexity of the unpruned forward algorithm becomes $\mcal{O}(n^2)$.

There are useful summaries for concrete members of the exponential family available, e.g. \cite{wiki:conjugacy}.
In this context, the quantities involved in Lemma \ref{lem:bayesstepfor} receive particular names.
$\heights_{ji-1}$ is called (conjugate) prior, \funcf likelihood, $\heights_{ji}$ posterior and $\rcc_{ji}$ posterior predictive or compound.

\subsubsection{The Laplacian change in median model}
\label{chap:laplacian}

Let now $\obs=\lat=\mathbb{R}$ and $\omeasure(dy)=dy$, and
\begin{align*}
\jprior(dx)&=\frac{1}{2\tau}\euler{-\frac{|x-\mu|}{\tau}}dx\\
\dens(\rand Y_i=y\mid \rand X_i=x)&=\frac{1}{2\sigma}\euler{-\frac{|y-x|}{\sigma}}
\end{align*}
with $\sigma, \tau>0$, $\mu\in\mathbb{R}$.
Therewith, the segment heights at \cp locations and observations follow a Laplace distribution with scale $\tau$ and $\sigma$, respectively.
Each observational distribution ties its median to the corresponding segment height and the segment heights themselves employ $\mu$ as their median.
Thus, the \cp model infers changes in median, or more generally speaking, changes in location.
All parameters are allowed to depend on $i$, but we leave this out to keep the notation simple.

$\jprior$ effectively contributes an observation at $\mu$ to each segment, however with a particular scale $\tau$. 
These two parameters should be chosen such that $\jprior$ easily covers all likely segment heights, but also inhibits unlikely ones.
Similarly, $\sigma$'s task is to promote and prevent joint accommodation of observations within single segments and across multiple segments, respectively.

\cite{jafari2016bayesian} considers the Laplacian change in median model for the single \cp case.
Apart form that, this model seems to be disregarded within the Bayesian \cp literature.

However, it is of interest, because it is more robust against outliers than, for example, the normal distribution, 
but computationally more attractive than, for example, Student's t-distribution.
The Laplace distribution serves as an example for a distribution that is not part of the exponential family w.r.t. its location parameter.

Now, we elaborate an analytical way to compute the $\rcc_{ji}$'s and $\heights_{ji}$'s in this model.
Therefore, let 
\begin{align}
\label{eq:laplaceexp}
&\Zcc_{ji}^m=\int x^m\euler{-\frac{|x-\mu|}{\tau}-\sum_{\ell=j}^i\frac{|y_\ell-x|}{\sigma}}dx
\end{align}
for $0<j\leq i\leq n$.
Lemma \ref{lem:segheight} shows that for $0<j\leq i\leq n$ 
\begin{align}
\label{eq:laplaceo}
&\rcc_{ji}=\frac{\Zcc_{ji}^0}{2\sigma\Zcc_{ji-1}^0}\quad \text{and}\quad
\heights_{ji}(dx)=\frac{1}{\Zcc_{ji}^0}\euler{-\frac{|x-\mu|}{\tau}-\sum_{\ell=j}^i\frac{|y_\ell-x|}{\sigma}}dx
\end{align}
whereby $\Zcc^0_{00}=2\tau$.
Hence, the computation of the ratio $\rcc_{ji}$ and $\heights_{ji}$ comes down to computing the integral in Equation (\ref{eq:laplaceexp}) for $m=0$. 
Key to this computation is the fact that $-\frac{|x-\mu|}{\tau}-\sum_{\ell=j}^i\frac{|y_\ell-x|}{\sigma}$ can be described through a continuous piecewise linear function.

To this end, let $(z_1, \ldots, z_r)$ be a sequence of real numbers which is sorted in ascending order and $(\sigma_1,\ldots,\sigma_r)$ a sequence of positive real numbers.
By decomposing each of the absolute value functions into their positive and negative parts we see that
\begin{align}
\label{eq:sumofabsfunc}
&-\sum_{\ell=1}^r\frac{|x-z_\ell|}{\sigma_\ell}
=\begin{cases}\dcc_{0}\cdot x+\ecc_{0}&x< z_1\\
\dcc_{j}\cdot x+\ecc_j&x\in[z_j,z_{j+1})\\
\dcc_{r}\cdot x+\ecc_r&x\geq z_r
\end{cases}
\end{align}	
with
\begin{align*}
\dcc_{j}=-\sum_{\ell=1}^j1\slash\sigma_\ell+\sum_{\ell=j+1}^r1\slash\sigma_\ell\quad \text{ and } \quad \ecc_j=\sum_{\ell=1}^jz_\ell\slash\sigma_\ell-\sum_{\ell=j+1}^rz_\ell\slash\sigma_\ell
\end{align*}
Moreover, due to the continuity mentioned above, we see that $\dcc_{\ell-1} z_{\ell}+\ecc_{\ell-1}=\dcc_{\ell} z_{\ell}+\ecc_{\ell}$ for all $\ell=1,\ldots,r$.

\begin{corollary}
	\label{cor:intlaplace}
	Let $(z_1,\ldots,z_r)$ and $(\sigma_1,\ldots,\sigma_r)$ be as before.
	For $m\in\mathbb{N}$, the following holds
	\begin{align}
	\label{eq:laplacesum}
	&\int x^m\euler{-\sum_{\ell=1}^r\frac{|x-z_\ell|}{\sigma_\ell}}dx
	=\sum_{\ell=1}^r\euler{\dcc_\ell z_\ell+\ecc_\ell}(\bcc_{\ell-1}^\ell-\bcc_{\ell}^\ell)
	\end{align}
	whereby for $j=0,\ldots,r$ and $k=1,\ldots,r$ we define
	\begin{align*}
	&\bcc_j^k=\begin{cases}
	\sum_{\ell=0}^m\frac{(-1)^{\ell}z_k^{m-\ell}m!}{\dcc_j^{\ell+1}(m-\ell)!}&\dcc_j\neq 0\\
	z_k^{m+1}\slash (m+1)&\dcc_j=0	
	\end{cases}
	\end{align*}
\end{corollary}
\begin{proof}
	Equation (\ref{eq:laplacesum}) follows from a piecewise integration w.r.t. to the partitioning of $\mathbb{R}$ that is implied by $(-\infty, z_1,\ldots,z_r, \infty)$.
	The $\bcc_j^k$'s are the result of a repeated integration by parts conducted on these pieces.
\end{proof}

Corollary \ref{cor:intlaplace} requires a sorting of the sequence $(\mu, y_1,\ldots,y_n)$.
Sorting can be done by successively adding observations.
Each insertion costs $\mcal{O}(\ln(n))$.
Alternatively, we may use the approach of \cite{weinmannl1}, which was developed in another context. It utilizes a certain data structure to set up all the sorted sequences beforehand.
This preprocessing step has a time complexity of $\mcal{O}(n^2)$ instead of $\mcal{O}(n^2\ln(n))$.
However, since this preprocessing thwarts the benefits of pruning, we won't consider it further.

The main drawback in using the Laplace distribution is that the integration concerning the $(j,i)$-particle involves a sum of $i-j+2$ summands.
Unfortunately, a stepwise computation in the fashion of Lemma \ref{lem:bayesstepfor} does not lead to any simplification here.
This is because the piecewise linear functions as described in (\ref{eq:sumofabsfunc}) change their appearance dramatically with new observations.
Hence, in each step we need to reconsider all previous observations within the concerned segment and not just the new one.
Therefore, the complexity of the unpruned forward algorithm is $\mcal{O}(n^3\ln(n))$ making \cp problems with $n>10^5$ a very challenging task.

However, the pruning scheme of Section \ref{chap:pruning} is able to reduce this complexity tremendously.
If we assume that there is a constant $\ell<\infty$ with $i-j<\ell$ for each existing particle $(j,i)$, the time and space complexity reduces to $\mcal{O}(n)$.

It is essential to improve the numerical properties of the integration by pulling out $\mcc_{ji}=\max_{x\in\mathbb{R}}\big\{-\frac{|x-\mu|}{\tau}-\sum_{\ell=j}^i\frac{|y_\ell-x|}{\sigma}\big\}$ or a similar value from the exponent of the r.h.s. of Equation (\ref{eq:laplacesum}).
It would then be reintroduced to $\rcc_{ji}$ by multiplying with $\exp(\mcc_{ji-1}-\mcc_{ji})$ in Equation (\ref{eq:laplaceo}).
$\heights_{ji}$ does not a require a reintroduction of $\mcc_{ji}$.
This is very effective in avoiding division by values close to zero.

Within the existing literature you can find several generalizations of the Laplace distribution, most notably a multivariate and an asymmetric \citep{kotz2012laplace}.
However, these two are different to the univariate extension that is represented by $\heights_{ji}$.

In order to facilitate the understanding in later chapters, we want to elaborate some of its properties briefly by means of an example.
Therefore, consider the unnormalized density $\exp\big(-\sum_{i=1}^5|x-z_i|\big)$ with 
$(z_1,\ldots,z_5)=(-7, -5, 0, 1.2, 1.3)$.

\begin{figure}[ht]
	\includegraphics[width=\linewidth]{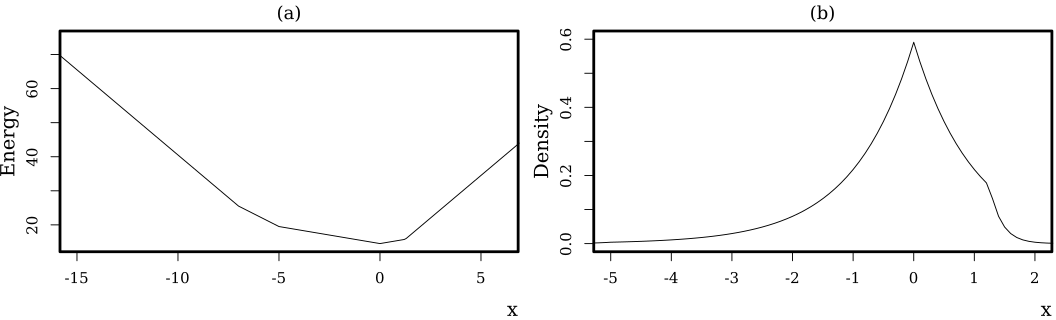}
	\caption[Graph of energy and density of generalized Laplace distribution]{(a) The graph of $\sum_{i=1}^5|x-z_i|$. (b) Normalized version of $\exp\big(-\sum_{i=1}^5|x-z_i|\big)$.
	}
	\label{fig:gen_laplace}
\end{figure}

$\sum_{i=1}^5|x-z_i|$ is depicted in Figure \ref{fig:gen_laplace}(a).
In the context of Gibbs measures \citep{georgii2011gibbs}, we would call it energy function.
In accordance with Equation (\ref{eq:sumofabsfunc}), it has the form of a continuous piecewise linear function being monotonically decreasing from $-\infty$ to the median of the $z_i$'s and monotonically increasing thereafter.
Before and after the possibly non-unique medians, we even get strong monotony.

If we follow the graph from the median to minus or plus $\infty$, the slope decreases or increases, respectively, each time we pass a $z_i$.
This yields a progressive increase of energy in both directions, which is controlled by the occurrences of the $z_i$'s.

However, the higher the energy the lower the likelihood, i.e. the density values.
The corresponding density, i.e. the normalized version of $\exp\big(-\sum_{i=1}^5|x-z_i|\big)$, is depicted in Figure \ref{fig:gen_laplace}(b).
Its single mode matches the median of $z_1,\ldots,z_5$.
We see that the quick occurrence of $z_4$ and $z_5$ immediately starts slashing the likelihood.
In turn, $z_1$ and $z_2$ are comparatively widespread, which yields a more gradual reduction of the likelihood.
However, in the long run, the left and right tails decay according to the same exponential distribution.

The standard deviation w.r.t. the density is 1.0208203.
Standard deviation measures the extend to which the distribution scatters around its expectation, but doesn't reveal anything about the possibly asymmetric spread.

For this purpose, we may use skewness, which amounts to -1.2708606.
Its negativity confirms the apparent tendency to produce larger values to the left than to the right.

Thus, we conclude that the $\heights_{ji}$'s define skewed distributions over $\mathbb{R}$.
The skewness is owed by the irregular spread of the observations around their median. 
Moderate observations significantly slash the likelihood in the direction of extremer observations and thus, thwart their impact.
$\heights_{ji}(dx)\densslash dx$ can be composed continuously by exponential functions and exhibits a single mode or a single plateau without further modes.

\subsubsection{Well-log example}
\label{chap:well_log_forward}


\begin{figure}[ht]
	\includegraphics[width=\linewidth]{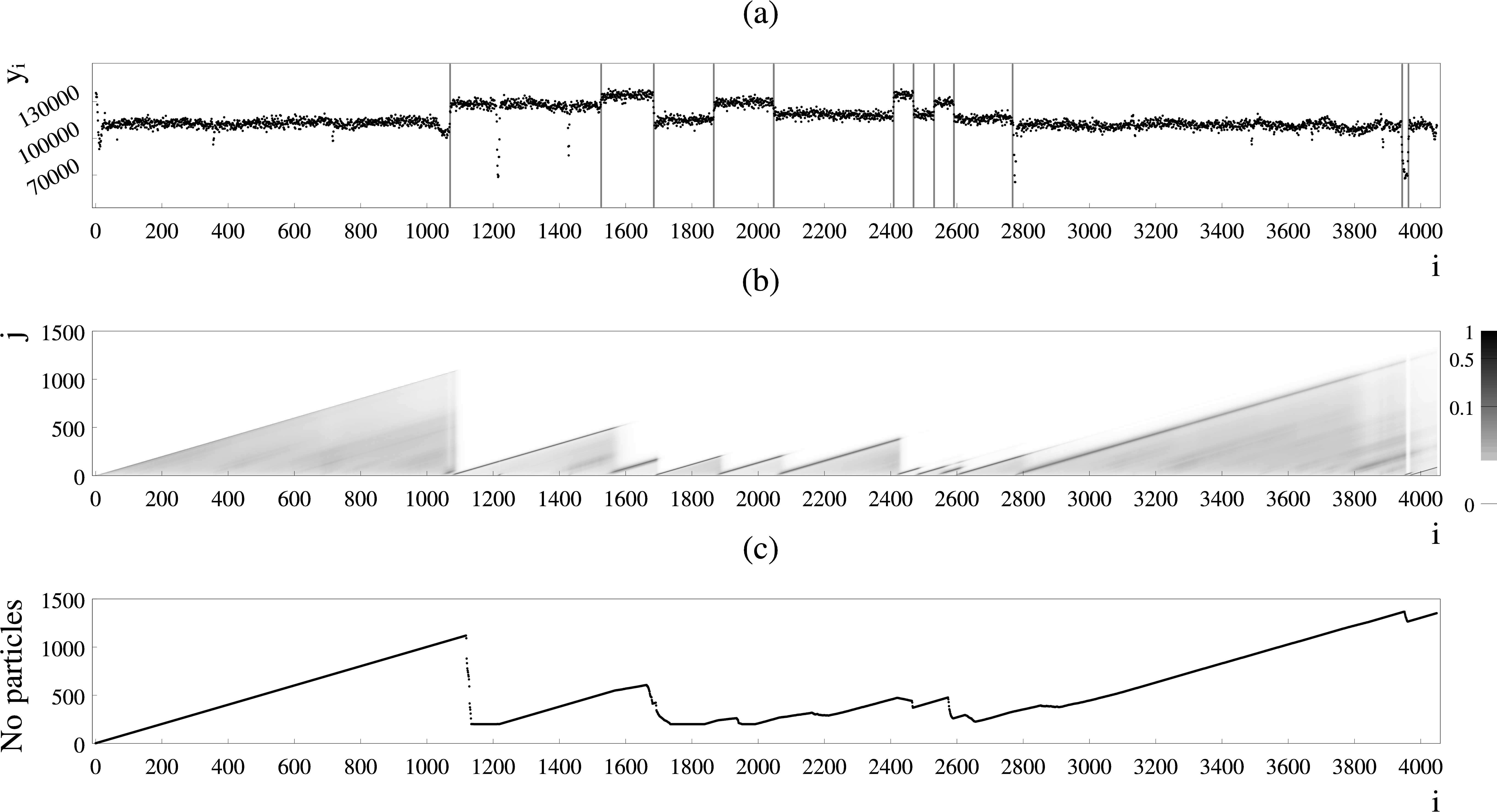}
	\caption[Well-log data, run-lengths and number of particles after pruning]{(a) Well-log data and \cp estimates (through MAP estimation) as gray vertical lines. (b) Density plot of the run-lengths. For each point $(i, j)$ it displays the value of $\ccc_{i-j i}$. (c) Illustration of the number of particles retained by the pruned forward algorithm.
	}
	\label{fig:laplace_forward}
\end{figure}
Now we consider a well-log dataset (see Figure \ref{fig:laplace_forward}(a)) that stems from nuclear-magnetic response of underground rocks \citep{fearnhead2003line, exact_fearnhead, fearnhead2019changepoint}.
Besides strong outliers, there are small and large changes in location present.
Our task is to build a \cp model that is sensitive towards the significant location changes that are marked by the gray vertical lines and specific otherwise.
Therefore, we employ the Laplacian change in median model of Section \ref{chap:laplacian}.
This example will further accompany us at several other occasions, i.e. Sections \ref{chap:well_log_forward}, \ref{chap:well-log-pointw}, \ref{chap:well-log-para}, \ref{chap:emsigmalap} and \ref{chap:well-logsimcred}.

We choose $\mu=113854$, $\tau=6879$ and $\sigma=25000$ and the time from one \cp to the next follows a negative binomial distribution with number of trials $r=3$ and success probability $q=0.01430724$.
In this section, we solely discuss the values of $\mu$ and $\tau$ in more detail.
The reasoning behind the other parameters requires a deeper understanding of \cp inference and will be carried out in Section \ref{chap:well-log-para}.

$\mu$ represents the median of all the possible segment heights.
However, not all datapoints are plausible candidates, in particular those heights indicated by outliers should be excluded.
Thus, the median of the whole data set, which is 113854, is a good choice here since it gives only little weight to outliers.

In a similar fashion, we choose the scale, $\tau$, as the mean absolute deviation around the median w.r.t. the whole data set.
As well as the median, it exhibits some robustness against outliers.
Thus, our approach to choose $\mu$ and $\tau$ functions poses a supportive mean against strong perturbations in the data.

For reasons to be elaborated in detail in Section \ref{chap:geomem}, we modify the length distribution of the first segment slightly.
Let $\text{NB}(\LargerCdot\semic q,r)$ represent the probability mass function of the negative binomial distribution with success probability $q$ and number of trials $r$.
We assume that $q\leq r\slash(r+1)$.
For $0<i\leq n$ we choose
\begin{align*}
\qcc_{ji}=\begin{cases}
\sum_{\ell=i-j}\text{NB}(\ell\semic q,r)\slash \sum_{\ell=i-j-1}\text{NB}(\ell\semic q,r)& 0<j<i\\
1-q'& j=0
\end{cases}
\end{align*}
whereby $q'=q\slash (r(1-q))$.
Thus, the length of the first segment now follows a geometric distribution with success probability $q'$.

Figure \ref{fig:laplace_forward}(b) visualizes the so-called \emph{run-lengths}, first introduced by \cite{ocpd},  as a density plot. 
For $(i,j)$ it depicts the probability that the most recent \cp prior to $i$ is $j$ timepoints away, that is $\ccc_{i-j i}$.
At any timepoint $i$, by following the oblique lines down to zero, we may encounter the location of the most recent \cp prior to $i$.

These locations usually appear slightly shifted to the right.
This is because, in the forward algorithm, at each timepoint $i$, we only process the datapoints until $i$ and not further.
It may thus take a little while before the model pays attention to a \cp.
Our visualization of the run-lengths can be of great help when \cps are to be detected on time.

Besides, this visualization technique can be used to get a rough picture of the models sensitivity and specificity towards \cps at interesting spots.
By this means, we may conclude that our Laplacian change in median model is pleasantly specific towards the outliers and small nuisance mean shifts present.
Though, some spots, most notably the perturbation at around 3800 that takes the form of a dip, could be of concern.

In our computation we used the pruning scheme of Section \ref{chap:pruning} with $T=200$ and $T'=10^{-15}$ to compute the $\ccc_{ji}$'s.
We implemented our algorithms in C++ and the whole computation takes around 26 seconds on an AMD FX-8350 with 4.2 GHz.
In comparison, for $T=4000$ the computation takes around 361 seconds.

Figure \ref{fig:laplace_forward}(c) depicts the evolution of the particles.
For each timepoint $i$, we see the number of particles that were retained by the pruning procedure. 
It shrinks dramatically in the presence of \cps, which gives our pruning approach its convincing efficiency.

\subsection{Backward inference}
\label{chap:backward_inference}

In the following, we consider retrospective or offline inference algorithms.
Intrinsic to these algorithms is that they reprocesses the quantities inferred by the forward inference algorithm in reverse order, starting at $n$.
Therewith they are an analogy to the backward algorithms in hidden Markov models \citep{rabiner1986introduction}.

Let
\begin{align*}
&\tccc_{ji}=\prob{\rand C_i=j\mid \rand C_{i+1}=i+1, \rand Y_{1:i}=y_{1:i}}\quad&&\text{ for }0<i< n\text{ and }0\leq j\leq i\\
&\tccc_{jn}=\ccc_{jn}&&\text{ for }j=0,\ldots,n
\end{align*}
Since \cps induce independence, we see that
\begin{align*}
\prob{\rand C_i=j\mid \rand C_{i+1}=i+1, \rand Y_{1:i}=y_{1:i}}=\prob{\rand C_i=j\mid \rand C_{i+1}=i+1, \rand Y_{1:n}=y_{1:n}}
\end{align*}
Thus, the $\tccc_{ji}$'s pose primal retrospective quantities that are used as a basic building block in our backward inference algorithms.

\begin{lemma}
For $0<i<n$ and $0\leq j\leq i$ we get
\begin{align*}
\tccc_{ji}=\frac{\ccc_{ji}(1-\qcc_{ji+1})}{\sum_{\ell=0}^i\ccc_{\ell i}(1-\qcc_{\ell i+1})}
\end{align*}
\end{lemma}
\begin{proof}
Since $(\rand C_{i+1}\indep \rand Y_{1:i}\mid \rand C_{i})$	we may conclude that
\begin{align*}
&\prob{\rand C_i=j, \rand C_{i+1}=i+1\mid \rand Y_{1:i}=y_{1:i}}\\
&=\prob{\rand C_i=j\mid \rand Y_{1:i}=y_{1:i}}\prob{\rand C_{i+1}=i+1\mid \rand C_i=j}
=\ccc_{ji}(1-\qcc_{j i+1})
\end{align*}	
A normalization w.r.t. $i$ yields the proof.
\end{proof}

In the pruned case, we solely use the existing particles to compute the $\tccc_{ji}$'s.
Thereby, we declare $\tccc_{ji}$ only if $\ccc_{ji}$ exists.
Their computation has thus a time and space complexity which is linear in the number of particles.

Similarly, we deal with the segment height at $i$ conditioned on the whole segment that contains $i$ and all observations. 
However, their distribution does not differ from those of the segment heights with unknown right end, as the next lemma shows.

\begin{lemma}
\label{lem:segheightsretro}
For $0<i\leq n$ and $0\leq j\leq i$ the following holds
\begin{align*}
(\rand X_i\mid \rand C_{i}=j, \rand C_{i+1}=i+1, \rand Y_{1:n}=y_{1:n})\sim \heights_{ji}
\end{align*}
\end{lemma}
\begin{proof}
Since \cps induce independence we get
\begin{align*}
(\rand X_i\mid \rand C_{i}=j, \rand C_{i+1}=i+1, \rand Y_{1:n}=y_{1:n})\eqdist(\rand X_i\mid \rand C_{i}=j, \rand C_{i+1}=i+1, \rand Y_{1:i}=y_{1:i})
\end{align*}
and finally $(\rand X_i\indep \rand C_{i+1}\mid \rand C_i, \rand Y_{1:i})$ yields
\begin{align*}
(\rand X_i\mid \rand C_{i}=j, \rand C_{i+1}=i+1, \rand Y_{1:i}=y_{1:i})\eqdist(\rand X_i\mid \rand C_{i}=j, \rand Y_{1:i}=y_{1:i})\sim\heights_{ji}
\end{align*}
\end{proof}

\subsubsection{Likelihood of \cp locations and MAP estimator}
\label{chap:condlikandmap}

The likelihood of a given set of \cps $0<\tau_1< \ldots<\tau_k\leq n$ conditioned on the whole set of observed data is expressed through 
\begin{align}
\label{eq:likelihoodcp}
\tccc_{0\tau_1-1}\Bigg(\prod_{i=1}^{k-1}\tccc_{\tau_i\tau_{i+1}-1}\Bigg)\tccc_{\tau_kn}
\end{align}
whereby $\tccc_{00}=1$.

The MAP (maximum a posteriori) estimator for the \cp locations can be obtained with the Viterbi style algorithm of Section \ref{chap:pencost}.
Hereby, we would simply choose $\cost(y_{j:i})=-\ln(\tccc_{ji})$ for $0\leq j\leq i\leq n$ with a dummy observation $y_0$ and $\gamma=0$.
This provides an element of
\begin{align}
\label{eq:map}
\argmax_{c_{1:n}\in\mathbb{N}^{n}}\Big\{\prob{\rand C_{1:n}=c_{1:n}\mid \rand Y_{1:n}=y_{1:n}}\Big\}
\end{align}
The gray vertical lines in Figure \ref{fig:laplace_forward}(a) show the result of this estimation in the well-log example.

After estimating the \cp locations we may also estimate the corresponding segment heights by means of a Bayes or MAP estimator w.r.t. the $\heights_{ji}$ (see also Lemma \ref{lem:segheightsretro}).

\subsubsection{Sampling}
\label{chap:sampling}
Now, we show how to obtain i.i.d. samples from $(\rand C_{1:n}\mid \rand Y_{1:n}=y_{1:n})$. 
Later on in Section \ref{chap:simcredreg} we deploy these samples for a novel approach to analyze \cp models and \cp data.

\begin{algorithm}[ht]
	\begin{algorithmic}[1]
		\footnotesize 
		\State{\textbf{Input:} A possibly sparse representation of $\tccc_{ji}$ for $0<i\leq n$ and $0\leq j\leq i$.}
		\State{\textbf{Output:} Sample $S$ }
		\State{$i=n$}		
		\While{$i>0$}
		\State{$s=0$}		
		\State{Sample an $u$ from the uniform distribution over $[0,1]$}
		\For{$j=i,\ldots,0$ such that $\tccc_{ji}$ exists}
		\State{$s=s+\tccc_{ji}$}\label{pseudocode:sampling1}
		\If{$s\geq u$}
		\State{$S\ll j$}
		\State{$i=j-1$}
		\State{break the for loop}		
		\EndIf
		\EndFor
		\EndWhile		
	\end{algorithmic}

	\caption{Sampling}
	\label{pseudo:sampling}
\end{algorithm}

Sampling from $(\rand C_{1:n}\mid \rand Y_{1:n}=y_{1:n})$ is done in a stepwise manner by means of the $\tccc_{jn}$'s.
We start at $n$ and recursively obtain new \cp locations until we arrive at $0$.
Pseudocode \ref{pseudo:sampling} shows a possible implementation of the sampling procedure also dealing with a sparse representation of the $\tccc_{ji}$'s.
This algorithm exhibits an overall complexity of $\mcal{O}(n)$, because Step \ref{pseudocode:sampling1} can be executed $n+1$ times at most.

Given a \cp sample, we may further sample the segment heights conditioned on the whole data set and \cp locations according to the $\heights_{ji}$'s (see also Lemma \ref{lem:segheightsretro}).
This finally yields a joined sample of the \cp locations and segment heights conditioned on the whole data set, i.e. a sample of $(\rand C_{1:n}, \rand X_{1:n}\mid \rand Y_{1:n}=y_{1:n})$.

\subsubsection{Entropy}
\label{chap:entropy}

\cite{guedon2015segmentation} is concerned with the entropy of \cp locations.
Entropy can be used as a measure of dispersion.
We want to discuss its computation briefly.
The entropy of $(\rand C_{1:n}\mid \rand Y_{1:n}=y_{1:n})$ is defined as 
\begin{align*}
-\sum_{c_{1:n}\in\mathbb{N}^n}\prob{\rand C_{1:n}=c_{1:n}\mid \rand Y_{1:n}=y_{1:n}}\ln\big(\prob{\rand C_{1:n}=c_{1:n}\mid \rand Y_{1:n}=y_{1:n}}\big)
\end{align*}
whereby we agree that $0\cdot \ln(0)=0$.

Its computation is done by a dynamic programming algorithm.
Therefore let 
\begin{align*}
&\funcb(c_{1:i})=\begin{cases}
\prob{\rand C_{1:i}=c_{1:i}\mid \rand C_{i+1}=i+1, \rand Y_{1:i}=y_{1:i}}& 0< i< n\\
\prob{\rand C_{1:n}=c_{1:n}\mid \rand Y_{1:n}=y_{1:n}}& i=n
\end{cases}
\end{align*}
and for $0<i\leq n$ let $\ecc_{i}=\sum_{c_{1:i}\in\mathbb{N}^i}\funcb(c_{1:i})\ln\big(\funcb(c_{1:i})\big)$.
Ultimately, $-\ecc_n$ corresponds to the entropy of $(\rand C_{1:n}\mid \rand Y_{1:n}=y_{1:n})$.
The following lemma shows that $\ecc_n$ can be derived by computing all the $\ecc_i$'s one by one in a recursive fashion.

\begin{lemma}[Gu{\'e}don]
For $i=1,\ldots,n$ the following holds
\begin{align*}
\ecc_i=\sum_{j=0}^i \tccc_{ji}\big(\ecc_{j-1}+\ln(\tccc_{ji})\big)
\end{align*}
whereby we define $\ecc_{-1}=\ecc_0=0$.
\end{lemma}
\begin{proof}
\begin{align*}
&\ecc_i=\sum_{c_{1:i}\in\mathbb{N}^i}\funcb(c_{1:i})\ln\big(\funcb(c_{1:i})\big)=\sum_{j=0}^i\sum_{c_{1:i}\in\mathbb{N}^i\text{ with }c_{i}=j}\funcb(c_{1:i})\ln\big(\funcb(c_{1:i})\big)\\
&=\tccc_{0i}\ln(\tccc_{0i})+\tccc_{1i}\ln(\tccc_{1i})+\sum_{j=2}^i\tccc_{ji}\underbrace{\sum_{c_{1:j-1}\in\mathbb{N}^{j-1}}\funcb(c_{1:j-1})\ln\big(\funcb(c_{1:j-1})\big)}_{\ecc_{j-1}} +\tccc_{ji}\ln(\tccc_{ji})\\
&=\sum_{j=0}^i \tccc_{ji}\big(\ecc_{j-1}+\ln(\tccc_{ji})\big)
\end{align*}
\end{proof}

Since each $\tccc_{ji}$ is used only twice, the complexity of computing $\ecc_n$ is linear in the number of particles.

\subsection{Pointwise inference}
\label{chap:pointwise_spike}
In the following three sections, we utilize
\begin{align*}
\tqcc_i=\prob{\rand C_i=i\mid \rand Y_{1:n}=y_{1:n}}\quad\text{and}\quad(\rand X_i\mid \rand Y_{1:n}=y_{1:n})
\end{align*}
for $i=1,\ldots,n$.
These are pointwise quantities conditioned on the whole data set.
The first expresses the probability of seeing a \cp at $i$ and the latter represents the random segment height at $i$.
The consideration of these quantities can be referred to as smoothing (see, for example, \cite{rauch_striebel_tung} and \cite{sarkka2013bayesian}).

\cite{Rigaill2012} and \cite{aston2012implied} employ the $\tqcc_i$'s for model selection purposes.
However, they can also be used to pursue more sophisticated tasks.
We can, for example, use the standard deviation of $(\rand X_i\mid \rand Y_{1:n}=y_{1:n})$ to derive meaningful bands for the segment heights or display the trajectory of their expectations.
Moreover, they pave the way for a novel EM algorithm for \cp models (Section \ref{chap:em}).

\begin{lemma}
\label{lem:margcpprob}
For $0<i\leq n$ the following applies
\begin{align*}
&\tqcc_i=\sum_{\ell=i}^{n-1}\tccc_{i\ell}\tqcc_{\ell+1} + \tccc_{in}
\end{align*}
\begin{proof}
For $0<i<n$ we see that
\begin{align*}
&\prob{\rand C_i=i\mid \rand Y_{1:n}=y_{1:n}}=\sum_{\ell=i}^{n-1}	\prob{\rand C_\ell=i, \rand C_{\ell+1}={\ell+1}\mid \rand Y_{1:n}=y_{1:n}}+\tccc_{in}\\
&=\sum_{\ell=i}^{n-1}	\prob{\rand C_\ell=i\mid \rand C_{\ell+1}={\ell+1}, \rand Y_{1:n}=y_{1:n}}\tqcc_{\ell+1}+\tccc_{in}=\sum_{\ell=i}^{n-1}	\tccc_{i\ell}\tqcc_{\ell+1}+\tccc_{in}
\end{align*}
There is nothing to show for $\tqcc_n$.
\end{proof}
\end{lemma}

\begin{myremark}
\label{rem:probseg}
The quantity $\tccc_{i\ell}\tqcc_{\ell+1}$ represents the probability of seeing a segmentation with a segment that starts in $i$ and ends in $\ell$ conditioned on the whole set of observations.
\end{myremark}
\begin{myremark}
\label{rem:expcp}
$\sum_{i=1}^n\tqcc_i$ corresponds to the expected number of \cps given the data.
\end{myremark}

The computation of the $\tqcc_i$'s yields a dynamic programming algorithm that
starts with $\tqcc_n$ and proceeds with $\tqcc_{n-1}$ and so forth until $\tqcc_1$ is reached.
However, this algorithm requires a particular access to specific particles.
Whilst in the previous algorithms, we iterate through the existing particles $(j,i)$ w.r.t. $j$, now we need to iterate over $i$.
Thus, in order to benefit from pruning here, we need to take care of this fact, e.g. by storing a second matrix or using an appropriate sparse representation.
The computation of $\tqcc_i$ for $i=1,\ldots,n$ touches each particle only once and its complexity is thus linear in the number of particles.

\begin{lemma}
\label{lem:margsegmentheights}
For $0<i\leq n$ the following holds
\begin{align*}
&(\rand X_i\mid \rand Y_{1:n}=y_{1:n})\sim \sum_{j=0}^i\heights_{jn}\tccc_{jn}+\sum_{\ell=i}^{n-1}\heights_{j\ell}\tccc_{j\ell}\tqcc_{\ell+1}
\end{align*}
\end{lemma}
\begin{proof}
A marginalization over all segments that contain $i$ yields for $0<i<n$
\begin{align*}
\prob{\rand X_i\in\LargerCdot\mid \rand Y_{1:n}=y_{1:n}}
&=\sum_{j=0}^i\prob{\rand X_i\in\LargerCdot,\rand C_n=j\mid \rand Y_{1:n}=y_{1:n}}\\
&+\sum_{\ell=i}^{n-1}\prob{\rand X_i\in\LargerCdot, \rand C_\ell=j, \rand C_{\ell+1}=\ell+1\mid \rand Y_{1:n}=y_{1:n}}\\
&\overset{*}{=}\sum_{j=0}^i\heights_{jn}\tccc_{jn}+\sum_{\ell=i}^{n-1}\heights_{j\ell}\tccc_{j\ell}\tqcc_{\ell+1}
\end{align*}
In $*$ we have used that the segment heights stay constant within a segment.
\end{proof}

In the unpruned case, $\prob{\rand X_i\in\LargerCdot\mid \rand Y_{1:n}=y_{1:n}}$ is a mixture with $\mcal{O}(n^2)$ components.
Therefore, in contrast to the marginal \cp probabilities, it can pose a huge burden to store these quantities directly.
However, Lemma \ref{lem:margsegmentheights} indicates that $\prob{\rand X_i\in\LargerCdot\mid \rand Y_{1:n}=y_{1:n}}$ shares many components with $\prob{\rand X_{i+1}\in\LargerCdot\mid \rand Y_{1:n}=y_{1:n}}$, a fact that can be exploited nicely.

\begin{lemma}
\label{lem:itersegmentheight}	
For $0<i<n$ the following holds
\begin{align*}
&\prob{\rand X_i\in\LargerCdot\mid \rand Y_{1:n}=y_{1:n}}
=\prob{\rand X_{i+1}\in\LargerCdot\mid \rand Y_{1:n}=y_{1:n}}\\
&-\sum_{\ell=i+1}^{n-1}\heights_{i+1\ell}\tccc_{i+1\ell}\tqcc_{\ell+1}-\heights_{i+1n}\tccc_{i+1n}
+\sum_{\ell=0}^{i}\heights_{\ell i}\tccc_{\ell i}\tqcc_{i+1}
\end{align*}
\end{lemma}

The proof of Lemma \ref{lem:itersegmentheight} is straightforward.
The intuition behind it is that the sets of segments that contain either $i$ or $i+1$, only differ in the segments that start at $i+1$ and end at $i$.
This is because, these are the only segments which do not include $i$ and $i+1$ simultaneously.
These segments need to be either added or removed in order to transition from the set of segments that contain $i+1$ to those that contain $i$.
The initial timepoint to start with is $n$ since $\prob{\rand X_n\in\LargerCdot\mid \rand Y_{1:n}=y_{1:n}}=\sum_{j=0}^n \tccc_{jn}\heights_{jn}$ is already known.

In practice, we are less inclined to compute these probability distributions, we mostly utilize expectations w.r.t. to measurable functions $\funcf:\lat\rightarrow \mathbb{R}^u$, i.e.
\begin{align}
\label{eq:expsegheight}
\expec{\funcf(\rand X_i)\mid \rand Y_{1:n}=y_{1:n}}
\end{align}
Since the expectation w.r.t. a mixture distribution also decomposes into the sum of corresponding expectations, the basic requirement for computing (\ref{eq:expsegheight}) is the feasibility of 
\begin{align}
\label{eq:hoff1}
&\heights_{ji}^{\funcf}=\int \funcf(x)\heights_{ji}(dx)
\end{align}
for $0<i\leq n$ and $0\leq j\leq i$.
Together with Lemma \ref{lem:itersegmentheight}, this allows for an iteration scheme that touches each of the segments only twice, once when it is added and once when it is removed.

\begin{corollary}
\label{cor:exp}
For $0<i<n$ the following holds
\begin{align*}
&\expec{\funcf(\rand X_i)\mid \rand Y_{1:n}=y_{1:n}}\\
&=\expec{\funcf(\rand X_{i+1})\mid \rand Y_{1:n}=y_{1:n}}
-\sum_{\ell=i+1}^{n-1}\heights_{i+1\ell}^{\funcf}\tccc_{i+1\ell}\tqcc_{\ell+1}
-\heights_{i+1n}^{\funcf}\tccc_{i+1n}
+\sum_{\ell=0}^{i}\heights_{\ell i}^{\funcf}\tccc_{\ell i}\tqcc_{i+1}
\end{align*}
\end{corollary}

\begin{theorem}
\label{theo:expcomp}
Given a possibly sparse representation of $\tccc_{ji}$ for $0<i\leq n$ and $0\leq j\leq i$, the computation of $\tqcc_i$ for all $0<i\leq n$ amounts to a time complexity of $\mathcal{O}$(number of particles) and a space complexity of $\mathcal{O}(n)$.
Further assume that we have computed $\heights_{ji}^{\funcf}$ whenever particle $\tccc_{ji}$ exists.
The time and space complexity of computing $\expec{\funcf(\rand X_i)\mid \rand Y_{1:n}=y_{1:n}}$ for all $0<i\leq n$ at once reads $\mathcal{O}$(number of particles) and $\mathcal{O}(n)$, respectively.
\end{theorem}

This poses a huge improvement to the straightforward approach that applies Lemma \ref{lem:margsegmentheights} directly.
Under good conditions, with pruning in place, we are now able to compute arbitrary marginal expectations, and therewith variances, skewnesses and many others, for all $i=1,\ldots,n$ at once with linear complexity w.r.t. $n$.
In particular, this renders the EM algorithm of Section \ref{chap:em} feasible for big data purposes.

These inference formulas for segment height appear to be new within the \cp literature.
However, after developing these formulas, similar approaches dedicated to hidden semi-Markov models came to my attention.
Ultimately, a careful comparison with \cite{hsmm} revealed a close algorithmic relationship between \cp models and hidden semi-Markov models.
The main difference between these models is that the latter employs only discrete distributions for the segment height.

\subsubsection{Pointwise statements in the well-log example}
\label{chap:well-log-pointw}
\begin{figure}[ht]
	\includegraphics[width=\linewidth]{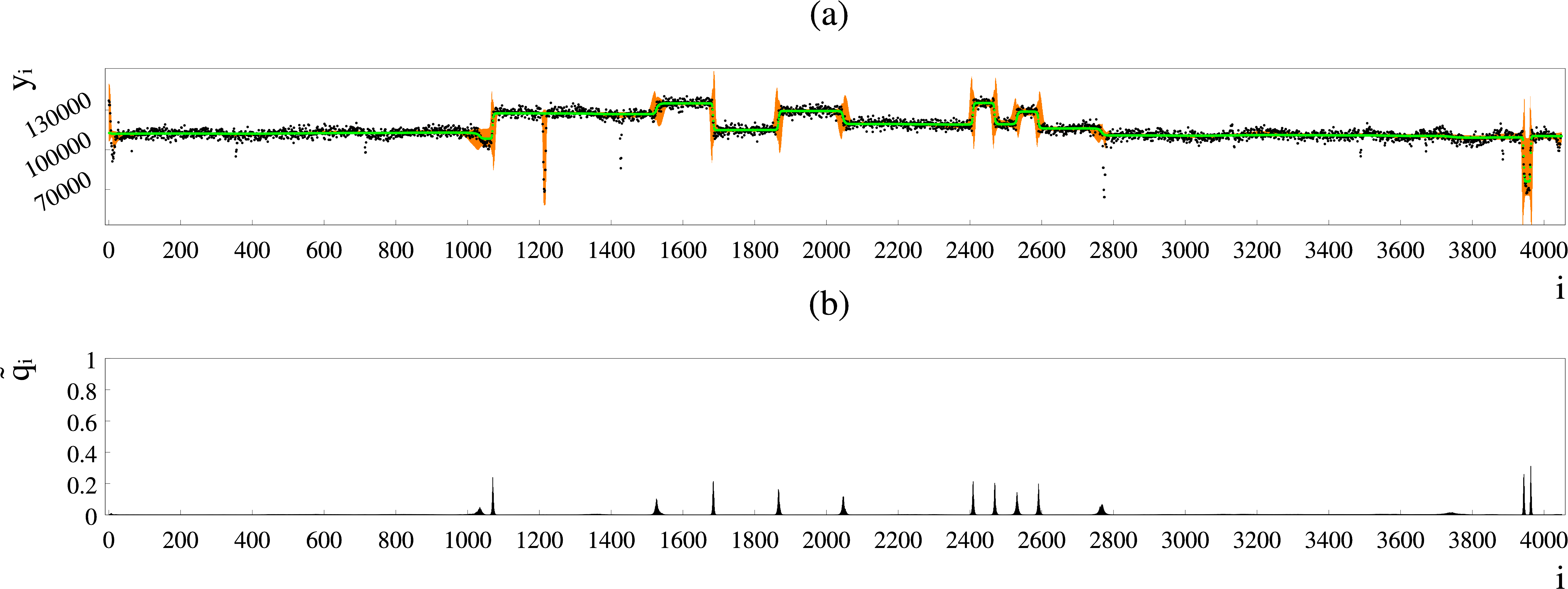}
	\caption[Well-log data with posterior expectations, variances, and \cp histogram]{(a) Well-log data and posterior expectation of the segment heights in green and a depiction of the standard deviation and skewness in orange. (b) Marginal \cp probabilities.
	}
	\label{fig:laplace_marginals}
\end{figure}
We are now interested in several posterior pointwise quantities with regard to our well-log example in Section \ref{chap:well_log_forward}.
To this end, we compute the first three moments of the segment heights.

For $m\in\mathbb{N}$ we define $\funcf_m(x)=x^m$.
According to Corollary \ref{cor:exp}, computing the $m$-th moments of $(\rand X_i\mid \rand Y_{1:n}=y_{1:n})$ for $i=1,\ldots,n$ comes down to computing 
$\heights_{ji}^{\funcf_m}$ whenever particle $(j,i)$ exists.
Since $\heights_{ji}^{\funcf_m}=\Zcc_{ji}^m\slash\Zcc_{ji}^0$ (see Equation (\ref{eq:laplaceexp})),
we achieve this by means of Corollary \ref{cor:intlaplace}.

Figure \ref{fig:laplace_marginals}(a) depicts the well-log data in black and in green it shows the expectations $\mcc_i=\expec{\rand X_i\mid \rand Y_{1:n}=y_{1:n}}$. 
We see that within a segment, the expectations do hardly vary even in the presence of outliers.
However, they move away very rapidly at \cp locations.

We take a simple and fairly arbitrary approach to visualize the standard deviations $\scc\dcc_i=\sqrt{\expec{\big(\rand X_i-\mcc_i\big)^2\mid \rand Y_{1:n}=y_{1:n}}}$ and skewnesses $\scc\kcc_i=\expec{\big(\rand X_i-\mcc_i\big)^3\slash\scc\dcc_i^3\mid \rand  Y_{1:n}=y_{1:n}}$ in a joint fashion.
For each timepoint $i=1,\ldots,n$, (a) depicts the interval
\begin{align*}
\big[\mcc_i-2\cdot(\scc\dcc_i-\scc\kcc_i\ind{\scc\kcc_i< 0}),\ \mcc_i+2\cdot(\scc\dcc_i+\scc\kcc_i\ind{\scc\kcc_i\geq 0})\big]
\end{align*} 
in orange.

This shows that within a segment, the variations of the heights are fairly small.
In turn, at \cp locations their variations become much larger due to their ambiguity towards two different segment heights.

Whilst the mean doesn't follow the outliers, for the example at around 1200, the model expresses its tendencies through its skewness.
Skewness also arises around \cp locations.
This is partially a consequence of the mixed form of the pointwise segmental distributions.
However, as we have seen in Section \ref{chap:laplacian}, thanks to the Laplace distribution, single components of the mixture may also be skewed.

Thus, in the middle of longer segments, where only very few components dominate the posterior height distribution, the Laplace distribution is capable of maintaining skewness.
In contrast, the normal distribution wouldn't allow this.

Figure \ref{fig:laplace_marginals}(b) shows the marginal \cp probabilities $\tqcc_i$ for each timepoint $i=1,\ldots,n$.
Within the Bayesian \cp literature, these quantities are mainly used to justify \cp models or to perform model selection in a manual fashion.
By this means, we can obtain a rough impression of the spots where the model is inclined to place \cps.
According to Remark \ref{rem:expcp}, the expected number of \cps is 17.8.

Due to the location uncertainty, single variations in the data can usually be explained through \cps w.r.t. more than one location.
Within such a set of locations we usually see only one \cp at a time.
As a consequence, the probability of seeing \cps there is shared among the corresponding $\tqcc_i$'s.
For this reason, the heights in Figure \ref{fig:laplace_marginals}(b) appear to be quite disorganized and not very informative in terms of the sensitivity of the model to place \cps.

Consider, for example, the dip at around 3800.
The marginal \cp probabilities appear suspiciously low and broad there.
In fact, the (empirical) probability of seeing at least one \cp between 3600 and 3900, inferred from $10^7$ \cp samples (see Section \ref{chap:sampling}), amounts to 0.76, which is a highly unexpected result. 
We address this issue by means of a novel visualization scheme in Section \ref{chap:simcredreg}.

Instead of the moments, we could also compute median and interquartile range.
Even though this appears to be more suitable for the Laplace distribution, these quantities cannot be represented  in terms of simple expectations and their implementation is in particular not straightforward for mixture distributions.

\subsection{Parameter estimation via likelihood and the EM algorithm}
\label{chap:param}

In this section, we discuss means to estimate the parameters of a \cp model.
Thereby, we focus on the marginal loglikelihood function (Remark \ref{rem:marglikelihood}).
Examples for parameters are $\mu,\sigma,\tau,q$ and $r$ in our well-log example (Section \ref{chap:well_log_forward}).

Assume that $\theta$ represents the underlying parameter (or parameters) of interest and $\Theta$ a real coordinate  space that represents all possible values of $\theta$.
For the purpose of choosing a value for $\theta$, we have to take into account that the latent \cp process remains unknown.
This is important because the model depends on the parameters and the segmentation jointly.

A primal approach could be to estimate the set of parameters, \cp locations and segment heights simultaneously.
However, this is very susceptible towards overfitting and therewith paves the way for degenerated models.

A workaround is to fit the parameters in accordance to a predetermined segmentation or if appropriate, a single representative segment.
This takes advantage of prior knowledge, but doesn't guarantee that the model will later be sensitive towards this choice of segmentation or segment.
More importantly, it ignores the uncertainty behind the choice of the segmentation.

Bayesian techniques are capable of incorporating uncertainty into the estimation.
A common approach is to model $\theta$ through a random variable $\rand \theta$ and obtain a Bayes estimator, e.g. the posterior mean, $\expec{\rand \theta\mid \rand Y_{1:n}=y_{1:n}}$.

Such an estimator absorbs all possible segmentations through a marginalization over $\rand X_{1:n}$ and $\rand C_{1:n}$.
This nicely takes care of the segmentation uncertainty.
Furthermore, computing the expectation over $\rand \theta$ addresses the uncertainty behind the parameters.

In this course, however, we focus on the segmentation uncertainty.
Thus, we leave the parameters non-random and consider the likelihood marginalized over all possible segmentations
\begin{align*}
L_\theta=\prob{\rand Y_{1:n}\in dy_{1:n}\semic \theta}\densslash \psi(dy_{1:n})
\end{align*}
The sign $\semic$ introduces the list of non-random parameters that are to be estimated.
In Remark \ref{rem:marglikelihood}, we have already explained how to compute $L_\theta$.

Under some conditions, finding a maximum of $L_\theta$ yields sensible estimates for $\theta$. 
One of these conditions is that the distribution behind the data and the distribution employed in the \cp model match sufficiently.

This can be an awkwardly strong obligation, especially, if it implies that the model is build from mixture distributions, e.g. where outliers are modeled separately.
In fact, for \cp inference, a tendency to organize the datapoints correctly into segments, is mostly sufficient.
Thus, \cp models just need to prove a  tool not to reflect reality, a perspective that lies more in the heart of a Bayesian approach than a frequentist one.
In this context, we may additionally resort to prior knowledge to steer the parameter estimation, by means of the marginal likelihood, into the right direction.
Section \ref{chap:well-log-para} demonstrates this for the well-log example.

A simple grid search poses a first heuristic to find a maximizer of $L_\theta$.
However, if $\theta$ is of higher dimension, the required grid can easily become too huge to be scanned through.
Moreover, the grid may be to rough at important sights, which promotes inaccurate results.

To this end, we could demand for more sophisticated algorithms like gradient descent \citep{cauchy1847methode}, but we are usually unable to compute derivatives of $L_\theta$.
This is because $L_\theta$ is the product of an intricate marginalization that entangles $\theta$ and the observations in manifold ways.

Fortunately, the EM algorithm is able to overcome this issue by exploiting the fact that our \cp models factorize into much smaller and simpler distributions.

In the following, we will first discuss the EM algorithm for \cp models in general (Section \ref{chap:em}).
Afterwards, we discuss the EM algorithm for segment length for the geometric as well as the negative binomial distribution (Section \ref{chap:geomem}).
Section \ref{chap:well-log-para} elaborates a parameter estimation approach for the well-log example.
Finally, we investigate our EM algorithm for segment height and observation and apply it to the Laplacian change in median model followed by a brief simulation study (Section \ref{chap:emheiobs} and \ref{chap:emsigmalap}).

\subsubsection{An EM algorithm for \cp models}
\label{chap:em}
The expectation-maximization (EM) algorithm \citep{emalgorithm} is a popular way to gain estimators for the parameters of a statistical model that involves latent or unobserved variables.
To this end, it provides local maximizers of the marginal likelihood.
The great advantage of the EM algorithm is that it exploits the factorized structure many statistical models including \cp models are subject to.

The EM algorithm works on the basis of densities. 
Therefore, assume that $(\lat, \latsigma)$ is endowed with a measure $\xi$ such that $\jprior(dx)\densslash \xi(dx)$ exists.

For $c_{1:n}\in\mathbb{N}^n$ and $x_{1:n}\in\lat^n$, let $\cloglik_\theta(c_{1:n}, x_{1:n})$ be the complete loglikelihood of the \cp model w.r.t. $c_{1:n}$ and $x_{1:n}$, i.e.
\begin{align}
\label{eq:emdecompose}
\cloglik_\theta(c_{1:n}, x_{1:n})=&\sum_{i=1}^n\ind{c_i=i}\funcg(x_i\semic \theta)+\funcf(y_i\mid x_i\semic \theta)\\
&+\ln(\prob{\rand C_1=c_1\semic \theta})+\sum_{i=2}^n\ln(\prob{\rand C_i=c_i\mid \rand C_{i-1}=c_{i-1}\semic \theta})
\nonumber
\end{align}
whereby $\funcg(x\semic \theta)=\ln(\jprior(dx\semic \theta)\densslash \xi(dx))$ and $\funcf(y_i\mid x\semic \theta)=\ln(\dens(\rand Y_i=y_i\mid \rand X_i=x\semic \theta))$.
For simplicity, we assume that $\dens(\rand Y_i=y\mid \rand X_i=x\semic \theta)$ is of the same form for all $i=1,\ldots,n$ and agree that $\ln(0)=-\infty$ and $0\cdot \ln(0)=0$.

\begin{algo}[EM algorithm]
\label{alg:em}
Choose a $\theta_0\in\Theta$ to start with.
In step $\ell=1,2,\ldots$ derive $\theta_\ell$ as an element of 
\begin{align}
\label{eq:em}
\argmax_{\theta\in\Theta}\Big\{\expec{\cloglik_\theta(\rand C_{1:n}, \rand X_{1:n})\mid \rand Y_{1:n}=y_{1:n}\semic \theta_{\ell-1}}\Big\}
\end{align}
Keep iterating over $\ell$ until $\theta_\ell$ converges.
\end{algo}

This convergence is justified theoretically if $L_\theta$ is bounded (see, for example, \cite{bishop2014}).
However, there is no guarantee that a global maximum of $L_\theta$ will be reached.
Instead, we can only guarantee for a local maximum.

In practice, convergence is reached if the distances of successive values of $\theta_\ell$ fall below a self defined threshold.
If pruning is in place, instead of converging, the values of $\theta_\ell$ may start to circulate.
This needs to be taken care of in order to avoid unnecessary u-turns.
\cite{hoffman2014no} describe such an approach in another context.


Since the expectation of a sum decomposes into a sum of expectations, we only need to compute the expectations w.r.t. to the single summands in (\ref{eq:emdecompose}).
Furthermore, the desired parameters do usually solely affect either the distribution of time, observation or segment height.
Thus, we consider these cases separately in order to focus on their particular characteristics.

The EM algorithm is an abstract algorithm, where each statistical model requires its own implementation.
Several instances for particular models have been developed, 
most notably the Baum-Welch algorithm for (discrete) hidden Markov models \citep{rabiner1986introduction}.
Lately, interesting progress has been made in connection with continuous time hidden Markov models \citep{liu2015efficient, roberts2008algorithm}. 
This concerns the estimation of the entries of the underlying generator matrix, given the observations.

Intrinsic to this estimation is the exploitation of endpoint-conditioned distributions (see also \cite{hobolth2009simulation}).
This is a decomposition of the posterior distribution, based on the discrete grid of observational timepoints.
Even though, in some cases, this is applicable to \cp models as well, in the following, we elaborate algorithms that require a more sophisticated approach.
That is, a decomposition based on segmentations.

Until today, there are some particular instances of the EM algorithm for \cp models available.
It has been shown that an EM algorithm can be developed to detect \cp locations \citep{keshavarz2014expectation}.
\cite{bansal2008application}, in turn, estimate the prior probabilities $\prob{\rand C_i=i}$ in the single \cp case by means of an EM algorithm.
Furthermore, stochastic EM algorithms for parameter estimation in an online fashion have been developed \citep{yildirim2013online}.

We demonstrate implementations of our EM algorithm, which is concerned with parameter estimation, for a certain fundamental class of \cp models and thereby provide a basic guide.
At the same time, we discuss its limits in regard to feasibility.

We assume that we are in the middle of the iteration process of Algorithm \ref{alg:em}.
The current value of $\theta$ is $\tho$ and our task is to conduct the next EM step
to provide a new value, say $\thn$.
In order to keep the notation uncluttered, we will mostly omit and not mention $\tho$.
However, it is important to keep in mind that the employed posterior distributions are build on this particular assignment of $\theta$.

\subsubsection{An EM algorithm for segment length}
\label{chap:geomem}
At first, we elaborate the case where the sojourn time is geometrically distributed.
Through the iteration procedure of the EM algorithm, we want to compute the success probability $q$ as part of $\theta$, whereby $q$ is not employed in any other distribution.
We assume that $\qold$ is the value of $q$ within $\tho$ and $\qnew$ the value of $q$ within $\thn$.
Subsequently, we explain how to obtain estimates for the success probability of the negative binomial distribution when the number of successes is given.

In the geometric case we have
\begin{align}
\nonumber
&\ln(\prob{\rand C_i=c_i\mid \rand C_{i-1}=c_{i-1}})&&=\ind{c_i=i}\ln(q)+\ind{c_i<i}\ln(1-q)\\
\label{eq:loggeomprob}
&\ln(\prob{\rand C_1=c_1})&&=\ind{c_1=1}\ln(q)+\ind{c_1=0}\ln(1-q)
\end{align}
for $i=2,\ldots,n$.

\begin{lemma}
\label{lem:geomem}
We may set 
\begin{align*}
&\qnew=\frac{\tqcc_1+\ldots+\tqcc_n}{n}
\end{align*}
\end{lemma}

\begin{proof}
We just need to find a global maximum of
\begin{align*}
\sum_{i=1}^n\tqcc_i\ln(q)+(1-\tqcc_i)\ln(1-q)
\end{align*}
w.r.t. to $q$.
\end{proof}
\noindent
According to Remark \ref{rem:expcp}, in each step, this EM algorithm derives the expected number of \cps and divides it by the number of timepoints.

Similar to \cite{bansal2008application}, we may assume that the $\qcc_{ji}$ are autonomous probabilities and independent of $j$, say $\qcc^i=\qcc_{ji}$.
In this case, each step of the EM algorithm simply yields $\qcc^{i}_{new}=1-\tqcc_{i}$ for $i=1,\ldots,n$.

%
%
%
%
%

Now, we discuss the EM algorithm for the success probability, $q$, of the negative binomial distribution with a predetermined $r$.
We employ $\qnew$ and $\qold$ as before.

This time, the expectation in Equation (\ref{eq:em}) will be composed based on all possible segments from $j$ to $i$ with $0<i\leq n$ and $0\leq j\leq i$.
In this case, the negative binomial distribution contributes as $r\ln(q)+(i-j)\ln(1-q)$, whereby we have skipped the binomial coefficient since it does not rely on $q$.

In the following, we distinguish three kinds of segments: introductory segments with $0=j<i\leq n$, intermediate segment with $0<j\leq i<n$ and final segments with $0<j\leq i=n$.

Each segment accounts for a single summand and all summands together constitute the expectation in Equation (\ref{eq:em}).
The intermediate segments are the easiest to deal with.
\begin{lemma}
	The summand for the intermediate segment from $j$ to $i$ with $0<j\leq i< n$ can be represented as
	\begin{align*}
	\tccc_{j i}\tqcc_{i+1}\big(r\ln(q)+(i-j)\ln(1-q)\big)
	\end{align*}	
\end{lemma}
\begin{proof}
	For the segment from $j$ to $i$ we get
	\begin{align*}
	\prob{\rand C_i=j, \rand C_{i+1}=i+1\mid \rand Y_{i:n}=y_{1:n}}\ln\big(q^r(1-q)^{i-j}\big)
	=\tccc_{j i}\tqcc_{i+1}\ln\big(q^r(1-q)^{i-j}\big)
	\end{align*}
	whereby we have used Remark \ref{rem:probseg}.
\end{proof}

We assume that the data observation was stopped in the middle of the data generating process that goes on forever.
In order to facilitate the computation of the summands w.r.t. the final segments, we incorporate them as a whole and not only until timepoint $n$.
Let $\rand S$ be the length of the final segment, i.e. the segment that contains timepoint $n$.

\begin{lemma}
We conclude that for $0< j\leq n$ and $n-j\leq i$
\begin{align*}
\prob{\rand S=i, \rand C_n=j\mid \rand Y_{1:n}=y_{1:n}}=\frac{\tccc_{j n}\cdot \text{NB}(i\semic \qold,r)}{\sum_{\ell=n-j}^\infty\text{NB}(\ell\semic \qold,r)}
\end{align*}
\end{lemma}
\begin{proof}
This follows from $(\rand S\indep \rand C_{1:n-1}, \rand Y_{1:n}, \rand X_{1:n}\mid \rand C_n)$.
\end{proof}

\begin{corollary}
We may use the following summand for the final segment that starts at $0<j\leq n$
\begin{align*}
&\sum_{i=n-j}^\infty \prob{\rand S=i, \rand C_n=j\mid  \rand Y_{1:n}=y_{1:n}}\big(r\ln(q)+i\ln(1-q)\big)\\
&=\tccc_{j n}r\ln(q)+\tccc_{j n}\Bigg(\sum_{\ell=n-j}^\infty\ell\cdot\text{NB}(\ell\semic \qold,r)\Bigg)\Bigg\slash \Bigg(\sum_{\ell=n-j}^\infty\text{NB}(\ell\semic \qold,r)\Bigg)\ln(1-q)
\end{align*}
\end{corollary}

We further assume that the data was observed in the middle of the process which started infinitely long ago.
Let now $\rand S$ represent the random length from 0 to the end of the first segment.
We get that $\prob{\rand S=i}= \frac{q}{r(1-q)}\sum_{\ell=i+1}^\infty\text{NB}(\ell\semic q,r)$, whereby
$\rand S=0$ corresponds to $\rand C_1=1$ and $\rand S=i$ with $i>0$ corresponds to $\rand C_{i}=0$ and $\rand C_{i+1}=i+1$.
This distribution incorporates $q$ in a fairly difficult manner rendering an exact computation of the desired summand for the EM algorithm infeasible.

While the negative binomial distribution may heavily concentrate its mass to specific lengths, the distribution of $\rand S$ allows a much larger freedom.
Intuitively speaking, it puts the most of its mass to all the lengths that are not longer than those lengths which are likely under the corresponding negative binomial distribution.
Therefore, it is inappropriate to simply consider $1$ as the first \cp.
Instead, we will use minor modifications of the length distribution to address these difficulties.

The distribution of $\rand S$ has its mode at 0 and is monotonously decreasing.
Thus, a geometric distribution appears to be a reasonable replacement.
Its success probability, say $q'$, can be chosen as $q\slash r(1-q)$.
We come to this conclusion by looking at
\begin{align*}
\prob{\rand S=i}=\frac{q}{r(1-q)}\Bigg(1-\sum_{\ell=0}^{i}\binom{r-1+\ell}{\ell}q^r(1-q)^{\ell}\Bigg)
\end{align*}
Considering $i=0$, this means that $\big(q(1-q^r)\big)\slash\big(r(1-q)\big)$ appears to be the right candidate for $q'$.
However, in order to keep things simple, we neglect $q^r$ and set $q'=q\slash\big(r(1-q)\big)$ instead.

Please note that this approach precludes values of $q$ that are larger than $r\slash(r+1)$, because in this case, $q'$ doesn't represent a probability anymore.
Nevertheless, since for $q=r\slash(r+1)$ we already get a nonsensical expected segment length of 1 under the negative binomial distribution, this restriction is insignificant.

\begin{corollary}
By means of the above geometric approximation, the summand for the segment from $0$ to $i$ with $0< i\leq n$ can be represented as
\begin{align*}
\tccc_{0 i}\tqcc_{i+1}\big(\ln(q)-\ln(1-q)+i\ln\big(1-q\slash\big(r(1-q)\big)\big)\big)
\end{align*}
\end{corollary}

Having computed the summands for all segments, the expectation in (\ref{eq:em}), which is subject to maximization, finally equates to
\begin{align}
\label{eq:emnegbinexp}
c_1\ln(q)+c_2\ln(1-q)+c_3\ln\big(1-q\slash\big(r(1-q)\big)\big)
\end{align}
with constants $c_1,c_2$ and $c_3$.
\begin{lemma}
Let
\begin{align*}
z_{1, 2}=\frac{r(c_1+c_2)+c_1(r+1)+c_3\pm\sqrt{\big(r(c_1+c_2)+c_1(r+1)+c_3\big)^2+(r+1)(c_1+c_2)c_1r}}{4(r+1)(c_1+c_2)}
\end{align*}
A maximizer for $q$ in Equation (\ref{eq:emnegbinexp}) can be chosen uniquely from $\{z_1,z_2\}$.
\end{lemma}
\begin{proof}
Differentiating (\ref{eq:emnegbinexp}) and setting to 0 yields
\begin{align}
\nonumber
&c_1\slash q-c_2\slash(1-q)-\frac{c_3}{r(1-q)^2-q(1-q)}=0\\
&\Leftrightarrow q^2(r+1)(c_1+c_2)-q\big(r(c_1+c_2)+c_1(r+1)+c_3\big)+c_1r=0
\label{eq:quadeq1}
\end{align}
whereby $q\not\in\{0,1, r\slash (r+1)\}$.
We can easily verify that $c_1>0$ and $c_3\geq 0$.
Since $0<\qcc_{01}<1$, we even get $c_3>0$.
Substituting $q=0$ into the l.h.s. of Equation (\ref{eq:quadeq1}) yields $c_1r$, which is positive, and
for $q=r\slash(r+1)$ we get $-c_3$, which is negative.
Therefore, the l.h.s. of Equation (\ref{eq:quadeq1}) exhibits one single root between $0$ and $r\slash(r+1)$.
Finally, $q\leq r\slash (r+1)$ implies that this is the only solution.  
\end{proof}

The EM algorithm for the negative binomial distribution essentially parses through the existing $\tccc_{ji}$'s and touches each of them exactly once.
Thereby, it employs the marginal \cp probabilities $\tqcc_i$ for $i=1,\ldots,n$.
Assume that we have computed $\tccc_{ji}$ for $0\leq j\leq i$ and $0< i\leq n$.
By virtue of Theorem \ref{theo:expcomp}, we may state that the time and space complexity of each EM step for $q$ in the geometric as well as in the negative binomial case is $\mathcal{O}$(number of particles).

\subsubsection{Parameter estimation in the well-log example}
\label{chap:well-log-para}

\begin{figure}[ht]
	\includegraphics[width=\linewidth]{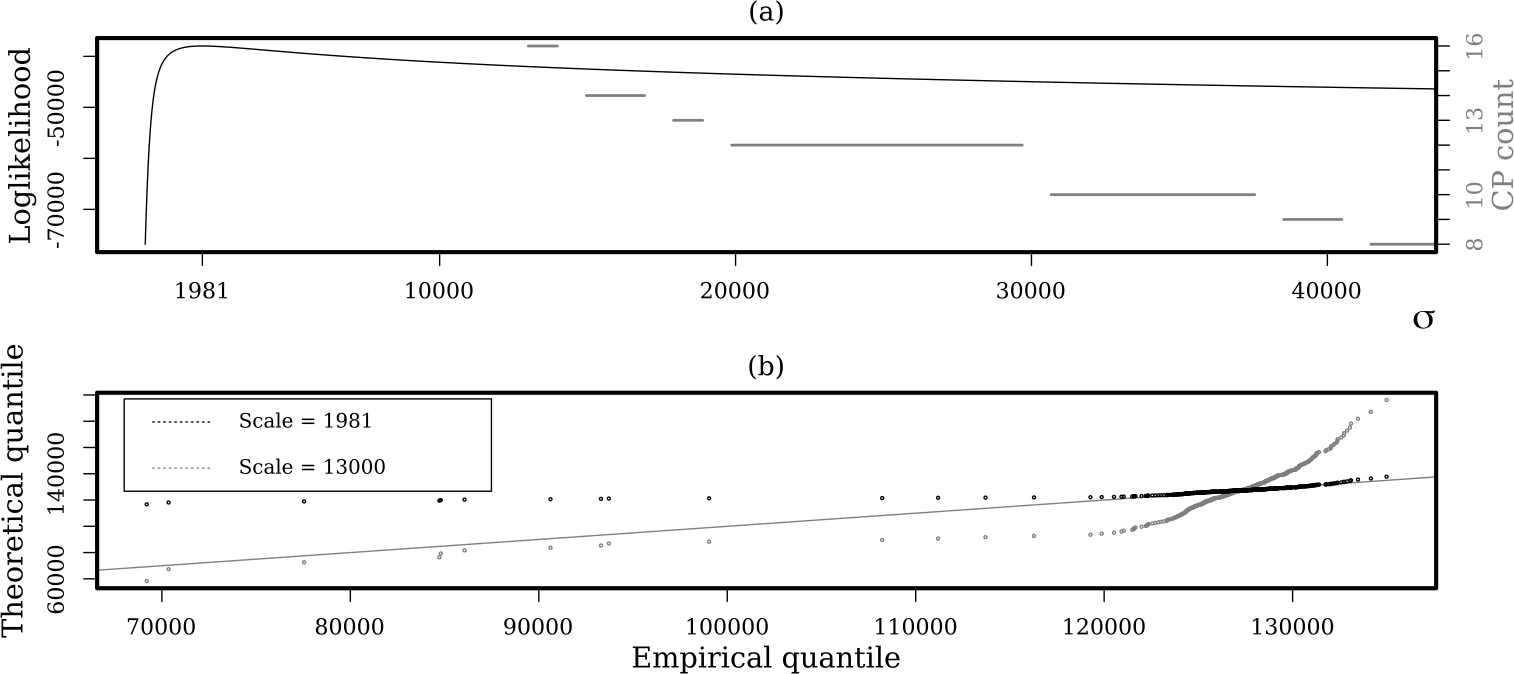}
	\caption[Loglikelihood with MAP and Q-Q plot]{(a) Illustration of the loglikelihood (left axis) and number of \cps inferred by the MAP (right axis) as a function of $\sigma$. (b) Q-Q plot of the empirical distribution of the datapoints from 1100 to 1500 against two Laplace distributions w.r.t. two different scale parameters. The straight line represents the diagonal.		
	}
	\label{fig:log-likelihood_laplace_sigma}
\end{figure}

In this section, we are concerned with determining the parameters $\sigma$, $q$ and $r$ in the well-log example of Section \ref{chap:well_log_forward}.
$\mu$ and $\tau$ remain fixed here.
Estimating the parameters of a statistical model is commonly referred to as training or learning.
Once trained, the finalized model can be employed straight away whenever new data is available.
To this end, we elaborate a practically adapted approach that employs the EM algorithm for segment length (Section \ref{chap:geomem}) and the MAP (Section \ref{chap:condlikandmap}).

We start with discussing $\sigma$.
Figure \ref{fig:log-likelihood_laplace_sigma}(a) depicts the loglikelihood, $\ln(L_\sigma)$, of our \cp model and the number of \cps inferred by the MAP, both w.r.t. $\sigma$ and with fixed $q=0.01430724$ and $r=3$.

$L_\sigma$ exhibits a single maximum at 1981, which yields an unreasonable \cp count of far more than 16.
Thus, an estimator for $\sigma$, solely based on maximizing the likelihood, doesn't provide reasonable estimates here.

The seemingly right-skewed shape of $\ln(L_\sigma)$ suggests that the posterior mean w.r.t. $\sigma$ may provide more suitable estimates.
For this sake, we put a prior on $\sigma$, which is the uniform distribution from 50 to 100000.
Furthermore, we use a piecewise linear approximation of $\ln(L_\sigma)$ to compute $\expec{\rand \sigma\mid\rand Y_{1:n}=y_{1:n}}$ in the fashion of Corollary \ref{cor:intlaplace}. 
Surprisingly, this yields a similarly useless value of approximately 1984.
This is because, those values of $\sigma$ near the peak dominate the whole calculation.

Another naturally arising estimation procedure is to infer $\sigma$ by fitting a manually chosen representative segment.
This has the advantage that it can be performed independently of $q$.

We take the segment from timepoint 1100 to 1500.
Figure \ref{fig:log-likelihood_laplace_sigma}(b) depicts two Q-Q plots and the diagonal.
One Q-Q plot shows the empirical distribution against the Laplace distribution with scale  equal to 1981 and location equal to the segment's median.
The other Q-Q plot depicts the same distributions but this time with scale equal to 13000.

The fact that 1981 maximizes $L_\sigma$ is reflected by the good fit of the intermediate quantile pairs to the diagonal.
However, the lower quantile pairs are situated far above the diagonal, which indicates that the corresponding theoretical quantiles are much larger than the empirical ones.
Thus, the large deviations present in the data are very unlikely for this particular Laplace distribution, which provokes many undesirable \cps.
In turn, $\sigma=13000$ appears to be a more suitable choice, since the corresponding lower quantile pairs stay below the diagonal.

What we can learn thereof, is that the data within the segments does not stem from a Laplace distribution.
Apparently, the data is strongly asymmetric, but more importantly, the portion of large deviations is simply too small. 
A likelihood based estimation does, however, rely on the assumption that the empirical and theoretical distributions match sufficiently.
As a result, maximizing $L_\sigma$ yields a far too small value for $\sigma$.

Having set $\sigma$ to 13000, we would now choose $q$.
$0.0088$ yields an average length of 339 for the negative binomial distribution with $r=3$ and thus around twelve \cps on average within the 4049 datapoints.

Therewith, the MAP estimator does not detect any \cps within the considered segment, but 2 extra \cps, one at 1034 and one at 3744.
The latter is highly undesirable, since it belongs to a perturbation in the data that takes the form of a dip.

However, the same reasoning, based on Q-Q plots, conducted on the segment that contains 3744 doesn't lead to a larger $\sigma$.
Thus, in order to get rid of the additional \cps, we either have to increase $\sigma$ or decrease $q$ based on a different approach.

Since the additional segments concerning the \cp at 3744 have reasonable lengths, we shouldn't address this by decreasing $q$.
Furthermore, the probability of segments with a length of at most 50 is only 1\% under the negative binomial distribution with $r=3$ and $q=0.0088$.
Such segments are readily present in the data suggesting that we should not only increase $\sigma$, but also $q$.

Hence, our final approach is to estimate $q$ and $\sigma$ in a simultaneous fashion.
Thereby, we resort to the likelihood to obtain appropriate values for $q$ and the MAP to adjust $\sigma$.
The MAP provides a most likely \cp configuration and is therefore well-suited for controlling the tendencies of the posterior distributions.
Fortunately, those MAP estimates having twelve \cps seem to match sufficiently with our desired locations depicted in Figure \ref{fig:laplace_forward}(a).

It is reasonable to assume that our well-log data represents an average example for this application.
Therefore, we aim at selecting a set of parameters that lies centrally within a region of suitable values for $\sigma$ and $q$.
This induces a sense of robustness, which also supports the reusability of the estimated parameters in case new data is observed.

Additionally, we need to take numerical aspects into account.
The roles of $q$ and $\sigma$ are mutually intertwined.
We can, for example, obtain longer segments by either decreasing $q$ or increasing $\sigma$.
Hence, smaller values of $q$ might be combined with smaller values of $\sigma$ and vice versa.
This collinearity paves the way to nonsensical, extreme parameter choices, which may seriously impair the accuracy during the calculations.
Aiming at high likelihood values wherever possible mitigates the risk concerning this issue.

To this end, we employ the function $\sigma\mapsto \hat \qcc$ that returns the EM estimate for $q$ given $\sigma$.
By this means, we choose $\sigma$ such that the MAP solidly yields twelve \cps given that $q$ is set to $\hat \qcc$ and $r$ is set to 3.

\begin{figure}[ht]
	\includegraphics[width=\linewidth]{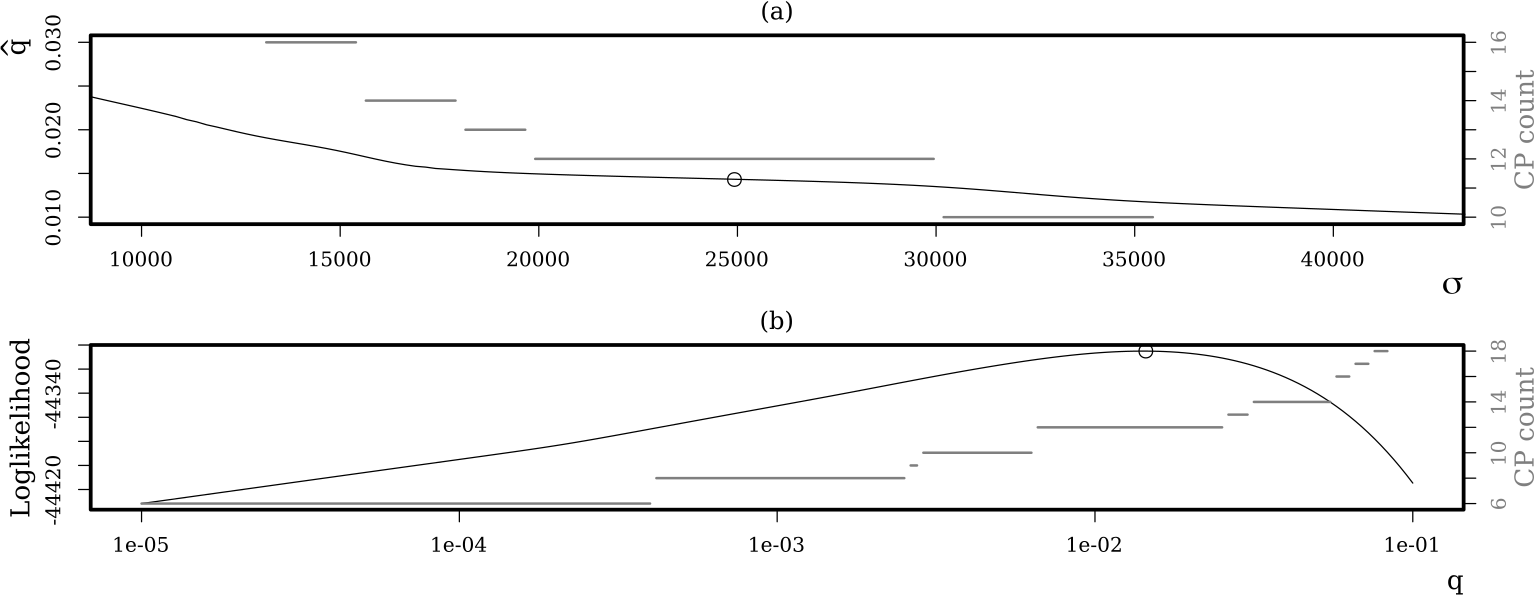}
	\caption[EM estimate for $q$ with MAP and loglikelihood w.r.t. $q$]{(a) Illustration of $\hat \qcc$ (left axis) and the number of \cps inferred by the MAP (right axis) w.r.t. $\sigma$.
		The circle marks the chosen set of parameters for $\sigma$ and $q$.
		(b) Illustration of the loglikelihood (left axis) and the number of \cps inferred by the MAP (right axis) w.r.t. $q$ and $\sigma=25000$.
		The circle marks the global maximum of the likelihood.
	}
	\label{fig:em-log-likelihood_laplace}
\end{figure}

Figure \ref{fig:em-log-likelihood_laplace}(a) demonstrates this.
It shows the graph of $\hat \qcc$ w.r.t. $\sigma$ and the corresponding \cp count inferred by the MAP.
$\sigma=25000$ is roughly centered within those values that yield twelve \cps.
The corresponding value of $\hat \qcc$ reads $0.01430724$.

Figure \ref{fig:em-log-likelihood_laplace}(b) shows the loglikelihood w.r.t. $q$ while having set $\sigma$ to $25000$, and the corresponding \cp count inferred by the MAP.
We see that the value $0.01430724$, indicated by the circle, is also roughly centered within those values that yield twelve \cps.
Moreover, the loglikelihood exhibits only one single global maximum being correctly inferred by the EM algorithm.

Therewith, we have considered two orthogonal directions of the $(\sigma, q)$-plane.
Although unlikely, this would be misleading if the MAP estimates spread very irregularly in the neighborhood of our selected set of parameter values.

Figure \ref{fig:cp_art}, however, suggests that this is not the case.
It depicts an excerpt of the $(\sigma, q)$-plane and colored tiles rendered in accordance with the \cp counts inferred by the MAP.
The upper and lower boundaries of the tiles w.r.t. $\sigma$ were computed by a simple binary search heuristic.
The cross points to $\sigma=25000$ and $q=0.01430724$.

As we see, our set of parameters is situated centrally within a comparably thick and regular region.
The unsteady or even incoherent nature of other regions indicates that there are dissimilar \cp configurations with equal \cp counts.
The aforementioned collinearity between $\sigma$ and $q$ is reflected through the beveled appearance of the regions.

Finally, we discuss the choice of $r$.
With increasing $r$, the mode of the length distribution shifts away from zero and thereby potentially increases the minimal size of the segments.
This mitigates the influence of outliers and thus allows for a slight decrease of the segmental scale.
Therewith, the sensitivity of the model is improved without impairing its specificity towards outliers.

However, we cannot increase $r$ arbitrarily in the presence of comparably small segments.
We found that $r$ should not be larger than 3.
This is because for $r>3$ the negative binomial distribution, when fitted to the segment lengths, turns out to be too sharply peaked.
As a result, the sensitivity towards unreasonable \cps soars.
We found that for $r=5,6$ and $q=\hat \qcc$, we can't even obtain a $\sigma$ that yields a MAP with twelve \cps.

\subsubsection{An EM algorithm for segment height and observation}
\label{chap:emheiobs}
Now, we elaborate the EM algorithm for segment height and observation by means of three exponential family alike families of distributions.
They cover a good range of practical cases and convey the basic idea behind this nicely.
We will further contemplate complexity and feasibility concerns.

We assume that $\funcf$ and $\funcg$ solely rely on their very own parameters $\theta^g$ and $\theta^f$.
$\theta^g$ and $\theta^f$ are defined on sets $\Theta^g$ and $\Theta^f$, respectively, and are part of $\theta$, but not linked to each other or any parameter that concerns the distribution of segment length.

Thus, we maximize the expectations of $\sum_{i=1}^n\funcf(y_i\mid \rand X_i\semic \theta^f)$ and $\sum_{i=1}^n\ind{\rand C_i=i}\funcg(\rand X_i\semic \theta^g)$ separately (see Equation (\ref{eq:emdecompose}) and Algorithm \ref{alg:em}).
We use $\tho^f$ and $\tho^g$ as well as $\thn^f$ and $\thn^g$ to represent the values of $\theta^f$ and $\theta^g$ within $\tho$ and $\thn$. 


\begin{lemma}
\label{lem:emheight}
Assume that $\funcg(x\semic \theta^g)=\eta(\theta^g)^tT(x)+A(\theta^g)$ with functions $\eta:\Theta^g\rightarrow\mathbb{R}^u$, $A:\Theta^g\rightarrow \mathbb{R}$ and measurable $T:\lat\rightarrow\mathbb{R}^u$.
Any value of $\theta^g$ that maximizes
\begin{align}
\label{eq:lememheight}
(\tqcc_1+\ldots+\tqcc_n)A(\theta^g)+\sum_{\ell=1}^u\eta_\ell(\theta^g)\sum_{i=1}^n \expec{T_\ell(\rand X_i)\ind{\rand C_i=i}\mid \rand Y_{1:n}=y_{1:n}}
\end{align}	
is a candidate for $\thn^g$.
\end{lemma}
\begin{proof}
This follows directly from the particular form of \funcg.
\end{proof}
The functional form of \funcg in Lemma \ref{lem:emheight} states that $\jprior$ is a member of the exponential family w.r.t. $\theta^g$.
Since $\theta^g$ and $x$ are employed in separate factors within $\funcg$, $\theta^g$ does not concern the expectations drawn from the $T_\ell(\rand X_i)\ind{\rand C_i=i}$'s.
Thus, the ensuing maximization w.r.t. $\theta^g$ can fortunately be performed based on constant expressions for these expectations.
If this was not the case, maximization in closed form would easily become a hopeless task.

Typically, we obtain maximizers of (\ref{eq:lememheight}) through differentiating and zeroing. 
In this case, we require that the $\eta_\ell$'s and $A$ are (partially) differentiable w.r.t. $\theta^g$. 
The posterior expectation of $T_\ell(\rand X_i)\ind{\rand C_i=i}$ for $\ell=1,\ldots,u$ and $i=1,\ldots,n$ needs to be feasible.
This feasibility relies on the ability to compute all the $\heights_{ji}^{T_\ell}$'s (see Equation (\ref{eq:hoff1})) as the following lemma shows.
\begin{lemma}
\label{lem:expTF}
For $\ell=1,\ldots,u$ and $0<i\leq n$ we get
\begin{align*}
\expec{T_\ell(\rand X_i)\ind{\rand C_i=i}\mid \rand Y_{1:n}=y_{1:n}}= \sum_{j=i}^{n-1}\heights_{ij}^{T_\ell}\tccc_{ij}\tqcc_{j+1}+\heights_{in}^{T_\ell}\tccc_{in}
\end{align*}
\end{lemma}
\begin{proof}
This follows from
\begin{align*}
&\prob{\rand X_i\in\LargerCdot, \rand C_i=i\mid \rand Y_{1:n}=y_{1:n}}\\
=&\sum_{j=i}^{n-1}\prob{\rand X_j\in\LargerCdot, \rand C_{j+1}=j+1, \rand C_j=i\mid \rand Y_{1:n}=y_{1:n}}
+\prob{\rand X_n\in\LargerCdot, \rand C_n=i\mid \rand Y_{1:n}=y_{1:n}}\\
=&\sum_{j=i}^{n-1}\heights_{ij}\tccc_{ij}\tqcc_{j+1}+\heights_{in}\tccc_{in}
\end{align*}	
and a subsequent integration.
\end{proof}

Lemma \ref{lem:expTF} touches each particle only once.
This paves the way for an EM algorithm for segment height where each step has a time and space complexity of $\mathcal{O}$(number of particles).
Lemma \ref{lem:expTF} and Theorem \ref{theo:expcomp} clarify that the main bottle-neck here is the feasibility and complexity of computing $\tccc_{ji}$ and $\heights_{ij}^{T_\ell}$ for $0<i\leq n$, $0\leq j\leq i$ and $\ell=1,\ldots,u$.

\begin{table}[ht]
	\centering
	\begin{tabular}{ c | c}
		$\begin{array}{c}
		\text{Distribution}\text{ and }\funcg
		\end{array}$ & $T$, $\eta$ and $A$\\
		\hline
		$\begin{array}{c}
		\text{Gaussian} \\
		\funcg(x\semic \mu,\sigma)=
		-\frac{1}{2}\big(\frac{x-\mu}{\sigma}\big)^2-\frac{1}{2}\ln(2\pi\sigma^2)
		\end{array}$&$\begin{array}{c}
		T(x)=-\frac{1}{2}(x^2, x, 1)^t\\
		\eta(\mu, \sigma)=(1\slash \sigma^2, -2\mu\slash \sigma^2, \mu^2\slash \sigma^2)^t\\
		A(\mu, \sigma)=-\frac{1}{2}\ln(2\pi\sigma^2)
		\end{array}$\\
		\hline
		$\begin{array}{c}
		\text{Laplacian} \\
		\funcg(x\semic \sigma)=
		-\frac{|x-\mu|}{\sigma}-\ln(2\sigma)
		\end{array}$&$\begin{array}{c}
		T(x)=-|x-\mu|\\
		\eta(\sigma)=1\slash \sigma\\
		A(\sigma)=-\ln(2\sigma)
		\end{array}$\\
		\hline
		$\begin{array}{c}
		\text{Inverse gamma }\\\funcg(x\semic \alpha, \beta)=
		\alpha\ln(\beta)-\ln(\Gamma(\alpha))\\-(\alpha+1)\ln(x)-\beta\slash x
		\end{array}$&$\begin{array}{c}
		T(x)=-(\ln(x), 1\slash x)^t\\
		\eta(\alpha, \beta)=\big(1+\alpha , -\beta)\big)^t\\
		A(\alpha, \beta)=\alpha\ln(\beta)-\ln(\Gamma(\alpha))
		\end{array}$\\
		\hline
		$\begin{array}{c}
		\text{Beta }\\
		\funcg(x\semic \alpha,\beta)=(\alpha-1)\ln(x)\\
		+(\beta-1)\ln(1-x)-\ln(\text{B}(\alpha, \beta))
		\end{array}$&$\begin{array}{c}
		T(x)=(\ln(x), \ln(1-x))^t\\
		\eta(\alpha, \beta)=(\alpha-1 , \beta-1)^t\\
		A(\alpha, \beta)=-\ln(\text{B}(\alpha, \beta))
		\end{array}$
	\end{tabular}
	\caption[EM estimates for segment height]{A brief collection of univariate distributions for segment height and the corresponding realizations of $T$, $\eta$ and $A$.}
	\label{fig:emsegheight}
\end{table}

Table \ref{fig:emsegheight} lists some common univariate distributions for segment height and the corresponding form of $T$, $A$, and $\eta$ in Lemma \ref{lem:emheight}.
While the normal distribution is a conjugate prior in the Gaussian change in mean model,  the Laplace distribution (Section \ref{chap:laplacian}) is suitable for the Laplacian change in location model.
The inverse gamma distribution comes into place in the Gaussian and Laplacian change in scale model and the beta distribution is involved in all sorts of change in success rate models.

The Gaussian change in mean model is completely benign.
It just requires us to compute the first and second moment, and some simple derivatives w.r.t. $\sigma$ and $\mu$.
In the Laplacian change in location model, where $\sigma$ is to be estimated, we need to compute the first absolute moments of $\heights_{ji}$.
This is elaborated in Section \ref{chap:emsigmalap}.

In the Gaussian change in variance model where the inverse gamma distribution is used, the $\heights_{ji}$'s are again of the inverse gamma type. 
Computing expectations w.r.t. these kind of simple distributions doesn't pose a problem and could be performed numerically.
In turn, the maximization involves the gamma function, which might foil a closed form maximization.
However, we may again resort to a numeric method to achieve this.
The same conclusions hold for the beta distribution, where the $\heights_{ji}$'s are typically beta distributions again.

The next lemma concerns the EM algorithm for observation with regards to a particular class of distributions.
\begin{lemma}
	\label{lem:emobs}
	Assume that $\funcf(y\mid x\semic\theta^f)=\eta'(\theta^f, y)^tT'(x)+A'(\theta^f)$ with functions $\eta':\Theta^f\times \obs\rightarrow\mathbb{R}^v$, $A':\Theta^f\rightarrow \mathbb{R}$ and measurable $T':\lat\rightarrow\mathbb{R}^v$. Any value of $\theta^f$ that maximizes
	\begin{align}
	\label{eq:lememobs}
	nA'(\theta^f)+\sum_{\ell=1}^v\sum_{i=1}^n \eta'_\ell(\theta^f, y_i)\expec{T'_\ell(\rand X_i)\mid \rand Y_{1:n}=y_{1:n}}
	\end{align}	
	is a candidate for $\thn^f$.
\end{lemma}
\begin{proof}
	This follows directly from the particular form of \funcf.
\end{proof}

In the context of Lemma \ref{lem:emobs}, we may again be able to achieve a space and time complexity of $\mathcal{O}$(number of particles) for each EM step.
Theorem \ref{theo:expcomp} and Corollary \ref{cor:exp} show that the main bottle-neck here is the feasibility and complexity of computing $\tccc_{ji}$ and $\heights_{ij}^{T_\ell'}$ for $0<i\leq n$, $0\leq j\leq i$ and $\ell=1,\ldots,v$.

However, the requirement of Lemma \ref{lem:emobs} is relatively strict.
It does, for example, not hold in the Laplacian case where we want to estimate $\sigma$, since $|y-x|$ does not decompose into two separate factors.
Therefore, we would need to compute $\expec{|y_i-\rand X_i|\mid \rand Y_{1:n}=y_{1:n}}$ for all $i=1,\ldots,n$, which might increase the overall complexity by a factor of $n$ or more.
We may thus, be more successful by other means as the next lemma shows.

\begin{lemma}
\label{lem:emobs2}
Assume that $\funcf(y\mid x\semic\theta^f)=\eta'(\theta^f)^tT'(x, y)+A'(\theta^f)$
with functions $\eta':\Theta^f\rightarrow\mathbb{R}^v$, $A':\Theta^f\rightarrow \mathbb{R}$ and measurable $T':\obs\times \lat\rightarrow\mathbb{R}^v$. 
For $0\leq j\leq i$, $0<i\leq n$ and $\ell=1,\ldots,v$ let 
\begin{align*}
\occ_{ji}^\ell=\int \sum_{k=\max(j,1)}^iT_\ell(x,y_k) \heights_{ji}(dx)
\end{align*}
Any value of $\theta^f$ that maximizes
\begin{align}
\label{eq:lememobs2}
nA'(\theta^f)+\sum_{\ell=1}^v\eta_\ell'(\theta^f)\Bigg(\sum_{i=1}^{n-1}\sum_{j=0}^i\tccc_{ji}\tqcc_{i+1}\occ_{ji}^\ell+\sum_{j=0}^n\tccc_{jn}\occ_{jn}^\ell\Bigg)
\end{align}		
is a candidate for $\thn^f$.
\end{lemma}
\begin{proof}
The  proof follows from
\begin{align*}
\expec{\sum_{i=1}^nT_\ell'(\rand X_i, y_i)\bmid \rand Y_{1:n}=y_{1:n}}=
\sum_{i=1}^{n-1}\sum_{j=0}^i\tccc_{ji}\tqcc_{i+1}\occ_{ji}^\ell+\sum_{j=0}^n\tccc_{jn}\occ_{jn}^\ell
\end{align*}
This, in turn, can be seen as follows
\begin{align*}
&\expec{\sum_{i=1}^nT_\ell'(\rand X_i, y_i)\bmid \rand Y_{1:n}=y_{1:n}}\\
=&\sum_{c_{1:n}\in\mathbb{N}^n}\prob{\rand C_{1:n}=c_{1:n}\mid \rand Y_{1:n}=y_{1:n}}\sum_{\scalethis{\begin{array}{c}j\leq i\text{ with } c_i=j \text{ and}\\
	\big(i=n\text{ or } c_{i+1}=i+1\big)
	\end{array}}}\occ_{ji}^\ell\\
=&\sum_{i=1}^{n}\sum_{j=0}^i\occ_{ji}^\ell\sum_{\scalethis{\begin{array}{c}c_{1:n}\in\mathbb{N}^n\text{ with } c_i=j \text{ and }\\
	 \big(i=n \text{ or } c_{i+1}=i+1\big)
	\end{array}}}\prob{\rand C_{1:n}=c_{1:n}\mid \rand Y_{1:n}=y_{1:n}}\\
=&\sum_{i=1}^{n-1}\sum_{j=0}^i\tccc_{ji}\tqcc_{i+1}\occ_{ji}^\ell+\sum_{j=0}^n\tccc_{jn}\occ_{ji}^\ell
\end{align*}
\end{proof}

The advantage of Equation (\ref{eq:lememobs2}) in combination with Lemma \ref{lem:emobs2} is that we just need to compute one expectation per particle.
Thus, the overall complexity of each EM step is $\mcal{O}$(number of particles).

\begin{table}[ht]
	\centering
	\begin{tabular}{ c | c | c }
		$\begin{array}{c}
		\text{\Cp problem}\\\text{ and }\funcf
		\end{array}$ & $T'$, $\eta'$ and $A'$ & $\begin{array}{c}\text{Apply}\\ \text{ Lemma}\end{array}$\\
		\hline
		$\begin{array}{c}
		\text{Gaussian change in mean}\\
		\funcf(y\mid x\semic \sigma)=\\
		-\frac{1}{2}\big(\frac{y-x}{\sigma}\big)^2-\frac{1}{2}\ln(2\pi\sigma^2)
		\end{array}$&$\begin{array}{c}
		T'(x)=-\frac{1}{2}(x^2, x, 1)^t\\
		\eta'(\sigma, y)=(1\slash \sigma^2, -2y\slash \sigma^2, y^2\slash \sigma^2)^t\\
		A'(\sigma)=-\frac{1}{2}\ln(2\pi\sigma^2)
		\end{array}$&\ref{lem:emobs}\\
		\hline
		$\begin{array}{c}
		\text{Gaussian change in variance}\\
		\funcf(y\mid x\semic \mu)=\\
		-\frac{1}{2}\frac{(y-\mu)^2}{x}-\frac{1}{2}\ln(2\pi x)
		\end{array}$&$\begin{array}{c}
		T'(x)=-\frac{1}{2}(1\slash x, 1\slash x, 1\slash x, \ln(2\pi x))^t\\
		\eta'(\mu, y)=\big(y^2 , -2y \mu, \mu^2, 1\big)^t\\
		A'(\mu)=0
		\end{array}$&\ref{lem:emobs}\\
		\hline
		$\begin{array}{c}
		\text{Laplacian change in location}\\
		\funcf(y\mid x\semic \sigma)=\\
		-\frac{|y-x|}{\sigma}-\ln(2 \sigma)
		\end{array}$&$\begin{array}{c}
		T'(x, y)=-|y-x|\\
		\eta'(\sigma)=1\slash\sigma\\
		A'(\sigma)=-\ln(2\sigma)
		\end{array}$&\ref{lem:emobs2}\\
		\hline
		$\begin{array}{c}
		\text{Laplacian change in scale}\\
		\funcf(y\mid x\semic \mu)=\\
		-\frac{|y-\mu|}{x}-\ln(2 x)
		\end{array}$&$\begin{array}{c}
		T'(x)=-(1\slash x, \ln(2x))^t\\
		\eta'(\mu, y)=(|y-\mu|, 1)^t\\
		A'(\mu)=0
		\end{array}$&\ref{lem:emobs}
	\end{tabular}
	\caption[EM for data distributions]{A brief collection of univariate observational distributions for specific \cp problems. It states the functions $T'$, $\eta'$ and $A'$, and the lemma that should be applied in order to build an EM algorithm for the unknown parameters.}
	\label{fig:emobs}	
\end{table}

Table \ref{fig:emobs} provides a brief list of univariate observational distributions for certain \cp problems. 
We can also read the corresponding form of $T'$, $\eta'$ and $A'$, and the lemma that should be applied in order to develop an EM algorithm.

As before, if we exploit conjugacy, we will obtain simple posterior distributions.
Thus, computing logarithmic or inverse moments as in the Gaussian change in variance and Laplacian change in scale examples can be done numerically.

\subsubsection{EM algorithm for the scales in the well-log example}
\label{chap:emsigmalap}
In the following, we want to apply the results of Section \ref{chap:emheiobs} to the well-log example (Sections \ref{chap:laplacian}, \ref{chap:well_log_forward}, \ref{chap:well-log-pointw} and \ref{chap:well-log-para}).
Our task is to develop an EM algorithm for $\tau$ and $\sigma$.
$\mu$ does not meet the requirements of Lemma \ref{lem:emheight} since $|x-\mu|$ does not decompose into two separate factors.

In order to build an EM algorithm for $\tau$, we apply Lemma \ref{lem:emheight}.
Therefore, we need to compute
\begin{align*}
\mcc_0=\sum_{i=1}^n\expec{|\rand X_i-\mu|\cdot \ind{\rand C_i=i}\bmid \rand Y_{1:n}=y_{1:n}}
\end{align*}
According to Lemma \ref{lem:expTF}, this entails the computation of $\heights_{ji}^T$ with $T(x)=|x-\mu|$ for all $0<i\leq n$ and $0\leq j\leq i$.
Finally, we obtain $\tau_{new}=\mcc_0\slash (\tqcc_1+\ldots+\tqcc_n)$ through maximizing $-(\tqcc_1+\ldots+\tqcc_n)\ln(2\tau)-\mcc_0\slash\tau$ w.r.t $\tau$.

Hence, it remains to show how to compute $\heights_{ji}^T=\int |x-\mu|\heights_{ji}(dx)$.
Therefore, consider that $\heights_{ji}^T\cdot \Zcc_{ji}^0=\int |x-\mu|\euler{-\frac{|x-\mu|}{\tau}-\sum_{\ell=j}^i\frac{|x-y_\ell|}{\sigma}}dx$.
This elementary integration is explained in the following corollary.

\begin{corollary}
	\label{cor:intlaplaceabs}
	In the context of Corollary \ref{cor:intlaplace} the following holds
	\begin{align}
	\label{eq:laplacesumabs}
	&\int |x-\mu|\euler{-\sum_{\ell=1}^r\frac{|x-z_\ell|}{\sigma_\ell}}dx
	=\sum_{\ell=1}^r\euler{\dcc_\ell z_\ell+\ecc_\ell}(\bcc_{\ell-1}^\ell-\bcc_{\ell}^\ell)
	\end{align}
with
\begin{align*}
&\bcc_j^k=\text{sgn}(z_k-\mu)\cdot\begin{cases}
\frac{z_k-\mu}{\dcc_j}-\frac{1}{\dcc_j^2} &\dcc_j\neq 0\\
z_k^2\slash 2-\mu z_k &\dcc_j=0	
\end{cases}
\end{align*}
for $j=0,\ldots,r$ and $k=1,\ldots,r$.
\end{corollary}
\begin{proof}
The proof is similar to the proof of Corollary \ref{cor:intlaplace}.
\end{proof}

An EM algorithm for $\sigma$ can be derived by means of Lemma \ref{lem:emobs2}.
To this end, we need to compute 
$\occ_{ji}=\int \sum_{\ell=\max(j,1)}^i|x-y_{\ell}| \heights_{ji}(dx)$
for all $0<i\leq n$ and $0\leq j\leq i$.
With this, we further compute 
\begin{align*}
\mcc_1=\sum_{i=1}^{n-1}\sum_{j=0}^i\tccc_{ji}\tqcc_{i+1}\occ_{ji}+\sum_{j=0}^n\tccc_{jn}\occ_{jn}
\end{align*}
Finally, we obtain $\sigma_{new}=\mcc_1\slash n$ by maximizing $-n\ln(2\sigma)-\frac{\mcc_1}{\sigma}$ w.r.t. $\sigma$.

The integrand in $\occ_{ji}$ represents a cumbersome function we haven't dealt with yet.
We can avoid it by considering that
\begin{align*}
\occ_{ji}=\sigma\int\frac{|x-\mu|}{\tau} + \sum_{\ell=j}^i\frac{|y_\ell-x|}{\sigma} \heights_{ji}(dx)-\frac{\sigma}{\tau}\heights_{ji}^T
\end{align*}
In this case, the integrand and the exponent within $\heights_{ji}(dx)\densslash\xi(dx)$ differ only in terms of their sign.
The following lemma explains this more elementary integration.

\begin{corollary}
	\label{cor:intlaplacesum}
	In the context of Corollary \ref{cor:intlaplace} the following holds
	\begin{align}
	\label{eq:laplacesumsum}
	&\int \sum_{\ell=1}^r\frac{|x-z_\ell|}{\sigma_\ell}\euler{-\sum_{\ell=1}^r\frac{|x-z_\ell|}{\sigma_\ell}}dx
	=\sum_{\ell=1}^r\euler{\dcc_\ell z_\ell+\ecc_\ell}(\bcc_{\ell-1}^\ell-\bcc_{\ell}^\ell)
	\end{align}
	with 
	\begin{align*}
	&\bcc_j^k=\cdot\begin{cases}
	z_k-\frac{1-\ecc_j}{\dcc_j} &\dcc_j\neq 0\\
	\ecc_jz_k &\dcc_j=0	
	\end{cases}
	\end{align*}
	for $j=0,\ldots,r$ and $k=1,\ldots,r$.
\end{corollary}
\begin{proof}
The proof is similar to the proof of Corollary \ref{cor:intlaplace}.
\end{proof}

Equipped with this knowledge, we now conduct a brief simulation study to examine the EM algorithm for $q$ and $\sigma$ in the Laplacian change in median model that employs a negative binomial distribution for segment length.
Therefore, we compare different values of $r$ with regard to the simultaneous estimation of the two parameters.

In frequentist settings, the \cp locations are usually not considered as random.
Thus, frequentists deem the prior distribution for segment length just as a necessary evil of the Bayesian approach.
In this context, the main interest lies in estimates for $\sigma$.

However, the choice of the length distribution has to be taken with care, because it impacts the estimation of $\sigma$.
Once again, this can be explained by the collinearity of the segment lengths and the observational scale (see Section \ref{chap:well-log-para}).
In the following, we are particularly interested in understanding this impact.

We repeatedly sample 4050 datapoints from a Laplace distribution with scale equal to 1 and changing median.
The segment heights (the medians) are sampled from a Laplace distribution with median equal to 0 and scale equal to 10.
Each data set contains 12 random \cps drawn uniformly.

\begin{figure}[ht]
	\includegraphics[width=\linewidth]{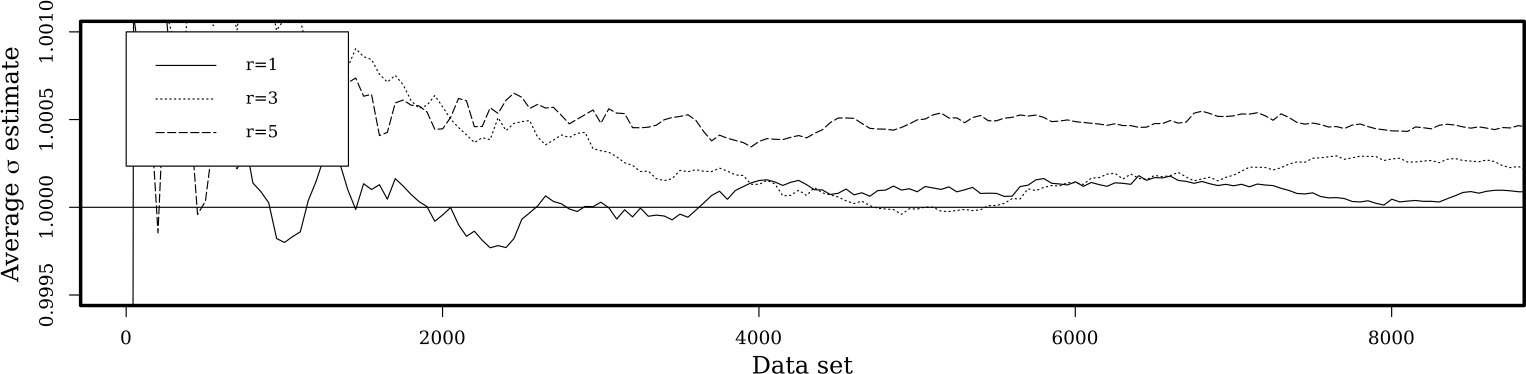}
	\caption[Average EM estimates for variances]{The graphs show how the average EM estimate for $\sigma$ evolves for different values of $r$ at an increasing number of data sets. The horizontal line indicates the true value of $\sigma$.
	}
	\label{fig:expsigma}
\end{figure}

Figure \ref{fig:expsigma} displays the average EM estimate for $\sigma$ w.r.t. to different choices of $r$ and an increasing number of data sets.
The horizontal line marks the true value of $\sigma$.
Of course, convergence towards certain limits can not be derived safely hereof, but tendencies are apparent.
All graphs indicate a positive bias, which seems to increase with $r$.
Anyway, the three biases concern at most the fourth digit of the true value of $\sigma$ and are thus tiny.

The geometric distribution, i.e. $r=1$, places the \cps uniformly regardless of $q\in(0,1)$ if their number is given.
It is thus the most natural choice for the distribution of segment length here.
This explains why $r=1$ yields the smallest (or perhaps no) bias from the true value of $\sigma$.

On the other hand, a uniform distribution for \cps tends to produce very short segments.
Those segments combined with comparably small height changes can barely be distinguished from random perturbations in the data.
Since the number of \cps is uncertain during the estimation, segments are overlooked occasionally.
This may result in an overestimation of $\sigma$.
Furthermore, increasing $r$ inhibits short segments enhancing this effect.

\begin{table}[ht]	
	\centering
	\begin{tabular}{l|c|c|c|c}
		$r\quad$ & MSE & 		$\begin{array}{c}
		\text{Average expected}\\\text{\cp count}
		\end{array}$ & Average MAP & $\begin{array}{c}
		\text{Average expected}\\\text{segment length}
		\end{array}$ \\
		\hline
		1&  0.000243&  12.0178&  11.562&  336.99\\
		3&  0.000249&  11.8754&  11.513&  334.34\\
		5&  0.000251&  12.1006&  11.471&  327.28
	\end{tabular}
	\caption[Several statistics derived from EM estimates]{For different choices of the negative binomial distribution, the table shows several statistics derived from the EM algorithm w.r.t. to 10000 data sets.}
	\label{fig:tablesigma}
\end{table}

Table \ref{fig:tablesigma} displays different statistics, derived from the EM estimation of $\sigma$ and $q$ w.r.t. 10000 data sets.
The mean squared error of the estimation for $\sigma$ appears to be lowest for $r=1$ and increases with $r$.
Furthermore, the expected \cp count seems to be a good (roughly unbiased) estimator for the true \cp count when the length distribution is chosen appropriately.
In accordance with the above reasoning, $r=1$ is appropriate here.
The MAP estimator overlooks a considerable portion of \cps for all choices of $r$.
This reflects its robustness towards uncertain \cps.

A potentially contra intuitive observation is that the expected segment length gets smaller with increasing $r$.
This is because the negative binomial distribution, for larger values of $r$, needs to be tightened more and more, in order to adjust to the entirety of possible segment lengths. 
At the same time, the expected \cp counts exhibit an irregular and odd picture.

Therefore, we can conclude that too large values of $r$ may provoke overfitting towards strongly peaked length distributions.
This is highly undesirable and calls for a careful determination of $r$.
On the other hand, increasing $r$ helps to mitigate the effect of sudden outliers by inhibiting short segments (see also  Section \ref{chap:well-log-para}).

\subsection{Discussion}
\label{chap:discussion_spike}

In this section, we developed a wide range of exact inference strategies for a certain class of \cp models.
The central achievement of this work is Theorem \ref{theo:expcomp}.
It shows that under good conditions all the considered algorithms exhibit a linear complexity w.r.t. the data size.
Even though this requires an approximation through pruning, the results are still convincing enough to be referred to as virtually exact.

In order to perform exact inference over \cp locations, it is not necessary that all involved computations are feasible in closed form.
Instead, a sufficient requirement is that certain elementary integrals, i.e.
\begin{align}
\label{eq:elint}
\int \prod_{\ell=j}^{i}\dens(\rand Y_\ell=y_\ell\mid \rand X_\ell=x)\jprior(dx)
\end{align}
for $0<j\leq i\leq n$, can be computed either analytically or numerically.
Therewith, the $\ccc_{ji}$'s and $\heights_{ji}$'s can be derived in a purely arithmetic fashion.

In turn, for pointwise inference over segment heights by means of expectations w.r.t. functions $\funcf$, we require that 
\begin{align*}
\int \funcf(x)\prod_{\ell=j}^{i}\dens(\rand Y_\ell=y_\ell\mid \rand X_\ell=x)\jprior(dx)
\end{align*}
is computable for $0<j\leq i\leq n$.

Even though these calculations determine the desired posterior quantities uniquely, the observational densities are generally not unique, i.e. they may differ on null sets.
This is related to the Borel-Kolmogorov paradox \citep{kolmogorov2018foundations} and gives rise to ambiguous inference results.
A common workaround is to impose topological continuity to the employed densities.

Furthermore, there might be applications where (\ref{eq:elint}) is not applicable, for example, if it yields zero.
This may occur when the observations depend somehow deterministically on segment heights that are continuous.
Under such circumstances, we would refrain from our density based construction and resort to the measure theoretic notion of conditional distributions instead \citep{kolmogorov2018foundations}.

Our inference algorithms are implemented in discrete time exclusively.
Even though an extension to continuous time is in principle viable, it involves dealing intensively with the backward residual times \citep{stewart2009}.
These quantities are trivial for exponentially distributed segment lengths, but generally of a delicate form.

The main constrained in this section is the imposed independence induced by \cps.
Unfortunately, there are very basic applications that do not meet this constrained, like a continuous piecewise linear regression.
The problem here is the exponential growth of the number of particles in the forward algorithm, which requires a sophisticated adaption of the pruning algorithm to this far more demanding case.
This topic lies beyond the scope of this thesis, but \cite{fearnhead_dependency} may provide a good starting point for such an algorithm.
They elaborate an SMC approach \citep{doucet2001introduction, fearnhead2003line} for online \cp detection.

As a general remark, applying mixtures to Lemma \ref{lem:bayesstepfor} is prone to fail since it yields mixtures for the $\heights_{ji}$'s with an exponential growth in the number of components.
If mixtures are involved, we should instead consider resorting to numerical integration by means of Lemma \ref{lem:segheight}.

A further, rather inconspicuous restriction is that the observations are supposed to be mutually independent given the segment heights.
Some applications may require instead that the observations follow a joint segmentwise and not a pointwise distribution.

This can be incorporated into our algorithms straight away if the segmentwise distributions are consistent in regards to the length of their segment.
The consistency condition comes into play in the forward algorithm since it works on partial segments.
If it doesn't hold, any partial segment would have to be represented explicitly as a mixture w.r.t. all possible segments containing this partial segment.

However, the forward algorithm may be sidestepped by means of a purely segmentwise approach like the sampling method for \cp models described in \cite{exact_fearnhead}.
In essence, it provides the $\tccc_{ji}$'s.
All our backward and pointwise algorithms may build on top of this since they rely on the $\tccc_{ji}$'s directly.

A famous modeling approach, which hasn't been considered in this thesis, is to carry along the number of \cps at each timepoint.
For this purpose, we add a third index to $\ccc_{ji}$, i.e. the number of \cps that occurred hitherto.

\cite{Rigaill2012} and \cite{aston2012implied} employ this method to infer the pointwise probabilities of seeing certain \cps.
It is further necessary for the use of a prior for \cp locations that pins the \cp count down to a constant.
Section \ref{chap:unicps} elaborates such a uniform prior, which is ready to use within our algorithms.
By assuming that $k$ is the maximum number of \cps considered, the space and time complexity is then increased by a factor of $k$ regardless of pruning.
Furthermore, the additional split of the probabilities is numerically more challenging.

At several occasions, we saw that our methods share strong similarities with the popular Kalman-filter (and Kalman smoothing).
The trajectory of the pointwise posterior expectations of the latent process poses a MAP for the Kalman-filter.
Although this does not hold for \cp models, Figure \ref{fig:laplace_marginals}(a) showed that we may still gather valuable information from such expectations.

Moreover, Figure \ref{fig:laplace_marginals}(a) reveals that the expected segment heights behave fairly steady amid segments even in the presence of strong outliers.
This is because \cp models adjust single segment heights closely to whole sequences of neighboring observations.
At the same time, the expected segment heights can move very rapidly at \cp locations.

In contrast, the volatility of the expectations in the Kalman-filter rely heavily on the variances of single observations.
They can only move away abruptly if the corresponding variances are sufficiently small, i.e. the measurement is precise.
Conversely, a steady behavior w.r.t. strong perturbations requires large variances.

Thus, \cp models introduce a very specific behavioral pattern, which could perhaps be of significant practical interest.
This is subject to further research.

The mathematics developed for the estimation of the success probability of the geometric distribution provides interesting links to the tremendously important topic of variable selection in statistical models \citep{guyon2003introduction}.
It can be stated as follows.
Assume that we deal with a family of statistical models and $n$ parameters with values in $\Theta$.
Each model is parameterized by a subsets of these $n$ parameters.
Given an observed data set, the task is to find one or more parsimonious models that represent the data well.

Therefore, we utilize their likelihoods, say $L_{I, \theta}$, where $I\sub\{1,\ldots,n\}$ represents a subset of the parameters and $\theta\in \Theta^{\#I}$ their values.
The famous Bayesian information criterion (BIC) \citep{schwarz1978estimating} and the Akaike information criterion (AIC) \citep{akaike1973information} are commonly used here.
In this primal context, they would be defined as follows
\begin{align*}
&\text{BIC}_I=\ln(n)\cdot\#I-2\ln\Bigg(\int L_{I,\theta} \mathcal{P}_I(d\theta)\Bigg)\\
&\text{AIC}_I=2\cdot \#I-2\ln\Big(\max_{\theta\in\Theta^{\#I}}\big\{L_{I,\theta}\big\}\Big)
\end{align*}
whereby $\mathcal{P}_I$ is a prior over a measurable space build from $\Theta^{\#I}$.
Among all $I\sub\{1,\ldots,n\}$, we would choose one that maximizes either $\text{BIC}_I$ or $\text{AIC}_I$.

Both approaches perform a maximization that penalizes the cardinality of $I$.
Therewith, they promote small and reasonable sets of parameters.
While the BIC employs a prior in order to marginalize the parameters out, the AIC incorporates their maximum.

Unfortunately, the coefficient of $\#I$ is in both cases static and thus, these approaches appear to be a bit inflexible.
This gets clearer through a natural analogy that uses the Bernouilli distribution as a prior for the elements of $I$, i.e. $\prob{\rand I=I}=q^{\#I}(1-q)^{n-\#I}$.
The negative loglikelihood of this Bayesian model reads
\begin{align}
\label{eq:loglikelihoodbic}
-n\ln(1-q)+\ln\big((1-q)\slash q\big)\cdot\#I-\ln\big(L_{I,\theta}\big)
\end{align}
This yields an optimization problem similar to BIC and AIC, but with an arbitrary penalization for $\#I$.

In this context, BIC uses a prior for $I$ that employs less than 1 parameter on average.
This steers the estimation of the number of parameters to a parsimonious choice and thus focuses on specificity.
However, this is inappropriate for \cp problems and also for regression problems, where sensitivity towards individual parameters is of high value.

Hence, it might be reasonable to adjust the coefficient, say $\gamma$, of $\#I$ in (\ref{eq:loglikelihoodbic}) to the observed data in order to allow for a decent sensitivity.
This could be done by means of the EM algorithm of Section \ref{chap:geomem} based on samples of $\rand I$ \citep{spike_slab_mitchel, george_mccullogh_gibbs, green1995}.
According to Lemma \ref{lem:geomem}, in each EM step, we would just need to approximate the expected size of $\rand I$.

Having estimated $\gamma$, we may then maximize 
\begin{align*}
\gamma\cdot\#I-\ln\Bigg(\int L_{I, \theta} \mathcal{P}_I(d\theta)\Bigg)\quad\text{ or }\quad \gamma\cdot \#I-\ln\Big(\max_{\theta}\big\{L_{I, \theta}\big\}\Big)
\end{align*}
w.r.t. $I$, a procedure which is sometimes referred to as an empirical Bayes method \citep{casella1985introduction}.
This is subject to further research.

This whole section is accompanied by a Laplacian change in median example.
It demonstrates the special case where the observational distribution is not part of the exponential family w.r.t. its segment height.
In this case, simple recursive Bayes formulas are not available and thus, Lemma \ref{lem:segheight} combined with plain integration becomes the method of choice.
Therewith, we have conveyed the ideas behind our algorithms and how to implement them in more challenging scenarios.
Section \ref{chap:well-logsimcred} will conclude our Laplacian change in median example by a comparison with the model proposed by \cite{exact_fearnhead}.

\newpage

\newpage\null\newpage
\section{Simultaneous credible regions for multiple \cp locations}
\label{chap:simcredreg}

\subsection{Outline}
In this last big section, we present a novel approach for Bayesian \cp models, that is based on \cp samples.
It facilitates the examination of the distribution of \cps through a novel set estimator.
For a given level $\alpha$, we aim at smallest sets that cover all \cps with a probability of at least $1-\alpha$.

These so-called smallest simultaneous credible regions, computed for certain values of $\alpha$, provide parsimonious representations of the possible \cp locations.
In addition, combining them for a range of different $\alpha$'s enables very informative yet condensed visualizations. 
Therewith we allow for the evaluation of model choices and the analysis of \cp data to an unprecedented degree.

This approach exhibits superior sensitivity, specificity and interpretability in comparison with highest density regions, marginal inclusion probabilities and confidence intervals inferred by \stepR. 
Whilst their direct construction is usually intractable, asymptotically correct solutions can be derived from posterior samples. 
This leads to a novel NP-complete problem.
Through reformulations into an Integer Linear Program we show empirically that a fast greedy heuristic computes virtually exact solutions.

\subsection{Introduction}


Detecting \cps in a time series usually comes down to deciding on a set of \cp locations.
Thus, Bayesian frameworks aim to infer a set valued random variable that gives a reasonable representation of this decision \citep{exact_fearnhead, ocpd, lai2011simple}.
The non-deterministic nature of these so-called \emph{posterior random \cps} expresses the uncertainty of their location.

\cite{Rigaill2012} illustrates this uncertainty by means of a Bayesian model with exactly two \cps.
They plot for all  possible pairs of timepoints  the posterior probability of being these \cps.
The results indicate both that posterior random \cps are highly dependent and that generally more than one combination is likely. 
Unfortunately, this approach is not suitable to monitor the distribution of more than two \cps.
It is a crucial fact that  the space of possible \cp locations is very high-dimensional even for time series of moderate size. 
Thus, an extensive exploration is a nontrivial task.

In Bayesian research, summaries of \cp locations, uncertainty measurements or model selection criteria are often provided by means of marginal \cp probabilities.
\cite{Rigaill2012} gives a general consideration of this approach, but it has always enjoyed great popularity in the \cp community. 
See, for example, \cite{perreault2000retrospective, lavielle2001application, tourneret2003bayesian, exact_fearnhead, hannart2009bayesian, fearnhead_dependency, lai2011simple, aston2012implied, nam2012quantifying}.
Marginal \cp probabilities, as shown in \cp histograms, simply consist of the probabilities for a \cp at each timepoint and thus, they are pointwise statements.
However, due to the uncertainty of their location, (even single) \cps cannot be explored comprehensively by pointwise statements.
On these grounds, we present a novel approach that incorporates all possible \cp locations simultaneously.

Let $y=(y_1,\ldots,y_n)$ be a time series and let $\rand C\sub\{1,\ldots,n\}$ be the posterior random \cps.
A region that contains all \cps simultaneously with a probability of at least $1-\alpha$ is an $A\sub\{1,\ldots,n\}$ with $\prob{\rand C\sub A}\geq 1-\alpha$.
We call such an $A$ a \emph{simultaneous $\alpha$ level credible region} or simply \emph{credible region} if there is no risk of confusion.
It provides a sensitive assessment of the whole set of possible \cp locations.
However, in order to obtain specific assessments as well, we seek a smallest such region.
This means we seek an element of
\begin{align}
\label{eq:minprob}
\primeprob{\alpha,\rand C}\defeq\argmin\limits_{A\sub \{1,\dots,n\}}\Big\{\cardi{A}\bmid \prob{\rand C\sub A}\geq 1-\alpha\Big\}
\end{align} 
whereby \cardi{A} is the cardinality of $A$.

\begin{figure}[ht]
\begin{center}
\includegraphics[width=\linewidth]{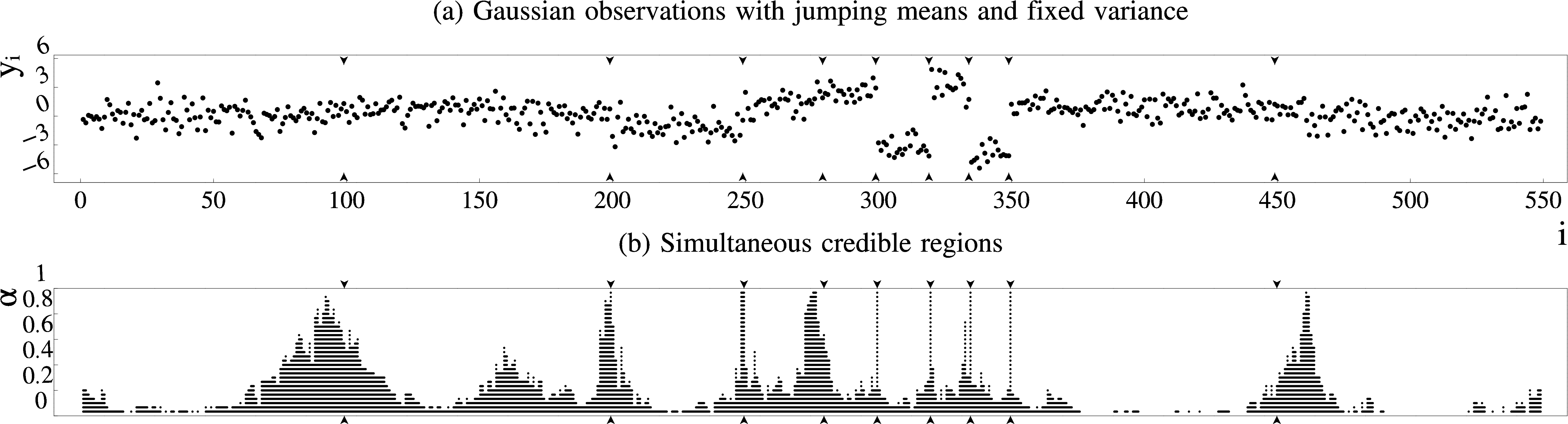} 
\caption[Gaussian data and credible regions]{(a) A time series.
(b) 29 approximated smallest simultaneous $\alpha$ level credible regions, each of them   drawn as a broken horizontal line.}
\label{fig:introduction}
\end{center}
\end{figure}

Figure \ref{fig:introduction} demonstrates smallest credible regions by means of an example. 
The data points in (a) were drawn independently from a normal distribution having a constant variance of 1 and mean values that are subject to successive changes.  
The true \cps are marked by small vertical arrows.
To build an exemplary Bayesian model here, we  choose a prior for the \cp locations and mean values.
The time from one \cp to the next is geometrically distributed with success probability $\frac{3}{550}$ and the mean values are distributed according to $\mathcal{N}(0,25)$.

Smallest credible regions are visualized in (b). For each $\alpha\in\Big\{\frac{1}{30}, \ldots, \frac{29}{30}\Big\}$ the plot shows one such (approximated) region as a collection of horizontal lines.
The value of $\alpha\in[0,1]$ controls the size of the regions and thus, governs the trade of between specificity and sensitivity.

The broadness of the credible regions around a true \cp expresses the uncertainty of its locations.
At the same time, a pointed shape reveals that the model is in favor of certain \cp locations.
Existing visualization techniques, like \cp histograms, are unable to go beyond these two estimates.
However, it is also crucial to get an impression of the sensitivity of the model towards a true \cp.
We can examine this by looking at the height of the peak that relates to the true \cp.
The higher the peak, the higher the model's sensitivity.
We refer to this height as the importance of the true \cp.
The importances of the true \cps in Figure \ref{fig:introduction} are always larger than 0.9.

Of course, importance (as well as broadness and shape) can also be used for \cp data without true \cps.
There, we look at the peaks that belong to the features of interest.
Examples for features, which may occur in practice, are changes in mean, variance, slope or any other change in distribution. 
Furthermore, anomalies like outliers are of concern as well although they are usually supposed to be skipped by the model.
Figure \ref{fig:introduction} shows a nuisance feature in form of a small irregularity at around 170 with an importance of around 0.4.

By means of importance we can conveniently evaluate if the model is sensitive towards the desired features but skips the nuisance ones.
Conversely, we can also detect the relevant features in \cp data on the basis of a given model.
Most notably, this does not require any previous knowledge about the number of features or their positions.
This novel concept is one of the main outcomes of the present work. 
It allows for a much more detailed analysis of \cp models and \cp data than hitherto possible.
Figure \ref{fig:introduction}(b) demonstrates this.
It shows that the \cp model in use is sensitive towards the desired true \cps and specific towards random distortions in the data.

The outline of this work is as follows.
In Section \ref{chap:statapp} we consider the above problem from a general statistical and algorithmic viewpoint. 
We examine alternative approaches in Section \ref{chap:alt_approc}.
Section \ref{chap:simcredchanpoi} starts with an overview over sampling strategies in \cp models and deals with importance in a broader and more formal way.
Afterwards, we compare our results with the existing approaches.
We examine Dow Jones returns and demonstrate how credible regions can be used in order to perform model selection in Section \ref{chap:gaussian_change_in_variance}.
Finally, we discuss our results in Section \ref{chap:discussion_sim}.

\subsection{The Sample Based Problem}
\label{chap:statapp} 

In this section we investigate the computational and mathematical foundations of our approach.
We show how to approximate credible regions in an asymptotic manner, examine the complexity and elaborate suitable algorithms.

We are given an arbitrary random set $\rand C\sub\{1,\ldots,n\}$.
Let $\distributionemp$ be the distribution of $\rand C$, i.e. $\distribution{A}=\prob{\rand C=A}$ for all $A\sub\{1,\ldots,n\}$. 
In this section, we are mainly concerned with the task of finding an element of \primeprob{\alpha, \rand C} (see Equation (\ref{eq:minprob})), where $\alpha$ can take any value in $[0,1]$.

Deriving $\prob{\rand C\sub A}$ is generally intractable since it arises from a summation of $2^{\#A}$ probabilities. 
Likewise, finding an element of \primeprob{\alpha,\rand C} requires a search over the $2^n$ subsets of $\{1,\dots,n\}$. 
Hence, we address this problem in an approximate manner by using relative frequencies instead of probabilities.
\begin{definition}
For $A\sub\{1,\dots,n\}$ and $s_1,\dots,s_m\sub\{1,\ldots,n\}$, let 
\begin{align*}
\rel{A,s_{1:m}}\defeq \frac{1}{m}\sum_{i=1}^m\ind{s_i\sub A}
\end{align*}
where $\ind{\dots}\in \{0,1\}$ is the indicator function, which is equal to $1$ iff the bracketed condition is true.
$\rel{A,s_{1:m}}$ corresponds to the relative frequency by which the subsets are covered by $A$.
\end{definition}
In the following theorem, we will show that having independent samples $s_1,\dots,s_m\sub\{1,\ldots,n\}$ from $\distributionemp$,
we can approximate an element of \primeprob{\alpha,\rand C} by finding an element of
\begin{align*}
\SampleProb{\alpha}{s_{1:m}}\defeq\argmin\limits_{A\sub \{1,\dots,n\}}\Big\{\cardi{A}\bmid \rel{A,s_{1:m}}\geq 1-\alpha\Big\}
\end{align*}
We denote this problem as the \emph{Sample Based Problem (\SBP)}.

\newcommand*{\solutiontolem}{A_\infty}
\begin{theorem}
\label{lem:convergence}
Let $\rand S_1,\rand S_2,\dots\sub\{1,\ldots,n\}$ be independent random sets distributed according to
\distributionemp and $\alpha\in[0,1]$. 
If there exists an $\solutiontolem\in\primeprob{\alpha,\rand C}$ with $\prob{\rand C\sub
\solutiontolem}>1-\alpha$, then $\SampleProb{\alpha}{S_{1:m}}\sub\primeprob{\alpha,\rand C}$ eventually almost surely.
\end{theorem}

\begin{proof}[Proof]
Let $A\sub\{1,\ldots,n\}$, since the $\ind{\rand S_i\sub A}, i=1,2,\dots$ are
i.i.d. with finite expectations, the strong law of large numbers states that
$\limi{m}\rel{A,\rand S_{1:m}}=\prob{\rand C\sub A}$ almost surely. 
Since $2^{\{1,\ldots,n\}}$ is finite, $\rel{A,\rand S_{1:m}}$ even converges to $\prob{\rand C\sub A}$ for all $A\sub\{1,\ldots,n\}$ almost surely.

Let $s_1,s_2,\ldots\sub\{1,\ldots,n\}$ be an arbitrary but fixed sequence with $\limi{m}\rel{A,s_{1:m}}=\prob{\rand C\sub A}$ for all $A\sub\{1,\ldots,n\}$.
For $A\sub\{1,\dots,n\}$ with $\prob{\rand C\sub A}< 1-\alpha$ we can pick an $m_A\in\mathbb{N}$ such that
$\rel{A,s_{1:m}}<1-\alpha$ for all $m>m_A$.
Hence,
$\SampleProb{\alpha}{s_{1:m}}\sub\{A\sub\{1,\dots,n\}\mid\prob{\rand C\sub A}\geq 1-\alpha\}$
holds for all $m>m_1\defeq\max\{m_A\mid \prob{\rand C\sub A}<1-\alpha\}$.
Furthermore, since $\prob{\rand C\sub \solutiontolem}>1-\alpha$, we can pick an $m_0\in\mathbb{N}$ with
$\rel{\solutiontolem,s_{1:m}}>1-\alpha\ $ for all $m>m_0$. 
Thus, by choosing $\ell=\max\{m_0,m_1\}$, we obtain $\SampleProb{\alpha}{s_{1:m}}\sub\primeprob{\alpha,\rand C}$ for all $m>\ell$.
\end{proof}

Since the exact value of $\alpha$ is usually irrelevant, the condition that there exists a certain solution can in most cases be neglected.

\subsubsection{Reformulation of the Sample Based Problem and its complexity}
\label{chap:impl_and_complexity}
In this section we consider the complexity of \SBP, formulate it as an \emph{Integer Linear Program} (ILP)
and introduce a fast and fairly accurate approximation by a greedy method. 
The next theorem shows that there is no hope
to find a polynomial time algorithm to solve \SBP unless $P= NP$ \citep{Garey_johnsons}.
The corresponding proof is provided in the appendix (Section \ref{chap:minkcov}). 
\begin{theorem}[Marc Hellmuth]
\label{theo:np}
The (decision version of) \SBP is NP-complete.
\end{theorem}

\SBP is NP-hard and thus, there is no polynomial-time algorithm to 
optimally solve a given \SBP-instance. Nevertheless, \SBP 
can be formulated as an ILP and thus, \SBP becomes accessible to highly efficient 
ILP solvers \citep{meindl2013analysis}. Such ILP solvers can be employed 
to optimally solve at least moderately-sized \SBP-instances. 
The ILP formulation is as follows. 
\begin{ILP_problem}[Marc Hellmuth]
\label{rem:simplesbp}
Following the notion of Section \ref{chap:statapp}, 
we are given a set of samples $s_1,\ldots,s_m\sub\{1,\ldots,n\}$ and an $\alpha\in[0,1]$.
We declare binary variables $U_i,F_j\in\{0,1\}$ for all $i\in\{1,\ldots,n\}$ and $j\in\{1,\ldots,m\}$.
For $i\in\{1,\ldots,n\}$ let $\vertcov(s_{1:m},i)\defeq\{j\mid i\in s_j\}$ denote the set of samples that contain $i$.
Now define the following constraints
\begin{align*}
&\text{\textbf{\emph{(I)}}}\quad\sum\limits_{j=1}^m F_j\geq m\cdot(1-\alpha)\\
&\text{\textbf{\emph{(II)}}}\quad\forall i\in\{1,\ldots,n\}:\ \sum\limits_{j\in \vertcov(s_{1:m},i)} (1-F_j)\geq  \cardi{\vertcov(s_{1:m},i)}\cdot (1-U_i)
\end{align*}
under which the objective function $\sum_{i=1}^n U_i$ needs to be minimized. 

Having computed an optimum, $A=\{i\mid U_i=1\}\in\SampleProb{\alpha}{s_{1:m}}$, i.e. this set is  a solution of \SBP.
\end{ILP_problem}

The binary variable $F_j$ represents the $j$-th sample. 
If  $F_j = 1$ then $s_j\sub A$.
Thus, the constraint \textbf{(I)} states that $A$ covers at least $m\cdot(1-\alpha)$ samples, which is equivalent to $\rel{A,s_{1:m}}\geq 1-\alpha$ in the definition of $\SampleProb{\alpha}{s_{1:m}}$.
The constraint \textbf{(II)} states that timepoint $i$ can only be dropped from $A$, if $i\in s_j $ implies that $F_j=0$.
From the perspective of \SBP these constraints are obviously necessary.
The proof of sufficiency is provided in the appendix (Section \ref{chap:ilpproof}).

A benchmark of several ILP solvers can be found in \cite{meindl2013analysis}.
Following this advice, we use CPLEX V12.6.3 for Linux x86-64 \cite{cplex} on a Lenovo Yoga 2 Pro (8GB Ram, 4 x 1.8GHZ Intel CPU) to solve our ILP instances.

\subsubsection{A greedy heuristic}
\label{chap:greedy}
To provide an alternative way to address \SBP, 
we now resort to a simple greedy heuristic.
This approach starts with the whole set of timepoints and 
greedily removes all timepoints iteratively. 
The greedy rule used here chooses a timepoint that is contained in the smallest number of samples.
In the subsequent steps, these samples will be ignored.

\begin{heuristic}
Let $A_0\defeq \{1,\ldots,n\}$. Compute $A_{\ell+1}=A_\ell\backslash\{k_{\ell+1}\}$ iteratively where
$k_{\ell+1}\in\argmax_{i\in A_\ell}\big\{\rel{A_\ell\backslash\{i\},s_{1:m}}\big\}$ until $A_{\ell+1}=\emptyset$.
Use $A_\ell$ as a solution proposal for \SBP for any $\alpha$ with $\rel{A_{\ell+1},s_{1:m}}<1-\alpha\leq \rel{A_\ell,s_{1:m}}$.
\end{heuristic}

\begin{algorithm}[hbt]
\begin{algorithmic}[1]
\footnotesize 
\State{\textbf{Input:} List $s[j]$ that represents the sample $j$, $1\leq j\leq m$}
\State{\hspace{1cm} Sorted lists $\mathcal{C}[i]$ that represents the samples that contain timepoint $i$, $1\leq i\leq n$}
\State{\textbf{Output:}} $A_1,\ldots,A_n$ \Comment{ $A_{\ell+1}=A_\ell\backslash\{k_{\ell+1}\}$ with $k_{\ell+1}\in\argmax_{i\in A_\ell}\big\{\rel{A_\ell\backslash\{i\},s_{1:m}}\big\}$ }
\State{$A_0=\{1,\ldots,n\}$}
\For{$\ell=0,\ldots,n-1$}\Comment{Discard all timepoints iteratively}
	\State{$i=\argmin\big\{\mathcal{C}[k]$.length$\ \mid\ k\in A_\ell\big\}$} \label{alg:ass1}\Comment{Complexity $\mathcal{O}(n)$}
	\State{$A_{\ell+1}=A_\ell\backslash\{i\}$}
	\For {$j=1,\ldots,\mathcal{C}[i]$.length and $k=1,\ldots,s[j]$.length}  \Comment{Remove all samples that contain $i$ from $\mathcal{C}$}
		\State{$\quad\mathcal{C}\Big[s[j][k]\Big]$.remove($j$)}\label{alg:rem1}\Comment{Removing has a complexity of $\mathcal{O}(\ln(m))$} 
	\EndFor
\EndFor
\end{algorithmic}
\caption{\greedy Algorithm for \SBP}\label{alg:greedy}
\end{algorithm}
 
Pseudocode \ref{alg:greedy} provides a possible implementation of \greedy.
The runtime of this algorithm is  $\mathcal{O}(n^2m\ln(m))$.
To see this, observe that the first for-loop runs $n$ times
and the assignment in Line \ref{alg:ass1} needs $\mathcal{O}(n)$ time. 
The second for-loop runs $\mathcal{O}(mn)$ times and the removal of
an element in Line \ref{alg:rem1} can be done in  $\mathcal{O}(\ln(m))$ time. 
Thus, the  overall time-complexity is 
$\mathcal{O}(n^2 + n^2m\ln(m)) = \mathcal{O}(n^2m\ln(m))$.

Although, the worst case runtime-complexity is more or less cubic, Algorithm \ref{alg:greedy} can be well applied in practice for the following reasons.
Assume that every sample has exactly $k$ elements uniformly distributed over $\{1,\ldots,n\}$.
In this case, the number of samples containing $i$ is binomially distributed with parameters $\frac{k}{n}$ and $m$.
Thus, their average number is $m\cdot \frac{k}{n}$.
Finally, the average complexity reduces to $\mathcal{O}(k^2m)$ if we use a hash table to store elements of the $\mathcal{C}[i]$'s, a sorted list to store their lengths and assume that $n\ln(n)<k^2m$.
In turn, if the elements of the samples are non-uniformly distributed and $k$ only represents their maximal cardinality the runtime will be shortened even further.

\greedy provides credible regions for all $\alpha$ values at once. 
However, in contrast to solutions to the ILP, the resulting credible regions will be nested w.r.t. increasing $\alpha$ values.

\subsection{Alternative approaches}
\label{chap:alt_approc}
In this section we explain two alternative Bayesian approaches that can be derived from $\rand C$ as well.
In the first approach we join highest density regions in order to infer credible regions. 
Highest density regions can be considered as a straightforward simultaneous tool to explore distributions of interest \citep{Held_sim_post}.
The second approach, marginal \cp probabilities, is a straightforward pointwise tool in the context of \cps.

Besides this, it should be noted that \cite{guedon2015segmentation} addresses uncertainty of \cp locations through the entropy of $\rand C$.

\subsubsection{Highest density regions}
\label{chap:hdr}
A highest density region (HDR) is a certain subset of a probability space with elements having a higher density value than elements outside of it. 
Such a subset can be utilized to characterize and visualize the support of the corresponding probability distribution \citep{hyndman_computing}. 
In a Bayesian context, $(1-\alpha)$-HDR's are often used as simultaneous $\alpha$ level credible regions \citep{Held_sim_post}. 
In this section, we examine HDR's in general and for the case of random subsets of $\{1,\ldots,n\}$.

Let $\rand X$ be a random variable with a density $\dens$. 
For $\alpha\in[0,1]$, let $q_\alpha$ be an $\alpha$-quantile of $p(\rand X)$, i.e.
$\prob{\dens(\rand X)\leq q_\alpha}\geq \alpha$ and $\prob{\dens(\rand X)\geq q_\alpha}\geq
1-\alpha$. 
\begin{definition} The set
$\{x\mid \dens(x)\geq q_\alpha\}$ is referred to as the \emph{$(1-\alpha)$-HDR} of $\rand X$.
\end{definition}

The $(1-\alpha)$-HDR is a smallest subset of the state space with a probability of at least $1-\alpha$ \citep{box73}.

Now we consider $\rand X=\rand C$.
Our credible regions are subsets of $\{1,\dots,n\}$, whereas the $(1-\alpha)$-HDR of $\rand C$ would be a subset of $2^{\{1,\ldots,n\}}$. 
For the purpose of comparison, we join all successes in the HDR:
\begin{definition} 
Let $s_1,\dots,s_\ell\sub \{1,\ldots,n\}$ be the
$(1-\alpha)$-HDR of $\rand C$. We refer to $\bigcup_{i=1}^\ell s_i$ as
the \emph{joined $(1-\alpha)$-HDR} of $\rand C$. 
\end{definition}
Let $A$ be the joined $(1-\alpha)$-HDR and $\distributionemp$ the distribution of $\rand C$. Then
\begin{align*}
\prob{\rand C\sub A}=\distributionemp\big(2^{A}\big)\geq \sum_{i=1}^\ell\distribution{s_i}\geq 1-\alpha
\end{align*}
Even though, $s_1,\dots,s_\ell$ have high probabilities, $2^{\cardi{A}}$ may be much larger than $\ell$. 
Likewise, $\prob{\rand C\sub A}$ might be considerably larger than $1-\alpha$. 
Thus, $A$ might be substantially larger than elements of $\primeprob{\alpha,\rand C}$. 
In Section \ref{chap:simcredchanpoi} we will see that this happens, especially for small
$\alpha$.

Unfortunately, in many cases we cannot compute HDR's directly and therefore, we use an approximation scheme \citep{Held_sim_post}.
Let now $s_1,\dots,s_m\sub\{1,\ldots,n\}$ be independent samples from the distribution of $\rand C$ that are arranged in descending order according to their values under $\distributionemp$, i.e. $i>j\Rightarrow \distributionemp(s_i)\leq \distributionemp(s_j)$.
We use $\bigcup_{i=1}^\ell s_i$ with $\ell\defeq {\lceil m\cdot(1-\alpha)\rceil}$ as an approximate joined $(1-\alpha)$-HDR.

\subsubsection{Marginal \cp probabilities}
\label{chap:pointwise_sim}
Here we consider they marginal \cp probabilities, i.e. $\prob{i\in \rand C}$ for $i=1,\ldots,n$. 
They can be used to derive subsets of $\{1,\ldots,n\}$ that are closely related to credible regions.

\begin{lemma}
$\{i\mid \prob{i\in \rand C}> \alpha\}$ is a subset of all elements of $\primeprob{\alpha,\rand C}$.
\end{lemma}
\begin{proof}[Proof]
Since $\prob{i\in \rand C}> \alpha$ implies $\prob{\rand C\sub \{1,\ldots,n\}\setminus i}< 1-\alpha$, $i$ has to be part of any $\alpha$ level credible region.
\end{proof}
The Bonferroni correction \citep{dunnett55} can be applied to construct a credible region. 
Let
$\supmarginal{\alpha, \rand C}\defeq\Big\{i\bmid \prob{i\in \rand C}>\frac{\alpha}{n}\Big\}$.
\begin{lemma}
$\prob{\rand C\sub\supmarginal{\alpha, \rand C}}\geq 1-\alpha$
\end{lemma}
\begin{proof}[Proof]
Let $A\defeq\supmarginal{\alpha, \rand C}^c$. 
We conclude that $1-\prob{\rand C\sub A^c}=\prob{\rand C\cap A\not=\emptyset }\leq \sum_{i\in A} \prob{i\in \rand C}\leq \alpha$. 
Thus, $\prob{\rand C\sub\supmarginal{\alpha, \rand C}}=\prob{\rand C\sub A^c}\geq 1-\alpha$ applies.
\end{proof}

Pointwise statements suffer from their inability to reflect dependencies. 
To see this, we assume that $\prob{\#\rand C=1}=1$ and  $\prob{i\in \rand C}=\frac{1}{n}$ for all $i=1,\ldots,n$. 
Single timepoints are strongly dependent, e.g. $i\in \rand C$ implies $j\not\in \rand C$ for all $j\neq i$.
Clearly, $\supmarginal{\alpha, \rand C}=\{1,\ldots,n\}$ for all $\alpha<1$.
Moreover, $\{i\mid \prob{i\in \rand C}> \alpha\}=\emptyset$ for $\alpha\geq \frac{1}{n}$.
In contrast to this, elements of $\primeprob{\alpha, \rand C}$ become smaller if $\alpha$ becomes larger. More precisely, $\#A=\lceil (1-\alpha) n\rceil$ for $A\in\primeprob{\alpha,\rand C}$. 
This is due to the fact that pointwise statements ignore dependencies, whereas simultaneous statements incorporate them. 
Hence, in practice $\supmarginal{\alpha, \rand C}$ may be much broader than elements of $\primeprob{\alpha, \rand C}$ and $\{i\mid \prob{i\in \rand C}> \alpha\}$ may be much smaller. 
In Section \ref{chap:simcredchanpoi} we will give an example that underpins this claim.

\subsection{Multiple changepoints and credible regions in more detail}
\label{chap:simcredchanpoi}

In this section we discuss sampling strategies in \cp models and we consider the concept of importance in a more formal way.
Afterwards we reconsider the aforementioned example in order to compare existing approaches with the credible region approach.
Finally, we investigate the performance of \greedy and the convergence speed of \SBP empirically.

\subsubsection{Importance revisited}
\label{chap:importance}

Characterizing the features in the data that should or should not trigger a \cp enables the formulation of \cp problems in the first place.
An intrinsic property of virtually every \cp problem is that the location of the \cp, placed as a consequence of a feature, is uncertain.
Hence, a feature in the data is usually related to a set of locations instead of a single location.
In order to deal with this fact, we require a simultaneous approach that considers all \cp locations that belong to the same feature in a coherent fashion.

To this end, as intimated already, importance provides an estimate of the sensitivity of \cp models towards features in \cp data.
However, it remains to substantiate this claim mathematically.
We start with our definition of sensitivity.
The sensitivity of a \cp model towards a feature in the data is defined as the probability by which posterior random \cps occur as a consequence of this feature.
Thus, if $A\sub\{1,\ldots,n\}$ represents the set of timepoints which are related to the feature, the sensitivity equates to $\prob{\rand C\not\sub A^c}$.
In general, determining the set $A$ is a fuzzy task. 
However, since \cp data usually exhibits the relevant features only occasionally, in most cases we can easily specify these sets to a sufficient extent.

In the introduction the importance of a feature is roughly defined as the height of the peak that relates to this feature.
This means we derive importance from a given range of credible regions.
Assume that we are given a whole range of credible regions $R_\alpha\in\primeprob{\alpha,\rand C}$ for all $\alpha\in[0,1]$.
We now define the importance of a feature as $\hat\alpha=\inf\{\alpha\mid R_\alpha\sub A^c\}$, where $A$ is again the set of locations that belongs to the feature.
With this the following lemma applies.
\begin{lemma}
\label{lem:exclude}
The importance of a feature is an upper bound for the sensitivity of the \cp model towards this feature.
\end{lemma}
\begin{proof}
If there exists an $\epsilon>0$ with $R_{\alpha}\sub A^c$ for all $\alpha\in]\hat\alpha, \hat\alpha+\epsilon[$, we conclude
\begin{align*}
&1-\alpha\leq\prob{\rand C\sub R_{\alpha}}\leq \prob{\rand C\sub A^c}\text{ for all } \alpha\in]\hat\alpha, \hat\alpha+\epsilon[\\
&\Leftrightarrow\quad 1-\hat\alpha\leq \prob{\rand C\sub A^c}
\quad\Leftrightarrow\quad\hat\alpha\geq\prob{\rand C\not\sub A^c}
\end{align*}
In the case where $R_{\hat\alpha}\sub A^c$, we may apply the second part of the above equation directly.
\end{proof}

Since the credible regions are not necessarily nested, in some rare cases, the importance read by means of the above definition may not correspond to a global peak in the credible regions.
Furthermore, if credible regions are only drawn for an incomplete range of $\alpha$'s, we may slightly underestimate the heights of the peaks.
In exchange for these negligible inaccuracies, we obtain an intuitive, visual estimate that provides great insights into the distribution of changepoints and is thus of high practical value.

\begin{figure}[h!t]
\includegraphics[width=\linewidth]{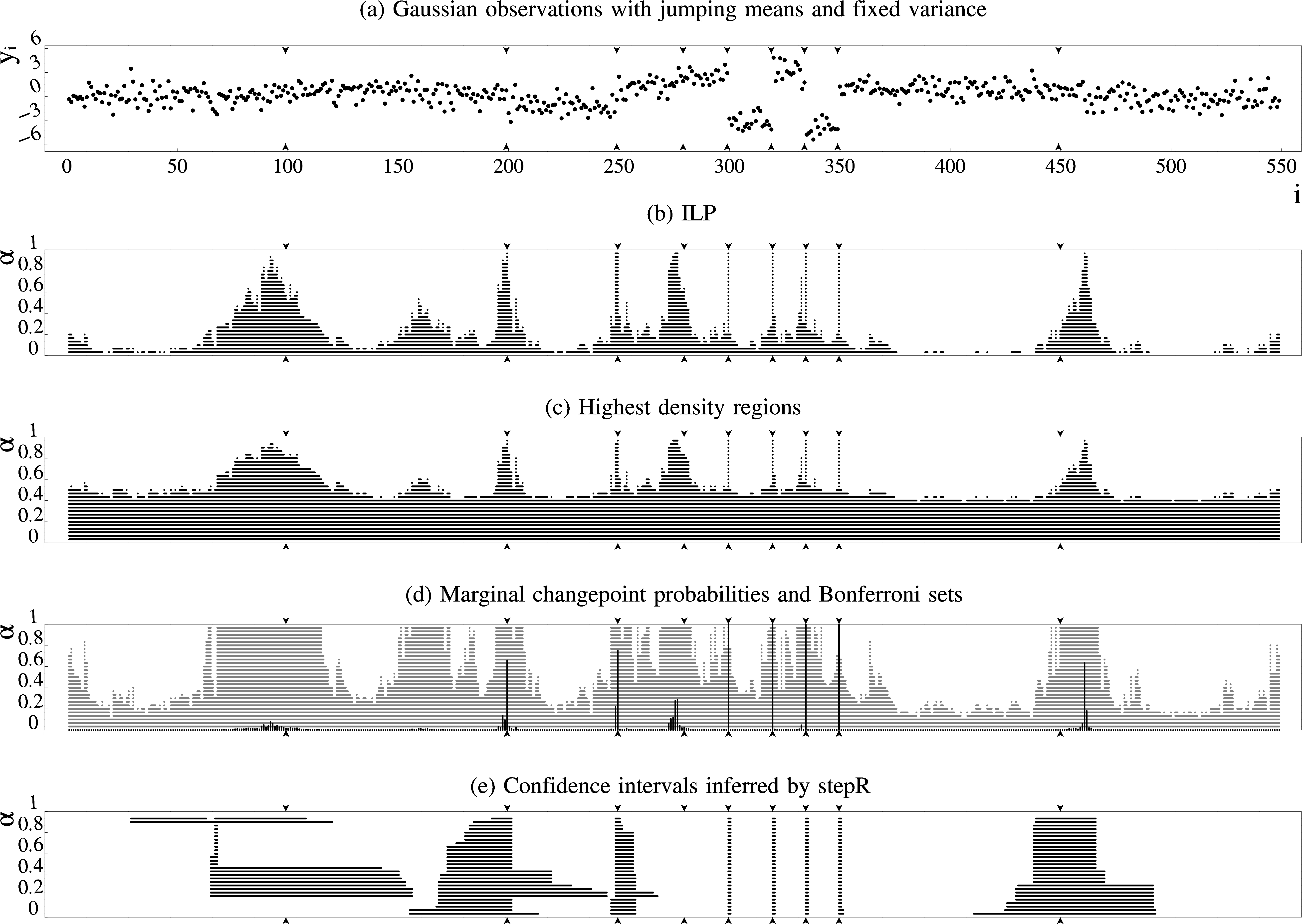}
\caption[Gaussian data and credible regions from ILP, HDR, Bonferroni sets, \stepR]{(a) Simulated independent Gaussian observations with variance 1 and successive changes in mean. 
The true \cps are marked by vertical arrows.
(b) Solutions derived by solving the ILP using $10^5$ samples.
(c) Joined $(1-\alpha)$-HDR's using $10^7$ samples. 
(d) Marginal \cp probabilities in black and $\supmarginal{\alpha, \rand C}$ in gray.
(e) Joined confidence intervals inferred by \stepR.
}
\label{fig:data}
\end{figure}

\subsubsection{Exemplary comparison}
\label{chap:gaus_change_in_mean}

Figure \ref{fig:data}(a) displays the same dataset as Figure \ref{fig:introduction}(a).
There are 6 obvious true \cps at 200, 250, 300, 320, 335 and 350 having a large jump height and two true \cps at 280 and 450 with smaller jump heights of 1 and 0.8, respectively. 
Besides, there is a true \cp at 100 with an even  smaller jump from 0 to 0.5.

We now want to compare the credible regions approach with the highest density regions approach, marginal jump probabilities and confidence intervals inferred by the R package \stepR.
We use the same model as in the introduction.
Figure \ref{fig:data}(b) displays the credible regions with respect to $\alpha=\frac{1}{30},\ldots, \frac{29}{30}$ inferred by solving the ILP.

In the same fashion as before, Figure \ref{fig:data}(c) displays several approximated joined HDR's derived from $10^7$ samples. 
This confirms that joining the elements of an HDR leads to larger subsets of $\{1,\dots,n\}$ compared to elements of \primeprob{\alpha,\rand C}. 
At an $\alpha$ level less than 0.4, the joined HDR already covers most of the timepoints and thus, holds no special information about the \cp locations anymore.

Figure \ref{fig:data}(d) displays the marginal \cp probabilities, i.e. $\prob{i\in \rand C}$ highlighted in black. 
Unfortunately, they don't reflect the possible set of \cps very well since they appear to be too sparse. 
Furthermore, due to a dispersal of the probabilities they are not able to express the sensitivity towards the true \cps (see, for example, the \cps at 100 and 280).
This leads to the conclusion that this approach suffers from a lack of interpretability and is less sensitive than the credible regions approach.

Additionally, Figure \ref{fig:data}(d) shows several credible regions corresponding to $\supmarginal{\alpha, \rand C}$.
As prognosticated in Section \ref{chap:pointwise_sim} these regions are very broad compared to smallest credible regions and thus, less specific.

Figure \ref{fig:data}(e) displays several joined confidence intervals inferred by \stepR \citep{stepR}. 
\stepR first estimates the number of \cps and produces one confidence interval for each \cp.
The plot shows the union of these confidence intervals.
Although confidence sets and credible regions are different by definition, they intend to make similar statements. 

Unfortunately, \stepR does not forecast a confidence interval for the true \cp at 280.
Furthermore, the disappearance of certain \cp locations at decreasing $\alpha$ values seems somewhat confusing and hard to make sense of.
So, the shapes, broadnesses and importances read from the confidence sets are if at all only of little informative value.
Therefore, the authors of this work consider the confidence sets inferred by \stepR as unsuitable for this purpose, because they suffer from a grave lack of interpretability.

The supplement contains a collection of pictures similar to Figure \ref{fig:data}.
There, the data was repeatedly generated using the same \cp locations and mean values.
It becomes apparent that these kind of plots, produced with \stepR, are almost consistently of poor quality.

In Section \ref{chap:importance} we show that the importance of a feature in the data is an upper bound for the sensitivity of the model towards this feature. 
We can examine the differences between these two estimates empirically by means of the above example.
Sensitivity and importance match for all the true \cps except the first one.
The importance of the first \cp is approximately 0.94, whereas its sensitivity with regard to the interval from 0 to 130 is approximately 0.72.
That's a deviation of around 0.2.
The small irregularity at around 170 results in an even bigger deviation of around 0.3.

\begin{figure}[ht]
\begin{center}
\includegraphics[width=\linewidth]{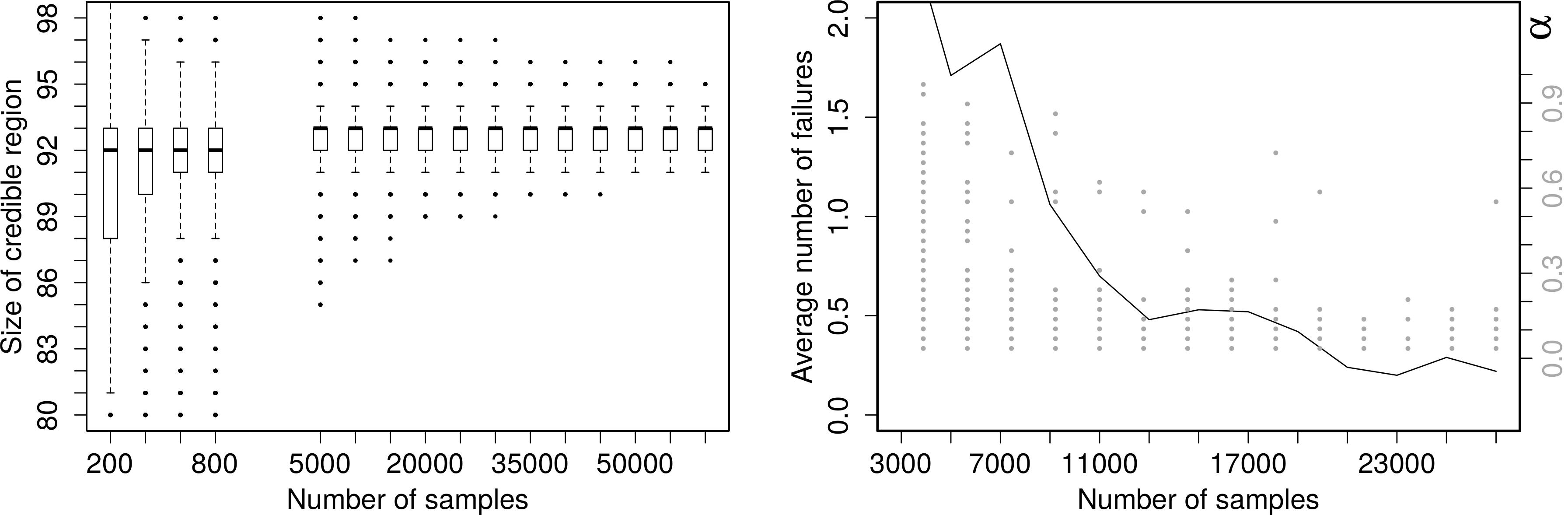} 
\caption[\SBP convergence and accuracy of \greedy]{(left) An empirical convergence proof of \SBP. Boxplots over the sizes of the elements of $\SampleProb{0.3}{s_{1:m}}$ where $m$ is varying according to the x-axis.
(right) Illustration of the accuracy of solutions inferred by \greedy. 
The black graph illustrates the average number of failures within 29 different credible regions (left y-axis). 
The gray dots mark the $\alpha$ values (right y-axis) where failures occurred.}
\label{fig:boxplots}
\end{center}
\end{figure}

\subsubsection{Empirical proof of convergence and the accuracy of \greedy}
\label{chap:conv_and_accuracy}

Figure \ref{fig:boxplots} (left) demonstrates how solutions to \SBP evolve with increasing sample size.
For several sample sizes $m$ it shows boxplots (using 1000 repetitions) over the size of elements of $\SampleProb{0.3}{s_{1:m}}$.
It can be observed that the sizes of the credible regions increase with an increasing sample size. 
This is due to the fact that we need a certain number of samples to cover the possible \cp locations satisfactorily.
However, at the same time, we observe an increasing preciseness, which is a result of Theorem \ref{lem:convergence}.

For different sample sizes we compared for each $\alpha\in\Big\{\frac{1}{30}, \ldots, \frac{29}{30}\Big\}$ the solutions provided by solving the ILP with those provided by \greedy.
In Figure \ref{fig:boxplots} (right), the black graph illustrates the average number of $\alpha$'s (using 100 repetitions) where \greedy failed to provide an optimal solution.
At higher sample counts, there are less than 0.3 of the 29 credible regions wrong. 
The gray points represent the $\alpha$ values (right axis) where the ILP was able to compute smaller regions than \greedy.

Hence, \greedy performs virtually exact on this \cp problem.
However, at smaller sample counts it gets more frequently outwitted by random.
Fortunately, this shows that if \greedy does not compute an ideal region for a certain $\alpha$, it will compute correct ones later again.
This is due to the fact that credible regions for \cp locations are roughly nested (like the solutions provided by \greedy). 
Only in some cases it is possible through solving the ILP to remove certain timepoints a little bit earlier than \greedy. 
Furthermore, because \greedy removes samples having a high number of \cps fast, it works well together with \cp models, since they prefer to explain the data through a small number of \cps.
Thus, we can conjecture that \greedy will perform just as well in most \cp scenarios.

\subsection{An example of use for model selection}
\label{chap:gaussian_change_in_variance}
\begin{figure}[ht]
\includegraphics[width=\linewidth]{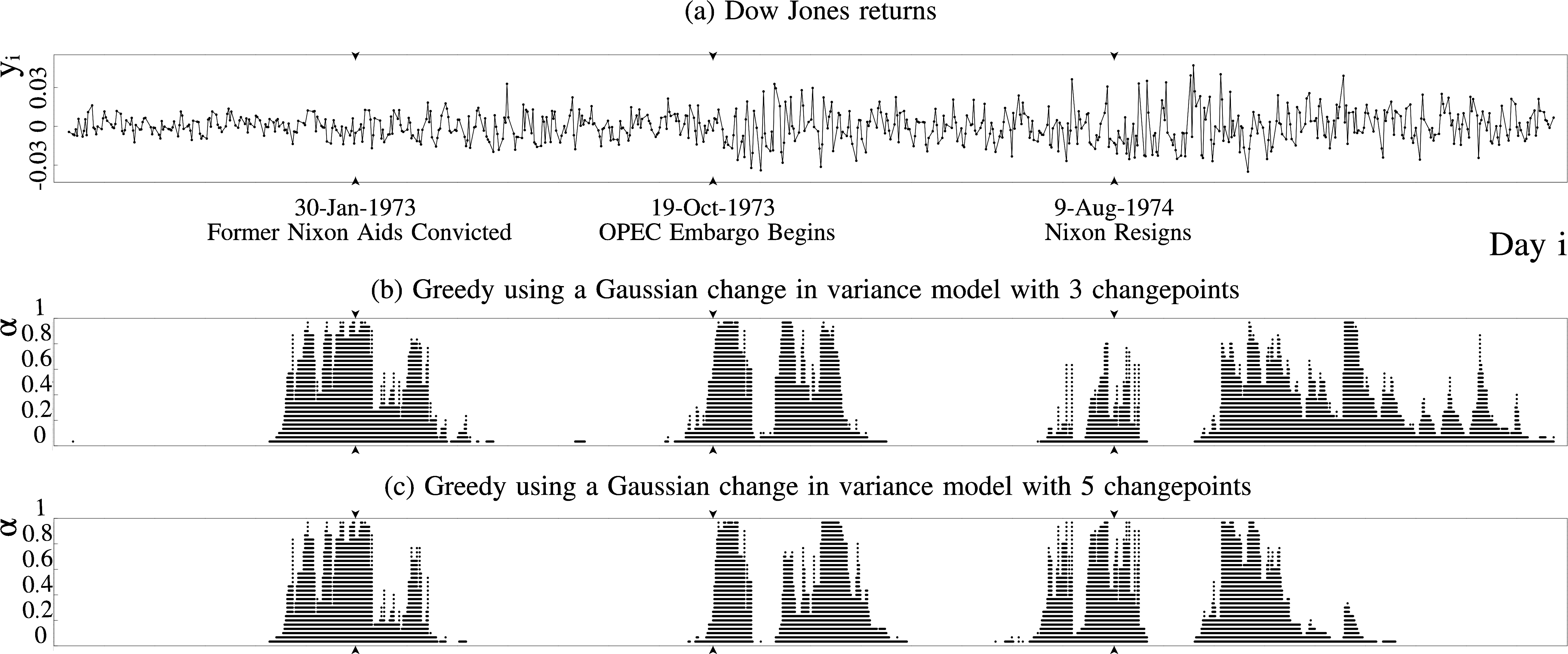}
\caption[Dow Jones data and credible regions]{(a) Dow Jones returns. (b) and (c) Different regions provided by \greedy.
}
\label{fig:dow_jones}
\end{figure}

Now we examine Dow Jones returns observed between 1972 and 1975 \citep{ocpd}, see Figure \ref{fig:dow_jones}(a). 
There are three documented events highlighted on January 1973, October 1973 and August 1974. 
The data is modeled as normally distributed with constant mean equal to 0 and jumping variances.
The variances are distributed according to an inverse gamma distribution with parameters 1 and $10^{-4}$ (see also \cite{ocpd}).
Instead of allowing a random number of \cps, this time we predetermine the number of \cps to three respectively five and our aim is to compare these two model choices by means of credible regions.

As we can see in Figure \ref{fig:dow_jones}(b), the regions become fairly broad especially in the last third of the picture giving rise to additional, nonsensical \cp locations.
The reason for this is that assuming only one \cp after the second event, yields to a misjudgment of the third or fourth variance. 
Thus, the third \cp becomes superfluous and its location highly uncertain.

In contrast, in the case of five \cps in (c) the illustration turns out to be much more differentiated.
The regions stay fairly narrow even for very small values of $\alpha$.
Just a quick look at the credible regions immediately reveals that fixing the number of \cps may impair the inference dramatically. 
Thus, whenever possible, we should use a model that allows the number of \cps to be determined at random.

\subsection{Simultaneous credible regions in the well-log example}
\label{chap:well-logsimcred}

\begin{figure}[ht]
\includegraphics[width=\linewidth]{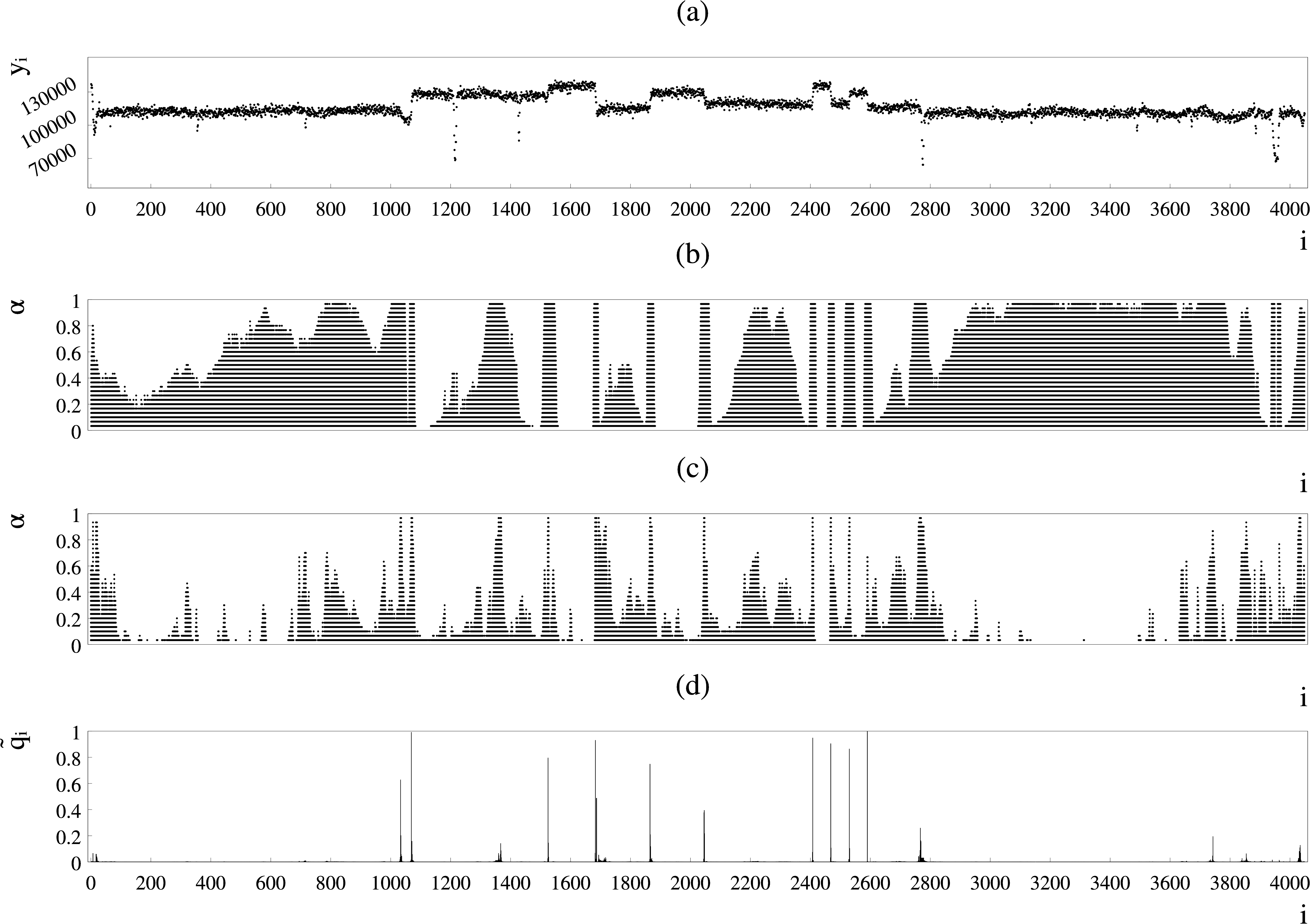}
\caption[Well-log data and credible regions]{(a) Well-Log data. (b) and (c) credible regions inferred by \greedy for the Laplacian change in median model and the model of \cite{exact_fearnhead}, respectively. (d) marginal \cp probabilities for the model of \cite{exact_fearnhead}.
}
\label{fig:well_log_sim}
\end{figure}

One last time, we consider the well-log example of Section \ref{chap:well_log_forward} and the \cp model developed there.
Figure \ref{fig:well_log_sim} (a) depicts the data.
The regions inferred by \greedy from $10^7$ samples are shown in (b).

The credible regions clearly reflect that the model was trained for the sake of \cp inference via MAP estimation and not uncertainty.
In Section \ref{chap:well-log-pointw} we have already seen that the expected number of \cps equates to 17.8.
Thus, within the samples we can find plenty of unnecessary \cps.
The plot shows that these nuisance \cps are mainly located in the first and last part of the data, but also around 1300 and 2200.

We may now try to reduce the uncertainty by increasing the segmental variance in order to obtain a model with a better fit.
However, several experiments have shown that an appropriate reduction of uncertainty gives less then twelve \cps when we use the MAP.

This suggests that a fundamentally different model should be employed in case the focus is more on uncertainty or authenticity.
\cite{exact_fearnhead} proposes a \cp model that employs a piecewise Kalman-filter marginalized over the variance of the latent process.
It is obviously more naturally adapted to the dip at around 3800 and the small shifts in mean.
However, the outliers are challenging since the Kalman-filter is based on the normal distribution.
Therefore, \cite{exact_fearnhead} eliminates the outliers in a preprocessing step, which leaves gaps in the data.

\ref{fig:well_log_sim} (c) shows the corresponding regions inferred by \greedy from $10^7$ samples and (d) the marginal \cp probabilities.
The uncertainty appears to be much lower here.
Especially on the first and last third of the data, there is only a very low tendency to place \cps.
However, at around 1300 there is a high importance to place a nuisance \cp.

In fact, 65\% of the samples exhibit a \cp between 1100 and 1400. 
Our \cp model, in turn, only places a \cp for 36\% of the samples there.
Conversely, our model places a \cp between 2900 and 3900 in 98\% of the samples, whereas the model of \cite{exact_fearnhead} in only 38\%.

As we have already acknowledged, the marginal \cp probabilities (Figure \ref{fig:well_log_sim} (a) and \ref{fig:laplace_marginals} (b)) yield a significant underestimation of sensitivity.
Hence, particularly in such difficult and rather experimental application scenarios, the user should refrain from using marginal \cp probabilities alone to verify \cp models.

Unfortunately, the importances expressed by credible regions tend to overestimate the sensitivity here.
This is because the nuisance \cps spread quite independently over larger sets of timepoints without explaining essential changes in segment height.
As a result, a considerable fraction of the samples contain a mix of the desired and also haphazard \cps.
This is challenging for our credible regions, since their computation is based on a sophisticated rejection of samples.

Even though overestimated, the importances still point to areas of substantial uncertainty, which indicates that both models are not perfectly adapted to the data and application.
To fix this, the model of \cite{exact_fearnhead} might just need an adjustment of the parameters in order to reduce the sensitivity towards the nuisance \cps at around 1300.
However, since \cite{exact_fearnhead} used marginal \cp probabilities in order to justify the parameter choices, this small imperfection remained unnoticed.


\todo{}

\subsection{Discussion}
\label{chap:discussion_sim}

In this section we develop a novel set estimator in the context of Bayesian \cp analysis. 
It enables a new visualization technique that provides very detailed insights into the distribution of \cps.
The resulting plots can be analyzed manually just by considering the concepts of
broadness to assess the uncertainty of the \cp locations in regards to a feature,
shape to explore the \cp locations the model is in favor of to explain a feature, and
importance to get an idea about the sensitivity of the model towards a feature.
This greatly facilitates the evaluation and the adjustment of \cp models on the basis of a given dataset.
However, by means of these three concepts, we are also able to conveniently analyze \cp datasets on the basis of a predetermined \cp model.

We say that a credible region points to a feature in the data if the region contains at least one of the \cp locations that are triggered by the model as a consequence of this feature.
An $\alpha$ level credible region will point to all features having a sensitivity larger than $\alpha$.

Against this backdrop, we may derive a single $\alpha$ level credible region from a given model in order to analyze the features in a dataset. 
Thus, we need to choose $\alpha$ in such a way that the corresponding credible region points to the features of interest, but skips the uninteresting ones.
A good \cp model will represent the degree of interest through its sensitivity.
On that basis, we can choose $\alpha$ such that the requirements regarding accuracy and parsimony are met.
\cite{frick_munk_sieling} recommend to choose $\alpha=0.4$ for their method.
This is a reasonable choice.
However, it becomes apparent in Section \ref{chap:gaus_change_in_mean} that the difference between sensitivity and importance can be around 0.3 in difficult cases.
Thus, we recommend to choose $\alpha=0.7$ instead.

An essential quality of the \cp model is its willingness to jump since it highly affects the model's  sensitivity towards features.
In the supplement you can find several video files illustrating this through the success probability for the distribution of the sojourn time between successive \cps.
In this context, if \cp samples are available but the necessary modifications of the model are not feasible anymore,
a sloppy but perhaps effective way to obtain a representative credible region is to increase or decrease $\alpha$ until a reasonable coarseness comes about.
However, systematic defects as in the model depicted in Figure \ref{fig:dow_jones}(b) cannot be tackled in this way.

Credible regions provide valuable knowledge about groups of \cp locations that explain single features jointly.
Besides, they reveal a little bit about combinations of \cps with respect to more than one feature.
The shapes of the credible regions shown in Figure \ref{fig:data}(b) and \ref{fig:dow_jones}(c) suggest that the data can be explained very well through combinations of nine respectively five \cps, which are limited to very few locations.
On the other hand, the medium importance of the feature around 170 in Figure \ref{fig:data}(b) shows that there is a notable alternate representation, which could be quite different to that of the true \cps.

Our theory so far is built on random sets that represent the posterior random \cps.
However, since we can specify a bijective function between subsets of $\{1,\ldots,n\}$ and binary sequences in $\{0,1\}^n$, all the theory in this section applies equally to sequences of binary random variables of equal length.
Thus, we may apply credible regions to ``Spike and Slab'' regression \citep{spike_slab_mitchel, george_mccullogh_gibbs, ishwaran2005} as well.
There, each covariate is assigned to a binary variable that determines if the covariate is relevant for explaining the responses.
However, covariates can generally not be described in terms of a time series and this could hinder a graphical analysis through credible regions.

Even though the construction of credible regions evoke acceptance regions from statistical testing, we do not intent to create a method in this direction.
As we can see in the example in Section \ref{chap:gaus_change_in_mean}, credible regions can be fairly broad for say $\alpha=0.05$.
While this would interfere with statistical testing purposes, it provides valuable insights into \cp models.

The ILP can be improved by introducing a constraint for each single \cp in the samples, instead of having constraints for larger sets of samples that share the same \cp location (Constraint II).
This is because for an ILP solver, many small constraints are easier to handle than a big one involving many variables.
However, due to high runtimes we do not recommend using the ILP.
While we are able to compute solutions to the Gaussian change in mean example, it is not possible to compute all 29 solutions to the Dow Jones example within a week.

To conclude, the authors of this work highly advice the use of \greedy's credible regions to evaluate and justify \cp models. 
To the same extent, we recommend to use them for the analysis of \cp data.
To this end, the R Package \textit{SimCredRegR} provides a fast implementation of \greedy and plotting routines that can be applied to \cp samples without further ado.

\newpage

\section{Final Remarks}
\label{chap:conclusion}

In this thesis, we elaborated intensively upon three very important topics that crop up in \cp research: sampling, inference and uncertainty.
While \cp samples derived from a bare likelihood open up one way to conduct \cp inference, the computation of certain quantities derived from a \cp model unbar another.
The user of a particular \cp model has to decide which of the two ways or if even a mix of both is used.
This thesis provides the right tools to make and implement these decisions.

Numerous existing and novel inference approaches that go deep into Bayesian \cp models and their characteristics are discussed.
Therewith, the tremendous complexity that arises from random segmentations applied to observed data are demonstrated.
This complexity is mainly owed by the simultaneous discrete and continuous nature of \cp problems.

The resulting mixed state spaces also pose the source of big confusions within the statistics community.
Hence, a sense of measure theory is required to gain a comprehensive understanding here.
However, with such a sophisticated notational framework at hand, until then seemingly difficult issues become readily manageable and straightforward again.

Nevertheless, runtime challenges like polynomial growth and NP-completeness limits the practical feasibility.
Consequently, heuristic approaches like pruning and greedy are of utter importance.
Their skillful use allows for the computation of virtually accurate solutions to otherwise unfeasible problems.

Throughout my time as a PhD student, I worked intensively on these problems.
To this end, I utilized state of the art computing technologies to develop and run R and C++ programs, and I applied sophisticated mathematical tools like measure theory.
With this thesis, I intend to convey my main findings and to help others in solving their own problems.

I finally reckon that spike and slab regression \citep{spike_slab_mitchel, ishwaran2005} is a promising related topic that may benefit considerably from my research.

\newpage
\section{Related own papers and significant contributions of other researchers to this thesis}
\label{chap:contributions}

The content of Section \ref{chap:mcmcfinite} is part of a paper that was accepted in \emph{The Mathematical Gazette} and will be available in the issue July 2020 (see \cite{siemsfinite} for a preprint).
Section \ref{chap:transdim} accounts for a paper that is currently available as a preprint \citep{siemstrans}.
Section \ref{chap:simcredreg} reflects the content of \cite{siems2019simultaneous}.

Professor Marc Hellmuth laid the foundations for Theorem \ref{theo:np} and came up with the corresponding proof idea.
He also pointed the way for the ILP of Section \ref{rem:simplesbp}.
Lisa Köppel indicated that Metropolis-within-Gibbs might be an expedient method to approach post-hoc translations (see Section \ref{chap:metgibs}).

\newpage
\appendix
\section{Supplementary material}
\label{chap:supplement}

The Sections \ref{chap:transdim} and \ref{chap:inference} are each accompanied by a C++/Qt project providing the relevant source code that was used to compute the presented results. 
Both projects contain a README file, which explain how to compile, run and browse through the programs on Ubuntu.

The R Package \textit{SimCredRegR} that comes with the supplementary material of Section \ref{chap:simcredreg}, provides all the applied datasets and sampling algorithms.
It further enables the computation of credible regions according to \greedy, joined highest density regions, marginal \cp probabilities, Bonferroni sets  and \stepR's joined confidence intervals.
Besides, you can find the R Package \textit{SimCredRegILPR} to compute credible regions according to the ILP.
This package can only be installed if IBM's ILP solver CPLEX \citep{cplex} is in place.
Several video (``.mp4'') files, which demonstrate how credible regions evolve at different parameter choices, can be found.
By means of different realizations of the data in Figure \ref{fig:data}, we also provide a comparison of our credible regions and \stepR's confidence intervals.
This can be found in the file ``collection\_of\_different\_data\_simulations.pdf''.

\section{Appendix}
\label{chap:appendix}

\subsection{How to draw a fixed number of \cps uniformly}
\label{chap:unicps}

Within a set of timepoints $1,\ldots,n$, we want to uniformly place exactly $k$ \cps, where $0< k\leq n$.
The following lemma shows how to achieve this in an iterative fashion by deciding for every timepoint separately if a \cp is placed or not.

\begin{lemma}
	By iterating through $i=1,\ldots,n$ and applying the following rule, we obtain $k$ \cps in an uniform manner.
	At timepoint $i$, with $\ell<k$ previous \cps, place a \cp with probability $\frac{k-\ell}{n-i+1}$ or stop if $\ell=k$.
\end{lemma}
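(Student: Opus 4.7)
The plan is to argue by induction on $n$ that, for every fixed $k$ with $0 < k \leq n$, each of the $\binom{n}{k}$ possible $k$-element subsets of $\{1,\ldots,n\}$ is produced by the procedure with probability $1/\binom{n}{k}$. Before starting the induction, I would briefly observe that the rule is well defined, since $k-\ell \leq n-i+1$ is preserved throughout (otherwise too few positions would remain to fit the missing \cps), and that eventually the placement probability becomes $1$, so the procedure terminates with exactly $k$ \cps almost surely.

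For the base case $n = k$, the rule at step $i$ gives placement probability $(k-(i-1))/(n-i+1) = 1$, so the configuration $\{1,\ldots,n\}$ is produced with probability $1 = 1/\binom{n}{n}$. For the inductive step I would condition on what happens at the first timepoint: with probability $(n-k)/n$ no \cp is placed at $i=1$, and with probability $k/n$ one is placed. I would then verify that, in both cases, the continuation of the procedure on the positions $\{2,\ldots,n\}$ is exactly the same rule applied to a horizon of length $n-1$, with $k$ or $k-1$ \cps still to be placed respectively, after the trivial re-indexing $i \mapsto i-1$. Concretely, substituting $n' = n-1$, $i' = i-1$, and $k' = k$ or $k' = k-1$ into $(k'-\ell')/(n'-i'+1)$ reproduces $(k-\ell)/(n-i+1)$ in either branch, which is the key algebraic check.

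Granting this, the inductive hypothesis says the continuation distributes each $k$-subset of $\{2,\ldots,n\}$ uniformly over $\binom{n-1}{k}$ options in the first branch, and each $(k-1)$-subset uniformly over $\binom{n-1}{k-1}$ options in the second. A subset $S \subseteq \{1,\ldots,n\}$ of size $k$ then has total probability
\begin{equation*}
\frac{n-k}{n}\cdot\frac{1}{\binom{n-1}{k}} = \frac{1}{\binom{n}{k}}\quad\text{if } 1\notin S,\qquad \frac{k}{n}\cdot\frac{1}{\binom{n-1}{k-1}} = \frac{1}{\binom{n}{k}}\quad\text{if } 1\in S,
\end{equation*}
by the standard identities $\binom{n-1}{k}(n-k) = n\binom{n-1}{k}(1-k/n)$ rewritten as $\binom{n}{k} = \frac{n}{n-k}\binom{n-1}{k}$ and $\binom{n}{k} = \frac{n}{k}\binom{n-1}{k-1}$, completing the induction.

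The only non-routine part is the re-indexing step: one must check that conditioning on the event at timepoint $1$ really yields an independent copy of the same rule on a shorter horizon, with $\ell$ in the original parameterization corresponding either to $\ell'$ (no \cp placed at $1$) or to $\ell'+1$ (one placed). Once this identification is carried out cleanly, the rest of the argument is an immediate induction together with two elementary binomial identities.
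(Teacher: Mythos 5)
Your induction is correct, but it takes a genuinely different route from the paper. The paper's proof never does induction: it introduces a Bernoulli$(1/2)$ counting process $\rand K_0,\ldots,\rand K_n$, observes that all increment patterns with the same number of jumps are equally likely (each path has probability $2^{-n}$), so that conditioning on $\rand K_n=k$ yields the uniform distribution on $k$-subsets, and then computes the one-step conditional transition probability $\prob{\rand K_i=\ell\mid \rand K_n=k,\rand K_{i-1}=\ell'}$ via the count $\binom{n-i}{k-\ell}$ of ways to place the remaining \cps, recovering exactly the stated rule $\frac{k-\ell}{n-i+1}$. That argument explains \emph{where} the formula comes from (it is the conditional law of an exchangeable Bernoulli process given its sum, i.e.\ sequential sampling without replacement) and avoids any re-indexing bookkeeping. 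Your approach is more elementary and self-contained --- it needs only the two identities $\binom{n}{k}=\frac{n}{n-k}\binom{n-1}{k}$ and $\binom{n}{k}=\frac{n}{k}\binom{n-1}{k-1}$ --- at the cost of the careful check that the continuation after timepoint $1$ is an instance of the same rule on horizon $n-1$, which you do carry out. Two minor points to tidy up: the branch $1\in S$ with $k=1$ leaves $k'=0$, which falls outside the stated hypothesis $0<k'\leq n'$, so you should note that the $k'=0$ case is trivially uniform (the procedure stops immediately and returns the empty set); and the branch $1\notin S$ with $k=n$ is vacuous since its probability $\frac{n-k}{n}$ is zero. Neither is a real gap.
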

\begin{proof}
	For $0\leq i\leq n$, let $\rand K_i\in\{0,\ldots,i\}$ be a random number with
	\begin{align*}
	\prob{\rand K_i=\ell\mid \rand K_{i-1}=\ell'}=\begin{cases}
	\frac{1}{2}&\ell=\ell' \\
	\frac{1}{2}&\ell=\ell'+1
	\end{cases}
	\end{align*}
	for $0<i\leq n$.
	If we look at the random sets $\{i\mid \rand K_i\neq \rand K_{i-1}\}$, we see that any two such sets with the same number of elements exhibit the same probability.
	Thus, by conditioning all $\rand K_i$ on $\rand K_n=k$, we obtain $k$ uniformly distributed timepoints.

	For $0\leq k-\ell\leq n-i$ and $0<i\leq n$, we get $\prob{\rand K_i=\ell, \rand K_n=k\mid \rand K_{i-1}=\ell'}\propto\binom{n-i}{k-\ell}$ 
	since there are $\binom{n-i}{k-\ell}$ options to pick the $k-\ell$ remaining \cps from $\{i+1,\ldots,n\}$.
	Therefore, $\prob{\rand K_i=\ell\mid \rand K_n=k, \rand K_{i-1}=\ell}=\frac{\binom{n-i}{k-\ell}}{\binom{n-i}{k-\ell}+\binom{n-i}{k-\ell-1}}=1-\frac{k-\ell}{n-i+1}$.
\end{proof}

\subsection{Hypergraphs and the complexity of the \SBP}
\label{chap:minkcov}
In this section we introduce the $k$ minimum edge union problem (\minkcov), which is shown to be equivalent to the \SBP.
We show the NP-completeness of the \minkcov. 
In order to establish the theory, we introduce hypergraphs and some of its basic notions. 
Indeed, there is a close connection between hypergraphs and the statistical theory established in this research. 
As we shall see, a hypergraph can be defined in terms of a family of subsets of a finite set and vice versa.
We further provide an Integer Linear Program (ILP) formulation to \minkcov that allows to compute exact solutions and we proof its correctness.

\subsubsection{Hypergraphs}

Here, we briefly discuss (multi-)hypergraphs and their structure
and refer to \cite{berge_hypergraphs,voloshin} for the interested reader. 
Before we start with the formal definitions, we recall that
multisets are a natural generalization of usual sets \citep{calude2001multiset}.
Whereas a usual set contains each element only once, a multiset can contain each
element arbitrary often. 
Therefore, a multiset over a set $A$ is defined in terms of a mapping from $A$ to $\mathbb{N}$ that assigns to each  $a \in A$ the number of occurrences of $a$ in the multiset.

\begin{definition}
	A \emph{hypergraph} $\hyperg$ is a pair $(V,h)$ where $V$ is a finite nonempty set and $h$ is a multiset over $\pows{V}$, i.e. $h:\pows{V}\rightarrow\mathbb{N}$.
\end{definition}

\begin{definition}
	We say that  $e\in \pows{V}$ is an \emph{edge} of the hypergraph $\hyperg=(V,h)$ iff  $h(e)>0$ and write $e\in \hyperg$.
	Moreover, the \emph{cardinality} $\cardi{\hyperg}$ of $\hyperg$ is given as the number of edges $\sum_{e\in\hyperg} h(e)$.
	The \emph{edge union} of a hypergraph $\hyperg$ is defined as the set $\eunion(\hyperg)\defeq\bigcup_{e\in\hyperg} e$. 
	\emph{Removing} an edge $e$ of $\hyperg$ is achieved by setting $h(e)$ to $\max\{0,h(e)-1\}$, 
	whereas \emph{completely removing} an edge $e$ from $\hyperg$ means to set $h(e) =0$.
\end{definition}

Therefore, a set of \cp samples $s_1,\ldots,s_m\sub\{1,\ldots,n\}$ represents a practical example of a hypergraph. 
There, the hypergraph is constructed through $V=\{1,\dots,n\}$ and $h(e)\defeq\cardi{\{i\mid e=s_i\}}$.
Since we can also build a hypergraph from a set of subsets of $\{1,\ldots,n\}$, hypergraphs and families of subsets of a finite set are equal.

\begin{definition}
	A hypergraph $\hypergg=( V, g)$ is a \emph{sub-hypergraph} of a 
	hypergraph $\hyperg=(V, h)$ iff $g(e)\leq h(e)$ for all $e\in\pows{ V}$ and we write \hypergg\sub\hyperg.
	Moreover, for a given vertex $x\in V$ and a hypergraph $\hyperg=( V, h)$ 
	we define the sub-hypergraph  $\vertcov(\hyperg,x)=( V, g)$ through 
	\begin{align*}
	g(e)=\begin{cases}      h(e)&,\text{if }x\in e\\
	0&, \text{otherwise}
	\end{cases}
	\end{align*}
\end{definition}
$\vertcov(\hyperg,x)$ represents the sub-hypergraph of $\hyperg$ that consists of all edges that contain vertex $x$. 


\subsubsection{Computational complexity of the \SBP}
\label{chap:proofnp}

For a given hypergraph $\hyperg$ and an integer k with $0\leq k\leq \#\hyperg$, we want to solve the problem of finding a sub-hypergraph $\hypergg\sub\hyperg$ that has at least $k$ edges but an edge union of minimum cardinality.
Equivalently, we want to find at least one element of the set
$\argmin\limits_{\hypergg\sub \hyperg}\Big\{\cardi{\eunion(\hypergg)} \bmid \cardi{\hypergg}\geq k\Big\}$
We refer to this task as the  \emph{$k$ minimum edge union (optimization) problem} (\minkcov).
\begin{myremark}
	\label{bem:eqsoltom}
	Provided that a solution $\hypergg=( V,g)$ to a \minkcov instance $(\hyperg, k)$ with $\hyperg=(V,h)$ is known, 
	one can easily determine further 
	solutions $\hypergg'=( V,g')$ by choosing for all edges $e\in \hyperg$ an 
	arbitrary integer $\ell_e$ with $g(e)\leq \ell_e\leq h(e)$ and setting 
	\begin{align*}
	g'(e)\defeq \begin{cases}\ell_e&,\ \text{if } g(e)>0\\0&,\ \text{otherwise}\end{cases}
	\end{align*}
	Clearly, $\cardi{\eunion(\hypergg)}=\cardi{\eunion(\hypergg')}$ and $\cardi{\hypergg'}\geq \cardi{\hypergg}\geq k$.
\end{myremark}

\begin{theorem}
	\label{theo:minksbpeq}
	The problems \minkcov and \SBP are equivalent. 
	\label{thm:equi}
\end{theorem}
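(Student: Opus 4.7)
The plan is to establish the equivalence by giving mutually inverse, polynomial-time, optimality-preserving translations between the two problems. The bridge is the bijection already observed in this section between multisets of samples $s_1,\ldots,s_m \subseteq \{1,\ldots,n\}$ and hypergraphs $\hyperg = (V,h)$ with $V = \{1,\ldots,n\}$ and $h(e) = \cardi{\{i \mid s_i = e\}}$.

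Under this correspondence I would check directly that the two objectives line up. Given a sub-hypergraph $\hypergg \subseteq \hyperg$ with $\cardi{\hypergg} \geq k$, the set $R := \eunion(\hypergg)$ contains every edge of $\hypergg$ and hence witnesses at least $k$ samples $s_i \subseteq R$, so $R$ is a feasible SBP solution of size $\cardi{\eunion(\hypergg)}$. Conversely, given an SBP solution $R$ covering at least $k$ samples, define $\hypergg' = (V, g')$ by $g'(e) := h(e)$ whenever $e \subseteq R$ and $g'(e) := 0$ otherwise; then $\cardi{\hypergg'} \geq k$ and $\eunion(\hypergg') \subseteq R$, giving a feasible \minkcov solution with $\cardi{\eunion(\hypergg')} \leq \cardi{R}$. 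Together these two translations force the optimal values of the two formulations to coincide and show that an optimum for either problem can be converted in linear time into an optimum for the other. The reverse reduction from \minkcov to SBP is the same correspondence read backwards: given $\hyperg = (V,h)$, form the sample collection by listing each edge $e$ a total of $h(e)$ times, which is polynomial in the input size and, by the argument above, preserves optimal values.

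The only step requiring mild attention is matching the ``at least $k$ samples'' condition of the SBP with the ``$\cardi{\hypergg} \geq k$'' condition of \minkcov when edges repeat; here the multiset viewpoint introduced earlier in the section carries the argument through, since each copy of an edge counts separately on both sides. Once the multiset bookkeeping is in place, the two problems are essentially one and the same stated in two languages, and I do not anticipate any deeper obstacle.
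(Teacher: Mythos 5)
Your proposal is correct and follows essentially the same route as the paper: the paper factors the identical two feasibility-preserving maps ($\hypergg\mapsto\eunion(\hypergg)$ and $A\mapsto\hypergg$ with $g(e)=h(e)$ for $e\sub A$) into Lemma \ref{lem:minksbp} and then wraps them in explicit instance translations. The only detail you gloss over is the conversion between the \SBP level $\alpha$ and the edge count $k$ (the paper uses $\alpha=1-\frac{k}{m}$ and $k=\lceil m\cdot(1-\alpha)\rceil$), which is routine bookkeeping.
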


Before proving this theorem, we need to consider the following lemma.
\begin{lemma}
	\label{lem:minksbp}
	Given a hypergraph $\hyperg=(\{1,\ldots,n\},h)$ and a sequence $s_1,\ldots,s_m\sub\{1,\ldots,n\}$ with $m=\#\hyperg$ and $h(e)=\#\{i\mid e=s_i\}$ for all $e\sub \{1,\ldots,n\}$, the following applies
	\begin{flalign}
	\label{eq:minksbp}
	\min\limits_{A\sub\{1,\ldots,n\}}\Big\{\#A\bmid \sum\limits_{i=1}^m\ind{s_i\sub A}\geq k\Big\}=\min\limits_{\hypergg\sub \hyperg}\Big\{\#\eunion(\hypergg)\bmid \#\hypergg\geq k\Big\}
	\end{flalign}
\end{lemma}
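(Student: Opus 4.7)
The plan is to prove equality of the two minima by showing that every feasible point of each optimization problem gives rise to a feasible point of the other with equal or better objective value. Because both problems are minima over nonempty feasible sets (taking $A=\{1,\ldots,n\}$ and $\hypergg=\hyperg$, respectively, satisfies the constraints), showing mutual domination of the optimal values suffices.

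For the direction $\geq$ in (\ref{eq:minksbp}), I would start from a feasible $A\sub\{1,\ldots,n\}$ on the left, i.e.\ one with $\sum_{i=1}^m\ind{s_i\sub A}\geq k$, and define a sub-hypergraph $\hypergg=(V,g)$ of $\hyperg$ by
\begin{align*}
g(e)\defeq\begin{cases} h(e)&\text{if }e\sub A,\\ 0&\text{otherwise.}\end{cases}
\end{align*}
Then $\hypergg\sub\hyperg$ by construction, and its cardinality is $\#\hypergg=\sum_{e\sub A}h(e)=\sum_{e\sub A}\#\{i\mid s_i=e\}=\#\{i\mid s_i\sub A\}\geq k$. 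Moreover every edge of $\hypergg$ is a subset of $A$, so $\eunion(\hypergg)\sub A$ and $\#\eunion(\hypergg)\leq \#A$. Hence the right-hand minimum is bounded above by the objective value of $A$.

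For the converse direction $\leq$, I would take a feasible sub-hypergraph $\hypergg=(V,g)$ with $\#\hypergg\geq k$ and simply set $A\defeq \eunion(\hypergg)$. Then $\#A=\#\eunion(\hypergg)$, and since every edge $e$ of $\hypergg$ satisfies $e\sub A$ and $g(e)\leq h(e)=\#\{i\mid s_i=e\}$, I get
\begin{align*}
\sum_{i=1}^m\ind{s_i\sub A}\ =\ \#\{i\mid s_i\sub A\}\ \geq\ \sum_{e\in\hypergg}\#\{i\mid s_i=e\}\ =\ \sum_{e\in\hypergg}h(e)\ \geq\ \#\hypergg\ \geq\ k,
\end{align*}
so $A$ is feasible on the left with objective value matching that of $\hypergg$ on the right.

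The only real subtlety, and the step I would take care with, is bookkeeping for multiplicities: the identity $\#\hypergg=\sum_{e\sub A}\#\{i\mid s_i=e\}=\#\{i\mid s_i\sub A\}$ in the first direction relies crucially on the assumption $h(e)=\#\{i\mid e=s_i\}$, and in the second direction one must use $g(e)\leq h(e)$ rather than equality to pass from the sum over $\hypergg$ to the count of indices $i$. Once these multiplicity arguments are made carefully, both inequalities combine to yield the claimed equality of minima.
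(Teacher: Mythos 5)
Your proof is correct and follows essentially the same route as the paper: for $\geq$ you restrict $\hyperg$ to the edges contained in $A$ (your $g$ coincides with the paper's $g(e)=\#\{i\mid e=s_i,\ s_i\sub A\}$), and for $\leq$ you take $A=\eunion(\hypergg)$ and count the indices $i$ whose $s_i$ lie in $A$. The only difference is that you spell out the multiplicity bookkeeping more explicitly than the paper does, which is harmless.
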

\begin{proof}[Proof of Lemma \ref{lem:minksbp}]
	For $A\sub\{1,\ldots,n\}$ with $\sum_{i=1}^m\ind{s_i\sub A}\geq k$, we construct the hypergraph $\hypergg=(\{1,\ldots,n\},g)$ with
	$g(e)=\#\{i\mid e=s_i, s_i\sub A\}$. 
	Since $\#\hypergg\geq k, \hypergg\sub\hyperg$ and $\#\eunion(\hypergg)\leq\#A$ we conclude that $\geq$ holds in Equation (\ref{eq:minksbp}).
	
	Given a hypergraph $\hypergg\sub\hyperg$ with $\#\hypergg\geq k$, we can chose an $I\sub\{1,\ldots,m\}$ with $\#I\geq k$ and $\bigcup_{i\in I}s_i\sub\eunion(\hypergg)$.
	Since $\sum_{i\in I}\ind{s_i\sub \eunion(\hypergg)}\geq k$ we can also conclude that $\leq$ holds in Equation (\ref{eq:minksbp}).
\end{proof}

\begin{proof}[Proof of Theorem \ref{theo:minksbpeq}]
	Let $(\hyperg,k)$ be a \minkcov instance with $\hyperg=(\{1,\ldots,n\},h)$.
	We construct an equivalent \SBP instance with $s_1,\ldots,s_m\sub\{1,\ldots,n\}$ and $\alpha\in[0,1]$ such that 
	a solution to this \SBP instance provides a solution to the \minkcov instance.
	Therefore, choose $m=\#\hyperg,\ \alpha=1-\frac{k}{m}$ and $s_1,\ldots,s_m$ so that $h(e)=\#\{i\mid e=s_i\}$.
	If $A$ is a solution to this \SBP instance, then 
	the hypergraph $\hypergg=(\{1,\ldots,n\},g)$ 
	with $g(e)=\#\{i\mid e=s_i, s_i\sub A\}$ is a solution to the \minkcov instance $(\hyperg,k)$.
	Indeed, since $\#\eunion(\hypergg)= \#A$ and $\#\hypergg\geq k$, Lemma \ref{lem:minksbp} implies 
	that $\hypergg$ is a solution to the \minkcov instance.
	
	Conversely, given an \SBP instance with $\alpha\in[0,1]$ and $s_1,\ldots,s_m\sub\{1,\ldots,n\}$, we construct an 
	equivalent \minkcov instance $(\hyperg,k)$
	such that a solution to this \minkcov instance provides a solution to the \SBP instance.
	Therefore, let $k=\lceil m\cdot(1-\alpha)\rceil$ and $\hyperg=(\{1,\ldots,n\},h)$ with $h(e)=\#\{i\mid e=s_i\}$. 
	If $\hypergg$ is a solution to this \minkcov instance, then $\eunion(\hypergg)$ is a solution to the \SBP instance.
	Indeed, since $\sum_{i=1}^m\ind{s_i\sub \eunion(\hypergg)}\geq m\cdot(1-\alpha)$, Lemma \ref{lem:minksbp} implies 
	that $\eunion(\hypergg)$ is a solution to the \SBP instance.
\end{proof}

We now show that (the decision version of) \minkcov (and hence of \SBP) is NP-complete \citep{Garey_johnsons}. 
Thus, there is no polynomial time algorithm to solve this problem, unless $P=NP$. 

The decision version of \minkcov is as follows:

\begin{problem}[{\bf Decision Version of \minkcov}] \ \\
	\setlength{\tabcolsep}{5pt}
	\begin{tabular}{ll}
		\textit{Input:} &  Hypergraph $\hyperg=( V,h)$ and $k,l\in\mathbb{N}$ with  $0\leq k\leq \cardi{\hyperg}$ and $0<l\leq \cardi{\eunion(\hyperg)}$. \\
		\textit{Question:} & Is there a $\hypergg=(V,g)\sub \hyperg$  
		such that $\cardi{\hypergg}\geq k$ and  $\cardi{\eunion(\hypergg)}\leq l$ ? 
	\end{tabular}
\end{problem}

In order to prove the NP-completeness of \minkcov, we use the well-known 
NP-complete \knaps-problem \citep{karp1972reducibility, Garey_johnsons}. 

\begin{problem}[{\bf \knaps}] \ \\
	\setlength{\tabcolsep}{5pt}
	\begin{tabular}{ll}
		\textit{Input:} &  A finite set $U$ and for each $u\in U$ a weight 
		$w(u)\in \mathbb{N}$, a value $v(u) \in \mathbb{N}$ and  \\
		&positive integers $a$ and $b$.  \\
		\textit{Query:} & Is there a subset $U'\sub U$ such that 
		$\sum_{u\in U'} w(u)\leq b$ and $\sum_{u\in U'} v(u)\geq a$?
	\end{tabular}
\end{problem}


\begin{theorem}
	\label{thm:NP-complete1}
	\minkcov is NP-complete.
\end{theorem}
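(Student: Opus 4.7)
The plan is to prove membership in NP and then NP-hardness by a polynomial-time Karp reduction from \knaps. Membership is routine: a certificate is a sub-hypergraph $\hypergg=(V,g)\sub\hyperg$ encoded by listing the multiplicities $g(e)$ in binary. Checking that $g(e)\leq h(e)$ for every edge of $\hyperg$, that $\cardi{\hypergg}\geq k$, and that $\cardi{\eunion(\hypergg)}\leq l$ is straightforward polynomial-time work.

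For the hardness step, given a \knaps instance $(U,w,v,a,b)$ I would build one ``gadget'' per element: pairwise disjoint vertex sets $V_u$ of cardinality $w(u)$, a single edge $e_u:=V_u$ of multiplicity $h(e_u):=v(u)$, and ground set $V=\bigsqcup_{u\in U}V_u$. The resulting \minkcov instance is $(\hyperg,k,l)$ with $k:=a$ and $l:=b$. Because the $V_u$ are disjoint, any subset $U'\sub U$ corresponds to the sub-hypergraph that takes each $e_u$ with $u\in U'$ at full multiplicity $v(u)$; then $\cardi{\eunion(\hypergg)}=\sum_{u\in U'}w(u)$ and $\cardi{\hypergg}=\sum_{u\in U'}v(u)$, so the two pairs of constraints coincide and the \knaps direction of the equivalence is immediate.

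The main work lies in the converse direction, where a \minkcov solution need not a priori have multiplicities in $\{0,v(u)\}$. I would invoke the observation that once $g(e_u)>0$ the whole set $V_u$ already contributes to $\eunion(\hypergg)$, so inflating $g(e_u)$ up to $v(u)$ strictly helps: it enlarges $\cardi{\hypergg}$ without changing $\cardi{\eunion(\hypergg)}$. This makes it without loss of generality to assume $g(e_u)\in\{0,v(u)\}$, and the choice $U':=\{u\mid g(e_u)>0\}$ then fulfils both \knaps constraints by the identities above. I expect the main obstacle to be packaging this inflation argument cleanly (Remark \ref{bem:eqsoltom} already contains exactly the right template) and being explicit about the encoding of the multihypergraph so that listing the $\cardi{U}$ distinct edges together with their binary-encoded multiplicities fits polynomially into the \knaps input size.
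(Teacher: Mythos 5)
Your reduction is the mirror image of the paper's and is combinatorially just as correct. The paper also reduces from \knaps using one edge per item with pairwise disjoint edges, but it lets $e_u$ consist of $v(u)$ vertices with multiplicity $h(e_u)=w(u)$ and sets $k=\cardi{\hyperg}-b$, $l=\cardi{\eunion(\hyperg)}-a$, so that the knapsack subset corresponds to the \emph{completely removed} edges; you let $e_u$ consist of $w(u)$ vertices with multiplicity $v(u)$ and take $k=a$, $l=b$, so that it corresponds to the \emph{kept} edges. Your variant buys a cleaner converse: you avoid the paper's argument that shrinking the edge union forces complete removal of edges, and in fact even your inflation step is unnecessary, since with $U'=\{u\mid g(e_u)>0\}$ one gets $\sum_{u\in U'}v(u)=\sum_{u\in U'}h(e_u)\geq\sum_{u\in U'}g(e_u)=\cardi{\hypergg}\geq a$ and, by disjointness, $\sum_{u\in U'}w(u)=\cardi{\eunion(\hypergg)}\leq b$ directly. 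The NP-membership part coincides with the paper's.

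The encoding obstacle you flag at the end is, however, genuine and cannot be removed by bookkeeping: the edge $e_u$ must be written out as a list of $w(u)$ vertices, so the constructed hypergraph has $\sum_{u\in U}w(u)$ vertices and the reduction is polynomial only when the weights are polynomially bounded in the \knaps input. But \knaps with polynomially bounded weights is solvable in polynomial time by the standard dynamic program over total weight, so in that regime the source problem is no longer NP-hard (unless $P=NP$). The paper's version has the symmetric defect with respect to the values $v(u)$, and indeed it only claims polynomiality ``in the number of elements in $U$ and the values $v(u)$''. So you have not introduced a gap that the paper avoids, but neither write-up yields a Karp reduction that is polynomial in the standard binary encoding of \knaps. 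To close it one should use overlapping edges so that no numerical parameter of the source instance ends up encoded in unary; for instance, reduce from \textsc{Clique}: given a graph $G$ and a target $c\geq 2$, let $\hyperg$ consist of the edges of $G$ with all multiplicities one and set $k=\binom{c}{2}$, $l=c$; a sub-hypergraph with at least $\binom{c}{2}$ edges whose union has at most $c$ vertices exists if and only if $G$ contains a $c$-clique, and this construction is polynomial in the size of $G$.
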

\begin{proof}[Proof]
	We begin with showing that \minkcov $\in  \mathtt{NP}$. 
	To this end, it suffices to demonstrate that a candidate solution to \minkcov
	can be verified in polynomial time.
	However, this is easy to see, since we only need to check whether 
	for a possible solution $\hypergg\sub \hyperg$ 
	it holds that $\cardi{\eunion(\hypergg)} \leq l$ and $\cardi{\hypergg} \geq k$. 
	Both tasks can be done in linear time in the number of edges of $\hypergg$.

	We proceed to show by reduction from \knaps that \minkcov is NP-hard. Thus, let
	us assume we are given an arbitrary instance of \knaps, that is, a finite set
	$U$, for each $u\in U$ the weight $w(u)\in \mathbb{N}$ and the value $v(u) \in
	\mathbb{N}$, as well as positive integers $b$ and $a$.  
	Now, we construct an instance of \minkcov as follows. 
	For each $u\in U$ we set an edge $e_u\defeq \{(u,1),\dots,(u,v(u))\}$ and $h(e_u)=w(u)$. 
	The vertex set of the hypergraph $\hyperg=(V,h)$ is then $V=\cup_{u\in U} e_u$. 
	Note, the edges in $\hyperg$ are pairwise disjoint. 
	Clearly, this reduction can be done in polynomial time in the number 
	of elements in $U$ and the values $v(u)$. 
	
	In what follows, 
	we show that \knaps has a solution for given integers $b, a$ if and only if
	\minkcov has a solution with $k=\cardi{\hyperg}-b$ and $l=\cardi{\eunion(\hyperg)}-a$.
	
	Let $U'=\{u_1,\dots,u_n\}\subseteq U$ such that 
	$\sum_{i=1}^n w(u_i)\leq b$ and $\sum_{i=1}^n v(u_i)\geq a$.
	Completely remove all corresponding edges $e_{u_i}$, $1\leq i\leq n$
	from $\hyperg$ to obtain the sub-hypergraph $\hypergg$. 
	Hence, $\cardi{\hypergg} = \cardi{\hyperg}-\sum_{i=1}^n w(u_i) \geq \cardi{\hyperg}-b = k$
	and $\cardi{\eunion(\hypergg)} = \cardi{\eunion(\hyperg)} - \sum_{i=1}^n v(u_i) \leq
	\cardi{\eunion(\hyperg)} - a = l$.
	
	Conversely, assume that $\hypergg = (V,g)$ is a valid solution for 
	the hypergraph $\hyperg=(V,h)$ (as constructed above) and given
	integers $k\geq \cardi{\hyperg}$ and $l\leq \cardi{\eunion(\hyperg)}$.
	Thus, we can write $k= \cardi{\hyperg}-b$ and $l= \cardi{\eunion(\hyperg)}-a$.
	Hence, $\cardi{\hypergg}\geq \cardi{\hyperg}-b$ and 
	$\cardi{\eunion(\hypergg)}\leq \cardi{\eunion(\hyperg)}-a$. 
	Therefore, at least $b$ edges must have been removed from $\hg$
	resulting in $\hgg$ where the edge union of $\hgg$ has at least
	$a$ fewer vertices than  $\eunion(\hyperg)$.
	Note, to obtain fewer vertices in $\eunion(\hyperg)$ one needs to 
	completely remove edges from $\hg$. 
	Let $E_0=\{e\in \hg \mid g(e)=0\}$ be 
	the set of all edges that have been completely removed from $\hg$.
	By construction, each edge $e_u$ is uniquely identified with an
	element $u\in U$. 
	We show that $U' = \{u\in U\mid e_u\in E_0\}$ provides a solution
	for \SBP.
	To this end, observe that 
	$\cardi{\hg}-\sum_{e\in E_0} h(e)\geq \cardi{\hgg}\geq \cardi{\hg}-b $, which implies that
	$\sum_{e\in E_0} h(e) = \sum_{u\in U'} w(u) \leq b$, as desired. 
	Moreover, by construction we have 
	$\cardi{\eunion(\hypergg)} = \cardi{\eunion(\hyperg)} - \sum_{e\in E_0} \cardi{e} = 
	\cardi{\eunion(\hyperg)} - \sum_{u\in U'} v(u) \leq  \cardi{\eunion(\hyperg)} - a$. 
	Thus, $\sum_{u\in U'} v(u)\geq a$, which completes the proof. 
\end{proof}

Combining Theorem \ref{theo:minksbpeq} and \ref{thm:NP-complete1} we obtain the following
\begin{corollary}
	The decision version of \SBP is an NP-complete problem. 
\end{corollary}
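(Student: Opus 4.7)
The plan is to assemble the corollary essentially as a direct consequence of the two results that precede it: Theorem \ref{theo:minksbpeq}, establishing equivalence of \minkcov and \SBP, and Theorem \ref{thm:NP-complete1}, establishing NP-completeness of the decision version of \minkcov. So there is no new combinatorial content to produce; the work is only to make the transfer of complexity precise at the decision-problem level, since the two preceding theorems are phrased in terms of optimization problems.

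First I would state the decision version of \SBP explicitly (input $s_1,\ldots,s_m\sub\{1,\ldots,n\}$, a threshold $\alpha\in[0,1]$, and an integer $l$; question: does there exist $A\sub\{1,\ldots,n\}$ with $\cardi{A}\leq l$ and $\sum_{i=1}^m\ind{s_i\sub A}\geq \lceil m(1-\alpha)\rceil$?). Membership in NP is then immediate: given a candidate $A$, one verifies $\cardi{A}\leq l$ and counts the samples $s_i$ contained in $A$ in time linear in the input size.

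For NP-hardness I would exhibit a polynomial-time reduction from the decision version of \minkcov to the decision version of \SBP, reusing the construction already carried out in the proof of Theorem \ref{theo:minksbpeq}. Given a \minkcov instance $(\hyperg,k,l)$ with $\hyperg=(\{1,\ldots,n\},h)$, enumerate the edges of $\hyperg$ with multiplicity to obtain $s_1,\ldots,s_m$ where $m=\cardi{\hyperg}$ and $h(e)=\cardi{\{i\mid s_i=e\}}$, and set $\alpha=1-\tfrac{k}{m}$. This is polynomial in the size of $\hyperg$. By Lemma \ref{lem:minksbp} (or equivalently the two inclusions in the proof of Theorem \ref{theo:minksbpeq}), the minimum of $\cardi{A}$ over $A$ satisfying the \SBP covering constraint equals the minimum of $\cardi{\eunion(\hypergg)}$ over $\hypergg\sub\hyperg$ with $\cardi{\hypergg}\geq k$, so a YES instance of \minkcov at threshold $l$ maps to a YES instance of \SBP at threshold $l$ and vice versa.

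The main obstacle, minor as it is, is bookkeeping around the ceiling: the \SBP definition uses $m(1-\alpha)$ whereas \minkcov uses an integer $k$, so I would choose $\alpha=1-k/m$ so that $\lceil m(1-\alpha)\rceil=k$ exactly, and verify that this $\alpha$ is representable in polynomial size in the input. With that, combining membership in NP with the reduction yields the claim, and I would close by citing Theorems \ref{theo:minksbpeq} and \ref{thm:NP-complete1} explicitly to make it clear that no new complexity-theoretic machinery is invoked beyond what has already been proved.
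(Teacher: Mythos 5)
Your proposal is correct and follows the same route as the paper, which simply combines Theorem \ref{theo:minksbpeq} and Theorem \ref{thm:NP-complete1} without further elaboration; you merely spell out the details that the paper leaves implicit (the explicit decision version of \SBP, NP membership, and the choice $\alpha=1-k/m$ to make the ceiling come out to exactly $k$, exactly as in the paper's proof of Theorem \ref{theo:minksbpeq}).
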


\subsection{Proof of correctness of the ILP}
\label{chap:ilpproof}

We introduce for a hypergraph $\hyperg=(V,h)$ and an integer $k$ the following binary variables $U_x, F_e\in \{0,1\}$:
\begin{description}
	\item[$U_x = 1$] if and only if vertex $x$ of $\hg$ is contained in 
	the edge union $\eunion(\hgg)$ of $\hgg\sub \hg$.  
	\item[$F_e = 1$]  if and only if the edge $e\in \hg$ is contained in $\hgg$. 
\end{description}
Moreover, the number of edges that contain a vertex $x$ is given by 
the constants $\contx \defeq \cardi{\vertcov(\hyperg,x)}$ for all $x\in V$.

To find a solution for the \minkcov problem, we need to minimize the number of vertices in the edge union of $\hgg=(V,g)\sub \hg$, which is achieved by minimizing the objective function 
\begin{align}
&\sum\limits_{x\in\eunion(\hyperg)}U_x
\label{eq:objF}
\end{align}
By Remark \ref{bem:eqsoltom} it is always possible to find a sub-hypergraph $\hgg$
with minimum edge union such that $g(e)=h(e)$ for all $e\in \hgg$. 
We will construct such a sub-hypergraph. 

To ensure that
$\cardi{\hgg}\geq k$ we add the constraint
\begin{align}
&\sum\limits_{e\in \hyperg}h(e)\cdot F_e\geq k
\label{eq:kedges}
\end{align}
Note, $e$ is contained in $\hgg$ if and only if $F_e=1$ and by
construction, $\cardi{\hgg}=\sum_{e\in \hyperg}(h(e)\cdot F_e)$. 
Thus, constraint \eqref{eq:kedges} is satisfied if and only if 
$\cardi{\hgg}\geq k$.

Finally, we have to ensure that $U_x=1$ if and only if there is an edge in $\hgg$ that 
contains $x$. To this end, we add for all $x\in\eunion(\hyperg)$ the constraint
\begin{align}
&\sum\limits_{e\in\vertcov(\hyperg,x)}h(e)\cdot (1- F_e)\geq \contx\cdot (1-U_x)
\label{eq:Ux}
\end{align}
Now, if there is no edge containing $x$ in $\hgg$ and thus, 
$F_e=0$ for all $e\in\vertcov(\hyperg,x)$,  then 
Constraint \eqref{eq:Ux} implies that
$\sum_{e\in\vertcov(\hyperg,x)}h(e)\geq \contx(1-U_x)$.
Since $\sum_{e\in\vertcov(\hyperg,x)}h(e) = \contx$, 
we have two choices for $U_x\in \{0,1\}$.
However, the optimization function ensures that
$U_x$ is set to $0$. 
Conversely, assume that there is an edge $e$ that contains $x$ and hence, $F_e=1$.
Thus, $\sum_{e\in\vertcov(\hyperg,x)}h(e)\cdot (1- F_e) < \contx$. 
The only way to satisfy Constraint \eqref{eq:Ux} is achieved by setting 
$U_x = 1$. 

Taken together the latter arguments we can infer the following result. 
\begin{theorem}
	The ILP formulation in Eqs. \eqref{eq:objF} - \eqref{eq:Ux} correctly
	solves the \minkcov problem.
\end{theorem}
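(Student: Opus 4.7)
The plan is to establish a value-preserving correspondence between feasible ILP assignments and feasible sub-hypergraphs of $\hg$ with at least $k$ edges, and then conclude that the minimum of the objective in Eq. \eqref{eq:objF} coincides with $\min_{\hgg\sub\hg}\{\cardi{\eunion(\hgg)}\mid \cardi{\hgg}\geq k\}$.

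First, I would go from the ILP to \minkcov. Given any feasible assignment $(F_e)_{e\in\hg}$ and $(U_x)_{x\in\eunion(\hg)}$, define a sub-hypergraph $\hgg=(V,g)$ by $g(e):=h(e)\cdot F_e$. Constraint \eqref{eq:kedges} immediately gives $\cardi{\hgg}=\sum_{e\in\hg}h(e)F_e\geq k$, so $\hgg$ is a feasible \minkcov solution. The key step is to bound its edge union: for any $x\in\eunion(\hgg)$ there is at least one $e\in\vertcov(\hg,x)$ with $F_e=1$, so $\sum_{e\in\vertcov(\hg,x)}h(e)(1-F_e)<\contx$, and Constraint \eqref{eq:Ux} then forces $U_x=1$. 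Consequently $\eunion(\hgg)\sub\{x\mid U_x=1\}$ and the ILP objective dominates $\cardi{\eunion(\hgg)}$.

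Second, I would go from \minkcov back to the ILP. Given a feasible $\hgg\sub\hg$ with $\cardi{\hgg}\geq k$, Remark \ref{bem:eqsoltom} lets me assume without loss of generality that $g(e)\in\{0,h(e)\}$ for every edge, without changing $\eunion(\hgg)$ or decreasing $\cardi{\hgg}$. Encode this by setting $F_e:=\indemp\{g(e)=h(e)\}$ and $U_x:=\indemp\{x\in\eunion(\hgg)\}$. Constraint \eqref{eq:kedges} holds because $\sum_e h(e)F_e=\cardi{\hgg}\geq k$. For Constraint \eqref{eq:Ux}, if $U_x=1$ it is trivially satisfied; if $U_x=0$ then no edge of $\hgg$ contains $x$, so every $e\in\vertcov(\hg,x)$ has $F_e=0$ and the LHS equals $\sum_{e\in\vertcov(\hg,x)}h(e)=\contx$, which matches the RHS. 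The ILP objective of this encoding is exactly $\cardi{\eunion(\hgg)}$.

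Combining both directions, every optimum of one side can be converted into a feasible point of the other with the same value, so the two minima are equal and any optimal ILP solution decodes via $g(e):=h(e)F_e$ to an optimal \minkcov solution. I expect the main subtlety to lie in Constraint \eqref{eq:Ux}: it enforces only the implication ``an active edge through $x$ forces $U_x=1$,'' while the reverse implication ($U_x=0$ when no edge covers $x$) relies on the minimization in Eq. \eqref{eq:objF}. I would handle this cleanly by carrying out the decoding step on optimal, rather than arbitrary, ILP solutions, which ensures that no superfluous $U_x$ is set to $1$. Together with the upper bound from the forward direction, this yields the claimed correctness.
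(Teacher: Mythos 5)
Your proof is correct and follows essentially the same route as the paper: both use Remark \ref{bem:eqsoltom} to restrict to sub-hypergraphs with $g(e)\in\{0,h(e)\}$, decode/encode via $g(e)=h(e)F_e$, and observe that Constraint \eqref{eq:Ux} forces $U_x=1$ exactly when an active edge covers $x$ while the objective handles the $U_x=0$ case. Your two-directional, value-preserving formulation is simply a slightly more formal packaging of the paper's constraint-by-constraint argument.
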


\newpage

\newpage
\addcontentsline{toc}{section}{Symbols}
\mbox{}
\nomenclature[A, 01]{$\mathbb{N}$}{Natural numbers including 0}
\nomenclature[A, 02]{$\mathbb{N}_+$}{Natural numbers excluding 0}
\nomenclature[A, 03]{$\mathbb{R}$}{Real numbers}
\nomenclature[A, 04]{$\mathbb{R}_{\geq 0}$}{Non-negative real numbers}
\nomenclature[A, 05]{$\mathbb{R}_{> 0}$}{Positive real numbers}
\nomenclature[A, 06]{$\#$}{Cardinality}

\nomenclature[B, 01]{$\probemp$}{The all-embracing probability measure}
\nomenclature[B, 02]{$\otimes$}{Product operator for either two measures or a measure and a Markov kernel}
\nomenclature[B, 03]{$\densslash$}{$\pi\densslash \mu$ is the density of $\pi$ w.r.t. $\mu$}
\nomenclature[B, 04]{$\delta_x$}{The Dirac or point measure around $x$}
\nomenclature[B, 05]{$\#$}{Counting measure}

\nomenclature[C, 01]{$\eqdist$}{$\rand X\eqdist\rand Y$ means that the random variables $\rand X$ and $\rand Y$ follow the same distribution}
\nomenclature[C, 02]{$\sim$}{$\rand X\sim\pi$ means that $\pi$ is the distribution of $\rand X$}
\nomenclature[C, 03]{$\indep$}{$\rand X\indep\rand Y$ means that the random variables $\rand X$ and $\rand Y$ are independent}
\nomenclature[C, 04]{$\mid$}{Opens a list of conditions}
\nomenclature[C, 05]{$\semic$}{Opens a list of (non-random) parameters}
\nomenclature[C, 06]{$\expecemp$}{Expectation}

\nomenclature[D, 01]{$\euleremp$}{Exponential function in standard form}
\nomenclature[D, 01]{$\ln$}{Natural logarithm}
\nomenclature[D, 02]{$\indemp$}{Indicator function}
\nomenclature[D, 03]{$\phi(x\semic \mu, \sigma^2)$}{Density of normal distribution with expectation $\mu$, variance $\sigma^2$ evaluated at $x$}

\nomenclature[E, 01]{$\mcal{O}$}{Big O notation}
\nomenclature[E, 02]{$\LargerCdot$}{Universal placeholder that represents all possible arguments}

\nomenclature[F, 01]{NB$(\ell\semic q,r)$}{Probability of $\ell$ failures under the negative binomial distribution with success probability $q$ and number of successes $r$}
\nomenclature[F, 02]{$\mathcal{N}(\mu, \sigma^2)$}{Normal distribution with expectation $\mu$ and variance $\sigma^2$}

\printnomenclature[2cm]
\stepcounter{section}
\addcontentsline{toc}{section}{Figures}
{\setlength{\parskip}{0ex}\setlength{\itemsep}{1ex}\listoffigures}
\stepcounter{section}
\stepcounter{section}
\addcontentsline{toc}{section}{Tables}
{\setlength{\parskip}{0ex}\setlength{\itemsep}{0ex}\listoftables}
\addcontentsline{toc}{section}{Pseudocodes}
\listofalgorithms
\section*{List of Acronyms}
\begin{acronym}
	\setlength{\parskip}{0ex}
	\setlength{\itemsep}{0ex}
	\acro{MAP   }{\quad Maximum A-Posteriori}
	\acro{EM    }{\quad Expectation Maximization}
	\acro{BIC   }{\quad Bayesian Information Criterion}
	\acro{AIC   }{\quad Akaike Information Criterion}
	\acro{MCMC  }{\quad Markov Chain Monte Carlo}
	\acro{SDT   }{\quad Semi Deterministic Translation}	
	\acro{SBP   }{\quad Sampling Based Problem}
	\acro{ILP   }{\quad Integer Linear Programming}
	\acro{HDR   }{\quad Highest Density Region}	
\end{acronym}


\renewcommand\refname{List of References}
\addcontentsline{toc}{section}{References}
\bibliographystyle{chicago}      
\footnotesize\bibliography{../bibtex}   
\end{document}